\documentclass[11pt,a4paper,pdftex]{scrartcl}
\usepackage[utf8]{inputenc}  
\usepackage[T1]{fontenc}
\usepackage[english]{babel}
\usepackage{amsmath}
\usepackage{amsthm}
\usepackage{amsfonts}
\usepackage{amssymb}
\usepackage[pdftex]{graphicx}
\usepackage{booktabs}
\usepackage{url}
\usepackage{color}
\usepackage{enumerate}
\usepackage{dsfont}
\usepackage{natbib}

\newcommand{\R}{\mathbb{R}}
\newcommand{\N}{\mathbb{N}}

\newcommand{\diff}{\operatorname{d}\!}

\newcommand{\C}{\mathcal{C}}
\newcommand{\Pc}{\mathcal{P}}
\newcommand{\M}{\mathcal{M}}

\newcommand{\scalprod}[2]{\left\langle #1, #2 \right\rangle}
\newcommand{\norm}[1]{\left\| #1 \right\|}
\newcommand{\betr}[1]{\left| #1 \right|}
\newcommand{\ind}{\mathds{1}}

\newcommand{\Con}[1]{\operatorname{Const}_{#1}}
\newcommand{\bN}{\mathbf{N}}

\newcommand{\tnzd}{t_{23}^{(k)}}
\newcommand{\tezd}{t_{23}^{(k+1)}}
\newcommand{\tnez}{t_{12}^{(k)}}
\newcommand{\tnezo}{t_{12}}

\newcommand{\texd}{t_{\text{ex}}}

\renewcommand{\L}{\mathcal{L}}
\newcommand{\Sd}{\mathcal{S}^{d-1}}

\ifdefined\hypersetup
\hypersetup{
	pdfkeywords={}, linkcolor=black, citecolor=black, filecolor=black, urlcolor=black,
}
\fi

\usepackage[hidelinks]{hyperref}

\usepackage[autostyle]{csquotes}

\renewcommand{\L}{\mathcal{L}}

\begin{document}
	
	\theoremstyle{plain}
	\newtheorem{lemma}{Lemma}[section]
	\newtheorem{corollary}[lemma]{Corollary}
	\newtheorem{proposition}[lemma]{Proposition}
	\newtheorem{definition}[lemma]{Definition}
	\theoremstyle{definition} 
	\newtheorem{example}[lemma]{Example}
	\newtheorem{remark}[lemma]{Remark}
	\newtheorem{theorem}{Theorem}
	\newtheorem*{theorem*}{Theorem}
	
	\title{A general improbability result for non-collision singularities as subsystems with at most four particles}
	\author{Manuel Quaschner\thanks{Institut für Mathematik, Friedrich Schiller Universität Jena, 07737 Jena, Germany}}
	\maketitle
	
	\begin{abstract}
			Non-collision singularities of the $n$-body problem are initial conditions for which no global solution exists and which, however, do not lead to a collision in the limit; i.e., the moment of inertia of the system diverges. 
		The question of whether the set of initial conditions leading to non-collision singularities in the $n$-body problem is improbable is open and is the first on Barry Simon's list of fifteen problems in mathematical physics \cite{simon1984fifteen} from the year 1984. So far, this question has only been answered positively in the case $n=4$. \\
		Using a final cluster decomposition into subsystems whose particles interact strongly with each other and exert only small forces on particles from different subsystems, we can prove the improbability of the set of all singular orbits with the following conditions:
		\begin{itemize}
			\item There may be arbitrarily many subsystems undergoing total collision at their center of mass.
			\item There may be subsystems exhibiting a non-collision singularity, i.e. the respective moment of inertia of the subsystem diverges. These subsystems consist of exactly four particles each and the asymptotic directions of these subsystems have to differ (which is only a mild condition).
		\end{itemize}
		This result includes all known statements and indicates a way how the general problem could be solved.
	\end{abstract}

	\section{Introduction}
	
	In his lectures in Stockholm from 1895, Painlevé conjectured the existence of so-called \enquote{non-collision singularities} in the $n$-body problem, i.e. solutions that become unbounded within finite time.
	In order to describe this problem and some of the currently known results about this question, we first fix some notation similar to that in \cite[Chapter 1]{quaschner2023non}.\\
	We formulate the $n$-body problem as a Hamiltonian system and define the energy function in terms of the particle positions $q$ and momenta $p$:
	\begin{align}
		H(q,p) := \sum\limits_{i=1}^{n}\frac{\norm{p_i}^2}{2 m_i} + V(q) \label{eq:Hamiltonfunktion}
	\end{align}
	with \textbf{potential}
	\begin{align*}
		V(q) := \sum_{1 \leq i < j \leq n} V_{i,j}(q_i-q_j).
	\end{align*}
	The \textbf{pair potentials} $V_{i,j}$ are assumed to be homogeneous of degree $-\alpha$ for some \linebreak $\alpha \in (0,2)$, i.e.
	\begin{align}
		V_{i,j}(x) = \frac{Z_{i,j}}{\norm{x}^{\alpha}}, \; x \in \R^{d} \setminus \{0\}, \label{def:PairPotentials}
	\end{align}
	for interaction constants $Z_{i,j} \in \R, \; i,j \in \{1,\ldots, n\} =: \bN$. The constants $m_i>0$, $i=1, \ldots, n,$ are the \textbf{masses} of the particles. 
	For the main result of this paper, we assume $Z_{i,j} := - m_i \cdot m_j$, although some of the results can be formulated for more general potentials (between particles of the same final cluster).
	\\
	We define the \textbf{position space}
	\begin{align*}
		M &:= \R^{nd} \setminus \Delta
		\intertext{where}
		\Delta &:= \{ (q_1, \ldots, q_n) \in \R^{nd} \mid \exists i, j \in \bN, \; i \neq j: \; q_i = q_j \}
	\end{align*}
	is the collision set.
	The \textbf{phase space} for the system is the cotangent bundle
	\begin{align*}
		P&:= T^{*}\left(\R^{nd} \setminus \Delta \right) \simeq \left(\R^{nd}_q \setminus \Delta\right) \times \R^{nd}_p,
	\end{align*}
	$P$ carries a natural symplectic form $\omega$ that can be expressed in coordinates as
	\begin{align*}
		\omega &= \sum_{i=1}^{n} \sum_{j=1}^{d} \diff q_{i,j} \wedge \diff p_{i,j},
	\end{align*}
	where we write $(q,p) \in P$ in the form
	\begin{align*}
		q&=(q_1, \ldots, q_n) \in \R^{nd}_q \simeq \left(\R^{d}\right)^{n}
		\intertext{with}
		q_i &= \left(q_{i,1}, \ldots, q_{i,d}\right) \in \R^{d} \text{ for } i \in \bN
	\end{align*}
	(and an analogous notation for the momenta in $p$).
	\\
	Using the symplectic form and the Hamilton function defined on $P$, we get a Hamiltonian vector field $X_{H}$ defined through:
	\begin{align*}
		\mathbf{i}_{X_H} \omega = \diff H.
	\end{align*}
	In coordinates we obtain the well-known Hamilton equations with the Euclidean gradient $\nabla$ on $\R^{d}$:
	\begin{align*}
		\dot{q}_i &= \frac{p_i}{m_i}, \\
		\dot{p}_i &= -\sum_{j \in \bN \setminus \{i\}} \nabla V_{i,j}(q_i-q_j) = \sum_{j \in \bN \setminus \{i\}} \alpha Z_{i,j} \frac{q_i-q_j}{\norm{q_i-q_j}^{2+\alpha}}.
	\end{align*}
	Our goal is to investigate the \textbf{flow} generated by the vector field $X_H$ that is denoted by $\Phi : D \rightarrow P$ and is defined on a maximal domain $D$ of the form
	\begin{align*}
		D := \left\{ (t,x) \in \R \times P \mid T^{-}(x) < t < T^{+}(x) \right\}
	\end{align*}
	where $T^{-}, T^{+} : P \rightarrow \R \cup \{\pm \infty\}$ are the \textbf{escape times} defining the maximal interval of existence for the solution of this differential equation with initial condition $x$, satisfying $T^{+}>0, \; T^{-}<0$.
	\begin{definition}[Singularities and non-collision singularities] \label{def:SingAndNCSing} \quad
		\begin{enumerate}[(i)]
			\item 	An initial condition $x=(q,p) \in P$ is called a \textbf{singularity}, if the escape time\footnote{We only consider the forward time direction and thus the upper escape time $T^{+}$, but one should note that the analogous results also hold for considerations in negative time direction up to $T^{-}$ due to reversibility of the system.} for this initial condition is finite, i.e. $T^{+}(x) < \infty$. We write
			\begin{align*}
				\operatorname{Sing} &:= \{ x \in P \mid T^{+}(x)< \infty \}
			\end{align*}
			for the set of singularities. Since all points on the same orbit have a finite escape time, we sometimes talk about \textbf{singular orbits} as well.
			\item An initial condition $x=(q,p) \in \operatorname{Sing}$ is called a \textbf{collision singularity}, if for the solution curve $(\tilde{q}(t), \tilde{p}(t)) := \Phi(t, x)$ we have
			\begin{align*}
				\lim\limits_{t \uparrow T^{+}(x)} \tilde{q}(t) \quad \text{exists in $\R^{nd}$.}
			\end{align*}
			\item An initial condition $x=(q,p) \in \operatorname{Sing}$ is called a \textbf{non-collision singularity}, if for the solution curve $(\tilde{q}(t), \tilde{p}(t)) := \Phi(t, x)$ we have
			\begin{align*}
				\lim\limits_{t \uparrow T^{+}(x)} \norm{\tilde{q}(t)} = \infty.
			\end{align*}
			We denote the set of non-collision singularities with $\operatorname{NColl}$. Note that by the theorem of von Zeipel (see Section~\ref{sec:ClassicalResults}) we have
			\begin{align*}
				\operatorname{Sing} &= \operatorname{NColl} \cup \operatorname{Coll},
			\end{align*}
			where $\operatorname{Coll}$ denotes the set of initial conditions leading to a collision.
		\end{enumerate}
	\end{definition}
	Painlevé already proved that for $n \leq 3$ there are no non-collision singularities (the argument will be adapted to potentials with general $\alpha \in (0,2)$ and arbitrary spatial dimensions $d \geq 2$, see Section~\ref{sec:ClassicalResults}). But he conjectured that there could be such orbits if there are four or more particles. The conjecture was finally proven to be true by Xia, who constructed in 1992 an example with five bodies in dimension $d=3$ using techniques from \cite{mcgehee1974triple, mather1975solutions}. For a qualitative description of this behavior of solutions consisting of three clusters with one messenger cluster oscillating faster and faster between the two outer clusters we refer to \cite{saari1995off}. By now there are some other constructed non-collision singularities, even with $d=2$ and $n=4$, compare \cite{xue2020non,gerver2022new,gerver1991existence}.
	\\
	Since the question of the existence of non-collision singularities has been clarified, another question becomes important: Which properties can one derive for the set of non-collision singularities? Is this an \enquote{exceptional} behavior and is this set in some sense \enquote{small}? One way to answer this question would be a statement about the Lebesgue measure of this set in phase space or, more precisely, if this set is \textbf{improbable}, i.e. contained in a set of Lebesgue measure zero. Results of this type were first derived by Saari in \cite{saari1977global} and with the help of the so-called \textbf{Poincaré surface method} in \cite{fleischer2018improbability,knauf2018asymptotic}. All of these results show that non-collision singularities are improbable in the case of $n=4$ bodies and for potentials as in \eqref{def:PairPotentials} with $\alpha \in (0, 2)$ (some of them allow even more general potentials).  The importance of the question in this case can be seen by noting that the problem of the improbability of non-collision orbits is even on the list of Barry Simon's fifteen problems in mathematical physics \cite{simon1984fifteen} from the year 1984. 
	\\
	A partial result for more than four bodies was derived in \cite{quaschner2025improbability}
	by splitting the system according to the final cluster decomposition into subclusters of particles that interact with each other directly or indirectly arbitrarily close to the escape time, compare Definition~\ref{def:TempAndFinalClusterDecomposition} for a formal definition.
	However, that paper only allowed one cluster to exhibit a non-collision singularity and some restriction on the number of colliding particles in a cluster was necessary. In this paper, we can handle arbitrarily many subsystems of four particles in the final cluster decomposition having a non-collision singularity, as long as a mild condition on the asymptotic direction\footnote{Compare Proposition~\ref{prop:ExistenceAsymptoticDirection}} is satisfied. Additionally, we can handle collisions with arbitrarily many particles, as we can use a way to control the energy exchange between the subsystems that was derived in \cite{sperling1972collision}, compare also \cite{duignan2026blowup}
	with a notation similar to this paper.
	\\
	Now we state the main theorem of this paper
	\begin{theorem}[Improbability result for singularities with only subsystems of four particles having non-collision singularities] \label{thm:ImpResSubsystems} \quad \\
		Consider the $n$-body problem with an arbitrary $n \in \N$ given through \eqref{eq:Hamiltonfunktion} with $Z_{i,j} = - m_i \cdot m_j$. Consider an arbitrary final cluster decomposition $\C = \{ U_1, \ldots, U_k, B_1, \ldots, B_l\}$ with $k,l \in \N_0$ and $\left|U_i\right|= 4$ for $i=1, \ldots, k$ into unbounded and bounded subsystems. Then the set
		\allowdisplaybreaks
		\begin{align*}
			\Bigg\{x \in \operatorname{Sing} &\mid \C_{\text{fin}}(x) = \C, \forall i \in \{1, \ldots, k\}: \limsup_{t \uparrow T^{+}(x)} J_{U_i}^{I}(t, x) = \infty, \\ & \qquad \forall j \in \{1, \ldots, l\}: \limsup_{t \uparrow T^{+}(x)} J_{B_j}^{I}(t, x) < \infty, \\
			& \qquad \forall i < j \in \{1, \ldots, k\}: AD_{U_i} \neq \pm AD_{U_j} \Bigg\}
		\end{align*}
		is improbable, i.e. contained in a set of measure zero. 
	\end{theorem}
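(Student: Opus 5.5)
We will use the Poincaré surface method of \cite{fleischer2018improbability,knauf2018asymptotic}, in the form adapted to final cluster decompositions in \cite{quaschner2025improbability}. The basic idea is as follows. Suppose a measurable family of hypersurfaces (Poincaré sections) is crossed by every singular orbit of the set in question infinitely often before $T^{+}(x)$. Suppose further that the flux of the Liouville measure through these surfaces, restricted to phase space points that will later be singular, is summable. Then Liouville's theorem and a Borel--Cantelli type argument show that the set is contained in a null set.

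\textbf{Step 1: reduction to a countable union.} Decompose the set of the theorem into countably many pieces on which the data are quantitatively controlled. The data to control are the escape time $T^{+}(x)\le T$, the final cluster decomposition $\C$, a time $t_0$ after which the clusters are separated, and a lower bound on the angles between the asymptotic directions $AD_{U_i}$ and $\pm AD_{U_j}$. It suffices to show that each piece is improbable.

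\textbf{Step 2: energy exchange between the subsystems.} On each piece, bound the forces between different clusters. Then use the energy-exchange control from \cite{sperling1972collision} (cf.\ \cite{duignan2026blowup}) to show two things:
\begin{itemize}
\item the internal energy of each bounded cluster $B_j$ converges (these clusters undergo total collision at their centre of mass);
\item each unbounded cluster $U_i$ behaves, up to integrable perturbations, like an isolated four-body system with a non-collision singularity.
\end{itemize}
The differing asymptotic directions are used here. Each $U_i$ expands along its own asymptotic direction $AD_{U_i}$ (Proposition~\ref{prop:ExistenceAsymptoticDirection}), so the clusters $U_i$ and $U_j$ separate in transversal directions. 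This keeps the mutual forces between unbounded clusters integrable, even though the clusters have unbounded diameter.

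\textbf{Step 3: construct Poincaré surfaces inside each $U_i$.} Use the four-body structure of Saari and Fleischer--Knauf: a non-collision singularity of four bodies requires infinitely many close binary encounters of a messenger particle, together with an exchange of energy. Take the surfaces to be level sets of a suitable cluster-internal function, such as a pair distance or the virial-type quantity $\frac{d}{dt}J^{I}_{U_i}$ restricted to a pair. Then:
\begin{itemize}
\item estimate the flux through the $m$-th crossing in terms of the shrinking close-approach distances;
\item show, using the growth $J^{I}_{U_i}\to\infty$ together with energy bounds, that these fluxes are summable.
\end{itemize}
Only a single cluster $U_1$ is needed to carry the surfaces, since the remaining clusters enter only as bounded perturbations. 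If $k=0$, the orbit is a collision singularity, and the set is improbable by the classical result of Saari on collisions, adapted as in \cite{quaschner2025improbability}.

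\textbf{Main obstacle.} I expect Step 2 to be hardest. The surface-flux estimates for four bodies are robust, but they assume an essentially closed subsystem. Here we must show that arbitrarily many colliding clusters and other expanding four-body clusters perturb $U_1$ only integrably near $T^{+}$.

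For colliding clusters, try Sperling's estimates first: the velocity of the centre of mass and the internal energy have limits, and the external force on $U_1$ is bounded by the inverse distance to the centre of mass raised to the power $\alpha+1$, which stays bounded.

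For other expanding clusters, the danger is that particles of $U_i$ approach particles of $U_j$. The condition $AD_{U_i}\neq\pm AD_{U_j}$ should be converted into a lower bound $\norm{q_a-q_b}\ge c\,(\text{diam }U_i+\text{diam }U_j)$ for $a\in U_i$, $b\in U_j$. This bound makes the cross-cluster forces negligible compared with the internal forces driving the singularity.
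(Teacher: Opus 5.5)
Your overall architecture---Poincaré surfaces carried by $U_1$, exhaustion over parameters, all other clusters treated as perturbations---is the one the paper uses. However, two essential ingredients are missing or wrong.

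First, the surfaces have to live in the full phase space $P$. A level set of a function of the coordinates of $U_1$ alone is a cylinder over the phase spaces of all other clusters and has infinite volume. Restricting the flux \enquote{to points that will later be singular} is circular unless you say explicitly where those points lie at the crossing times. The paper takes the Cartesian product of the $U_1$-surfaces from \cite[Theorem 6.1]{quaschner2025improbability} with, for every other cluster, a set of finite Lebesgue measure. That set must contain the cluster at \emph{all} times close to $T^{+}$, because the crossing times of the $U_1$-surfaces are unrelated to the phase of the other clusters' motion. For colliding clusters this set is given by Proposition~\ref{prop:SubsetCollisions}. For each further four-body cluster $U_j$, whose positions and momenta diverge, building such a set is the main new work of the paper, and it is not a \enquote{bounded perturbation} statement. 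You must control $U_j$ over the whole cycle $[t_{23}^{(k)},t_{23}^{(k+1)}]$, not only on $[x^{(k)},s^{(k)}]$. This needs:
\begin{itemize}
\item the deviation from straight-line motion before $x^{(k)}$ and after $s^{(k)}$ (Propositions~\ref{prop:DeviationStraightLineTnezx} and~\ref{prop:DeviationStraightLineSminTnzd});
\item the near-collision phase, in which the binary may reform (Proposition~\ref{prop:BoundsPhaseQuasiCollision});
\item energy growth of at most order $\log\norm{p}$.
\end{itemize}
You must then check that the resulting sets $A_{\delta,\mathcal{L},\mu_q,\mu_p}$ have finite volume. Your proposal has no substitute for any of this.

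Second, your perturbation control aims at the wrong quantity. Forces between different final clusters are trivially bounded, since such particles stay at distance at least $\delta$. They are therefore also integrable over the finite time, but that is not what the four-body machinery needs. It needs condition~\eqref{bul:ConIntegratedForces} of Definition~\ref{def:NC_SingXiaType}, i.e.\ $\int_{x^{(i)}}^{x^{(i+1)}}\norm{F_j}\diff t\le \operatorname{Const}/\max\{\norm{p_{C_1^{(i)}}(x^{(i)})},\norm{p_{C_3^{(i)}}(x^{(i)})}\}$ on every passage. This is far smaller than a bounded force times the passage duration $\sim\norm{q^{I}_{C_2,C_3}}/\norm{p^{I}_{C_2,C_3}}$. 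So \enquote{the force stays bounded} is insufficient even for the colliding clusters; Proposition~\ref{prop:EstimatesIntegratedForcesUnboundedBounded} needs a convexity and propagation argument there.

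Moreover, the bound $\norm{q_a-q_b}\ge c(\operatorname{diam}U_i+\operatorname{diam}U_j)$ that you want to extract from $AD_{U_i}\neq\pm AD_{U_j}$ fails in general. In every passage the messenger of $U_i$ crosses a ball of fixed radius about the origin. This is because it travels between outer clusters lying near $+AD_{U_i}$ and $-AD_{U_i}$ inside a thickened double cone. The same holds for $U_j$, so whenever both messengers are in that ball simultaneously, their distance is $O(1)$ while both diameters diverge. The direction condition only yields $\norm{q_a-q_b}\ge c\norm{q_a}$ outside a fixed ball of radius $\sim R_0+\mu_q$ (Lemmas~\ref{lem:subsysDoubleCones} and~\ref{lem:MinDistDifferentCones}). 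Inside that ball, convexity of $J_{C_2}$ shows the messenger spends only time $\lesssim 1/\norm{p_{C_2}}$. Outside it, one uses propagation estimates (Proposition~\ref{prop:BoundIntegratedForcesOnePassage}).
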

	
	As we can take the finite union over all possible final cluster decompositions of this type, all singularities of this type are improbable. Note that this result was already proven for $k=0$ in \cite{fleischer2019improbabilityCollisions}.
	
	We believe that results of this type are relevant to the general question of improbability. Using the final cluster decomposition from Definition~\ref{def:TempAndFinalClusterDecomposition} seems to be reasonable, as the particles from different subsystems have only a small influence (for collisions this was done in \cite{duignan2026blowup}
	and for unbounded subsystems, i.e. those experiencing a non-collision singularity, we can control this influence by techniques developed in Section~\ref{sec:UnboundedSubsystem} at least for four particles). Thus, it is necessary to understand the behavior of \enquote{simple} non-collision singularities with $a \in \N$ particles first and then combine these statements to the total statement (as long as one can control the perturbation by small forces). 
	\\
	The outline of this paper is as follows: 
	\\ In Section~\ref{chap:Preliminaries} we recall some useful notation on cluster coordinates and the decomposition of some quantities as angular momentum, moment of inertia or kinetic energy with respect to this decomposition. We then define the final cluster decomposition for general singularities and the asymptotic direction for unbounded subsystems with four particles. Afterwards we give a short recap of the Poincaré surface method. This will be our main tool in the proofs of the following improbability results. Finally we recap some classical results on the qualitative behavior of non-collision singularities, namely generalized versions of the theorems of Painlevé and von Zeipel.
	\\
	In Section~\ref{sec:CollidingSubsystems} we briefly recall some facts about subsystems suffering a total collision and how they can influence other subsystems. \\
	Section~\ref{sec:UnboundedSubsystem} is devoted to the unbounded subsystems. First, we can estimate the forces between the subsystems using a double-cone argument (for which we need the assumptions on the asymptotic directions). This allows us to apply the general theory from \cite{quaschner2025improbability}. If we only have one unbounded subsystem, these estimates allow us to define a hypersurface for this subsystem. If we have more than one unbounded subsystem, we need to control the motion even at the times of an almost-collision. This will be done in the second part of this section.
	\\
	Finally, in Section~\ref{sec:PoincareSurfaces} we define the Poincaré surfaces for our improbability result (using some exhaustion arguments) such that the desired singularities are transition points for this sequence. Then we prove that these hypersurfaces have finite and decreasing volume, which implies the theorem.

	\paragraph{Some Notation}
	\label{par:SomeNotations}
	\quad \\
	We use the notation
	\begin{align*}
		\operatorname{Const}
	\end{align*}
	to describe some generic constants. These constants may vary from one occurrence to another. Usually, they depend only on general system quantities as the masses, particle numbers, the dimension $d$ or interaction constants of the Hamilton function considered in the described context. If there is dependence on an additional constant $\epsilon$ arising from a further context, we write
	\begin{align*}
		\operatorname{Const}(\epsilon).
	\end{align*}
	Sometimes we omit this notation if we explicitly mention the dependencies of the constants in the surrounding text and this increases readability (e.g. if the dependence is on too many constants). \\
	If we use a constant with a fixed value, we indicate this by a subscript, e.g.
	\begin{align*}
		\Con{1}
	\end{align*}
	Within the given context, $\Con{1}$ will always denote the same value.
	\\
	For error terms we use an adapted version of the Landau $\mathcal{O}$-notation. For a real quantity $z$ we denote with
	\begin{align*}
		\mathcal{O}(z)
	\end{align*}
	some term $g$ with $\norm{g} \leq z$. Note that whether $g$ is a vector or a scalar will always become clear from the context. The norm we choose is usually the Euclidean norm, but most of the time this does not matter, as we can compensate the use of different norms by constants. As we want the estimate $\norm{g} \leq z$ and not only $\norm{g} \leq \operatorname{Const} z$, we will sometimes have generic constants in the $\mathcal{O}$-notation as well, e.g. $\mathcal{O}(\operatorname{Const} \cdot z)$. As with constants $\operatorname{Const}$, the term represented by $\mathcal{O}(z)$ can vary at any occurrence even within the same set of equations.

	\section{Preliminaries}
	\label{chap:Preliminaries}
	
	In this chapter, we introduce some notation and well-known results from the literature. The notation introduced first, especially the cluster coordinates and, based on these, the Jacobi coordinates, is important for understanding the subsequent chapters. Then we define the final cluster decomposition. For final clusters that correspond to non-collision singularities with four particles, we also define the asymptotic direction.
	Afterwards we recapitulate the method of Poincaré surfaces as introduced in \cite{fleischer2019improbability}. This is the main technique used in all improbability proofs in this paper. As a last point, we revisit some classical results of Painlevé and von Zeipel in a more modern fashion based on ideas of Sperling and developed in \cite{duignan2026blowup}. 
	Parts of this section are similar to the corresponding Section~2 in \cite{quaschner2023non}, compare also the references therein. 
	
	\subsection[Cluster Decomposition and Suitable Coordinates]{Cluster Decomposition and Suitable Coordinates} \quad
	
	We recap some notions on the set $\mathcal{P}(\bN)$ of set partitions of the finite set $\bN = \{1, \ldots, n \}$ and the lattice structure on this set. These notions are taken from \cite[Definition 2.2, p.5]{knauf2018asymptotic}, compare as well \cite [Chapter 12.6, p.311 et seqq]{knauf2018mathematical}.
	\begin{definition}[Set partition] \label{def:SetPartition}\quad
		\begin{enumerate}[(i)]
			\item A \textbf{set partition} or \textbf{cluster decomposition} of $\bN$ is a set \linebreak $\mathcal{C}:= \{C_1, \ldots, C_k\}$ of blocks or clusters $C_i \subseteq \bN, C_i \neq \emptyset, \; \forall i=1, \ldots, k$ with $C_i \cap C_j = \emptyset$ for $i \neq j$
			and $\bigcup\limits_{i=1}^{k} C_i = \bN$.
			\item The set $\mathcal{P}(\bN) := \{ \mathcal{C} \mid \mathcal{C} \text{ is a set partition of } \bN \}$ becomes a lattice with the partial order defined by the refinement relation\footnote{This text follows the notation of \cite{aigner2012combinatorial} for the order relation $\preccurlyeq$ between cluster decompositions. There is, however, also the convention of defining the comparison relation for set partitions the other way around. In this case, the definitions of join and meet have to be adapted accordingly. As a mnemonic one can say that a partition is coarser if it contains larger clusters (but less clusters than the finer partition).}
			\begin{align*}
				\mathcal{C} = \{C_1, \ldots, C_k \} &\preccurlyeq \{D_1, \ldots, C_l \} = \mathcal{D}
				\intertext{if there is a surjection $\pi: \{1, \ldots, k\} \rightarrow \{1, \ldots, l\}$ with}
				C_{i} &\subseteq D_{\pi(i)} \; \forall i \in \{1, \ldots, k\}.
			\end{align*}
			In this case, $\mathcal{C}$ is called \textbf{finer} than $\mathcal{D}$ and $\mathcal{D}$ is called \textbf{coarser} than $\mathcal{C}$.
			\item For two set partitions $\mathcal{C}$, $\mathcal{D}$ the set partition $\mathcal{C} \vee \mathcal{D}$ is defined to be the finest set partition that is coarser than both $\mathcal{C}$ and $\mathcal{D}$. It is called the \textbf{join} of $\mathcal{C}$ and $\mathcal{D}$. Analogously we define the \textbf{meet} $\mathcal{C} \wedge \mathcal{D}$ as the coarsest set partition that is finer than both $\mathcal{C}$ and $\mathcal{D}$.
			\item The finest set partition is $\mathcal{C}_{\min} := \{ \{1\}, \ldots, \{n\} \}$ and the coarsest set partition is $\mathcal{C}_{\max} = \{\bN\}$.
			\item Each set partition $\C \in \Pc(\bN)$ defines an equivalence relation on $\bN$ denoted $\sim_{\C}$ which is defined such that $\C$ is the set of equivalence classes.
		\end{enumerate}
	\end{definition}
	
	In our application to $n$-body problems, a cluster decomposition represents a division of the particles, for example we could group together particles that are close to each other in some sense. We need some useful coordinates, which give us a better control of the motion for those groups.
	\begin{definition}[Cluster coordinates] \label{def:ClusterCoordinates} \quad
		\begin{enumerate}[(i)]
			\item For a nonempty subset $C \subset \bN$ we define the \textbf{cluster mass} $m_C$ as
			\begin{align*}
				m_C &:= \sum_{i \in C} m_i
			\end{align*}
			\item For a cluster $B \subseteq \bN$ the \textbf{center of mass} $q_{B}$ and the \textbf{total momentum} $p_{B}$ are defined as
			\begin{align*}
				q_{B} &:= \frac{1}{m_{B}} \sum_{i\in B} m_i q_i, \\
				p_{B} &:=\sum_{i \in B} p_i.
			\end{align*}
			\item For a nonempty subset $C' \subsetneq B \subseteq \bN$ we define the \textbf{cluster barycenter} (or \textbf{cluster center of mass}) $q_{C'}$ and the \textbf{cluster momentum} $p_{C'}$ \textbf{with respect to the center of mass of the particles in $B$} as
			\begin{align*}
				q_{C'} &:= \frac{1}{m_{C'}}\sum_{i \in C'} m_i q_i - q_{B}\\
				p_{C'} &:=  \sum_{i \in C'} p_i - \frac{m_{C'}}{m_{B}} p_{B}.
			\end{align*}
			\label{def:ClusterCoordinates_external_Coordinates}
			Usually, the cluster $B$ is clear from the context and not stressed explicitly (e.g. if we consider the motion of a subsystem $B$ of the whole $n$-body system). If no explicit $B$ is given, one can take $B=\bN$.
			\item For two clusters $C_i, C_k \subseteq \bN$ the \textbf{internal distance} is
			\begin{align*}
				q_{C_i, C_k}^{I} &:= q_{C_i} - q_{C_k}
				\intertext{and the internal momentum is}
				p_{C_i, C_k}^{I} &:= \left(\frac{p_{C_i}}{m_{C_i}} - \frac{p_{C_k}}{m_{C_k}}\right) m_{C_i, C_k}^{I}
			\end{align*}
			with the \textbf{internal mass} (or \textbf{reduced mass}) $m_{C_i, C_k}^{I} = \frac{m_{C_i} m_{C_k}}{m_{C_i}+ m_{C_k}}$. \\
			If $D = \{j,m\}$ is a cluster with only two particles, we use the shorthand notation $q_{D}^{I} = q_{\{j\}, \{m\}}^{I}$.
		\end{enumerate}
	\end{definition}
	
	\begin{remark} \label{rem:ClusterCoordinates}\quad \\
		Most of Definition \ref{def:ClusterCoordinates} is common in the literature. But the additional subtraction of the total center of mass and total momentum in \eqref{def:ClusterCoordinates_external_Coordinates} relative to some subsystem might be surprising. Usually in $n$-body systems one works with the center of mass frame, as the total momentum is preserved. But as already pointed out in the introduction, we are sometimes interested in separated subsystems (or other cases where for some reason the center of mass is not preserved). Then it is more convenient to study these quantities relative to the center of mass of the corresponding subsystem. If we are, however, in the situation to work in the center of mass frame, both definitions coincide. \\
		One should note that if we consider internal distances of two clusters, the center of mass of the corresponding subsystem cancels out and we can work with the usual definition again. This will be useful in the following Definition \ref{def:JacobiCoordinates}, as the Jacobi coordinates are usually defined using the other definition of cluster coordinates. \hfill $\diamond$
	\end{remark}
	
	Next, we define several system quantities that are used frequently in this text. Some of these quantities will get adapted to clusters or cluster decompositions as well. The following definition is similar to \cite [p. 7ff and p.22]{knauf2018asymptotic}.
	
	\begin{definition}[System quantities] \label{def:System_quantities}\quad
		\begin{enumerate}[(i)]
			\item The \textbf{angular momentum} $L: T^{*}M \rightarrow \R^d \wedge \R^d$ is defined as
			\begin{align*}
				L(p,q) &:= \sum_{i=1}^{n} q_i \wedge p_i.
			\end{align*}
			\item Let $C \subseteq \bN$ be some cluster. Then the \textbf{(external) angular momentum} of cluster $C$ is defined as
			\begin{align*}
				L_C(p,q) &:= q_{C} \wedge p_{C}.
			\end{align*}
			Note that this quantity depends on some subsystem $B \subseteq C$ via the definition of $q_C$ and $p_C$.
			The \textbf{internal angular momentum} of cluster $C$ is defined as
			\begin{align*}
				L_{C}^{I}(p,q) &:= \sum_{j \in C} \left(q_{\{j\}} - q_C \right) \wedge \left(p_{\{j\}}- \frac{m_j}{m_C}p_C\right).
			\end{align*}
			Note that we take here for $q_{\{j\}}$ and $q_C$ (respectively $p_{\{j\}}$ and $p_C$) the same reference subsystem $B$, hence the influence of the reference subsystem cancels.
			\item Let $C_1, C_2 \subseteq \bN$ be clusters. The \textbf{internal angular momentum} between clusters $C_1$ and $C_2$ is defined as
			\begin{align*}
				L_{C_1, C_2}^{I}(p,q) &:= q_{C_1,C_2}^{I} \wedge p_{C_1, C_2}^{I}.
			\end{align*}
			\item The \textbf{moment of inertia} $J: T^{*}M \rightarrow [0, \infty)$ is defined as
			\begin{align*}
				J(p,q) \equiv J(q) := \frac{\norm{q}_{\M}^2}{2} = \sum_{i=1}^{n} \frac{m_i}{2} \norm{q_i}^2.
			\end{align*}
			Here, $\M$ is the mass matrix given as a (block) diagonal matrix
			\begin{align*}
				\M &= \operatorname{diag}\left(m_1 \ind_{d}, \ldots, m_n \ind_{d}\right),
			\end{align*}
			where $\ind_{d}$ denotes the $d\times d$ identity matrix.
			\item Let $C \subseteq \bN$ be some cluster. Then the \textbf{(external) moment of inertia} of the cluster $C$ (depending on some reference subsystem) is defined as
			\begin{align*}
				J_C(p,q) &:= \frac{m_C}{2} \norm{q_C}^2
			\end{align*}
			and the \textbf{internal moment of inertia} of the cluster $C$ is defined as
			\begin{align*}
				J_C^{I}(p,q) := \sum_{i \in C} \frac{m_i}{2} \norm{q_{\{i\}} - q_C}^2.
			\end{align*}
			\item Let $C_1, C_2 \subseteq \bN$ be clusters. The \textbf{internal moment of inertia} of the clusters $C_1$ and $C_2$ is defined as
			\begin{align*}
				J_{C_1, C_2}^{I}(p,q) &:= \frac{m_{C_1,C_2}^{I}}{2} \norm{q_{C_1,C_2}^{I}}^2.
			\end{align*}
			\item The \textbf{kinetic energy} $K: T^{*}M \rightarrow [0, \infty)$ is defined as
			\begin{align*}
				K(p,q) \equiv K(p) := \frac{\norm{p}_{\M^{-1}}^2}{2} = \sum_{i=1}^{n} \frac{\norm{p_i}^2}{2 m_i}.
			\end{align*}
			\item  Let $C \subseteq \bN$ be some cluster. Then the \textbf{(external) kinetic energy} of cluster $C$ (depending on some reference subsystem) is defined as
			\begin{align*}
				K_C(p,q) &:= \frac{\norm{p_C}^2}{2 m_C}
			\end{align*}
			and the \textbf{internal kinetic energy} of cluster $C$ is defined as
			\begin{align*}
				K_C^{I}(p,q) &:= \sum_{i \in C} \frac{\norm{p_{\{i\}}- \frac{m_i}{m_C} p_C}^2}{2 m_i}.
			\end{align*}
			\item Let $C_1, C_2 \subseteq \bN$ be clusters. The \textbf{internal kinetic energy} between clusters $C_1$ and $C_2$ is defined as
			\begin{align*}
				K_{C_1, C_2}^{I}(p,q) &:= \frac{\norm{p_{C_1,C_2}^{I}}^2}{2 m_{C_1,C_2}^{I}}.
			\end{align*}
			\item Let $C \subseteq \bN$ be some cluster. Then the \textbf{internal energy of the cluster} is defined as
			\begin{align*}
				H_{C}^{I}(p,q) = K_{C}^{I}(p,q) + \sum_{\substack{i,j \in C \\ i < j}} V_{i,j}(q_i-q_j).
			\end{align*}
		\end{enumerate}
	\end{definition}
	
	\begin{remark} \quad
		\begin{enumerate}[(i)]
			\item Usually, we regard the arguments of these quantities as implicit and clear from the context.
			\item When using \enquote{internal} quantities, one has to be careful whether this refers to one or two clusters. 
			\item \label{rem:IntCoordinates} The internal coordinates of one cluster $C$ have the interpretation of ignoring all particles not in $C$ and then considering the system in the center of mass frame. In this setting the internal quantities are just the total quantities of the so transformed system of particles in $C$. If we add the respective external quantity, we get the total quantity for the subsystem (without considering the subsystem in its center of mass frame). 
			\item The internal coordinates between two clusters $C_1, C_2$ are used for describing the difference of the respective cluster barycenters. In the case of $C_1 = \{i\}, C_2 = \{j\}$ for $i, j \in \bN$ we have as one would expect for any quantity $S \in \{L, J, K\}$
			\begin{flalign*}
				S_{\{i,j\}}^{I} &= S_{\{i\}, \{j\}}^{I}. & \diamond
			\end{flalign*}
		\end{enumerate}
	\end{remark}
	
	There is a more structured approach to cluster coordinates for a fixed cluster decomposition $\mathcal{C}$ in \cite[p.7]{knauf2018asymptotic}, where internal and external coordinates can be viewed as orthogonal projections in phase space. This yields a very simple proof of the following lemma:
	
	\begin{lemma}[Split of system quantities for cluster decomposition] \label{lem:SplitSystemQuantitiesClusterDecomposition} \quad \\
		Let $\mathcal{C} \in \Pc(\bN)$ be a cluster decomposition. Define for any $S \in \{L, J, K\}$ the respective external/internal cluster quantity with respect to $\bN$ as
		\begin{align*}
			S_{\C} &:= \sum_{C \in \C} S_C, \\
			S_{\C}^{I} &:= \sum_{C \in \C} S_C^{I}.
		\end{align*}
		Then we have the decomposition of the total quantity $S$ with respect to the cluster decomposition $\C$, namely
		\begin{align*}
			S = S_{\C} + S_{\C}^{I} + S_{\bN}.
		\end{align*}
	\end{lemma}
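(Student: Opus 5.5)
The plan is to treat the decomposition as an orthogonal splitting of configuration and momentum space with respect to the mass metric $\M$ (respectively $\M^{-1}$), and then read off each identity $S = S_{\C} + S_{\C}^{I} + S_{\bN}$ from Pythagoras (for $J$, $K$) or from bilinearity plus orthogonality (for $L$). Throughout, the external cluster quantities are taken with reference subsystem $B = \bN$, so $q_C$ and $p_C$ denote barycenter and momentum relative to the total center of mass and total momentum.

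For each $q \in \R^{nd}$ I split $q = q^{(0)} + q^{(1)} + q^{(2)}$ with
\begin{align*}
q^{(0)}_i := q_{\bN}, \qquad q^{(1)}_i := q_C, \qquad q^{(2)}_i := q_{\{i\}} - q_C \quad (i \in C \in \C),
\end{align*}
and similarly $p = p^{(0)} + p^{(1)} + p^{(2)}$ with
\begin{align*}
p^{(0)}_i = \tfrac{m_i}{m_{\bN}} p_{\bN}, \qquad p^{(1)}_i = \tfrac{m_i}{m_C} p_C, \qquad p^{(2)}_i = p_{\{i\}} - \tfrac{m_i}{m_C}p_C .
\end{align*}
This is consistent, since $q_{\{i\}} = q_i - q_{\bN}$ and $p_{\{i\}} = p_i - \tfrac{m_i}{m_{\bN}}p_{\bN}$. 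The key observations are three.
\begin{itemize}
\item $\sum_{i\in C} m_i q^{(2)}_i = 0$ for every $C\in\C$.
\item $\sum_{C\in\C} m_C q_C = 0$ (relative barycenters weighted by mass sum to zero), and analogously $\sum_{i\in C} p^{(2)}_i = 0$ and $\sum_{C} p_C = 0$.
\item Each of $q^{(0)}, q^{(1)}$ is constant on the relevant blocks.
\end{itemize}
From these, the $\M$-inner products $\sum_i m_i \langle q^{(a)}_i, q^{(b)}_i\rangle$ vanish for $a\neq b$. For example, $\sum_i m_i\langle q^{(1)}_i, q^{(2)}_i\rangle = \sum_C \langle q_C, \sum_{i\in C} m_i q^{(2)}_i\rangle = 0$, and the pair $(0,1)$ uses $\sum_C m_C q_C = 0$. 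The same holds for the $\M^{-1}$-inner products of the momentum pieces.

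Then $J = \tfrac12\norm{q}_{\M}^2$ splits into three squared norms. One checks they equal $J_{\bN} = \tfrac{m_{\bN}}2\norm{q_{\bN}}^2$, $\sum_C \tfrac{m_C}{2}\norm{q_C}^2 = J_{\C}$, and $\sum_C J_C^I = J_{\C}^I$. Identically, $K$ splits into $K_{\bN} = \norm{p_{\bN}}^2/(2m_{\bN})$, $K_{\C}$ and $K_{\C}^I$, using $\sum_{i\in C}\norm{\tfrac{m_i}{m_C}p_C}^2/(2m_i) = \norm{p_C}^2/(2m_C)$.

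For $L = \sum_i q_i\wedge p_i$ I expand bilinearly into nine terms $\sum_i q^{(a)}_i\wedge p^{(b)}_i$. The off-diagonal terms vanish by the same mechanism, because the wedge is bilinear and the sums factor. For instance, $\sum_i q^{(1)}_i\wedge p^{(2)}_i = \sum_C q_C\wedge \sum_{i\in C}p^{(2)}_i = 0$ and $\sum_i q^{(2)}_i \wedge p^{(1)}_i = \sum_C (\sum_{i\in C} \tfrac{m_i}{m_C} q^{(2)}_i)\wedge p_C = 0$. The cross terms with index $0$ vanish via $\sum_C m_C q_C=0$ and $\sum_C p_C = 0$. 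The diagonal terms give $q_{\bN}\wedge p_{\bN} = L_{\bN}$, $\sum_C q_C\wedge p_C = L_{\C}$, and $\sum_C L_C^I = L^I_{\C}$.

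The only real care point is bookkeeping of the reference frame. I must verify that the internal quantities are frame-independent and that $S_{\bN}$ is meant with absolute coordinates, i.e. $q_{\bN}$ is not subtracted from itself. I expect no genuine obstacle beyond checking these nine vanishing cross terms.
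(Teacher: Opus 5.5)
Your proof is correct and takes essentially the paper's approach. The paper gives no computation; it only points to the description in \cite{knauf2018asymptotic} of internal and external cluster coordinates as orthogonal projections in phase space. Your splitting of $q$ into three $\M$-orthogonal pieces and of $p$ into three $\M^{-1}$-orthogonal pieces, followed by Pythagoras for $J, K$ and vanishing cross terms for $L$, writes that argument out in full. The zero-sum identities you rely on ($\sum_C m_C q_C = 0$, $\sum_C p_C = 0$, $\sum_{i\in C} m_i (q_{\{i\}} - q_C) = 0$, $\sum_{i\in C}(p_{\{i\}} - \frac{m_i}{m_C}p_C) = 0$) all hold with reference subsystem $\bN$.
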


	We now derive another coordinate system that forms a complete set of coordinates and satisfies additional desirable properties. The idea is rather simple: We subsequently combine two particles/clusters to a larger cluster and take the internal distance of the single particles/clusters as a coordinate. After $n-1$ steps we add the total center of mass as last coordinate and have all position coordinates. Combined with the corresponding momenta, we get the desired coordinates for the phase space. We can formalize this idea as follows:
	
	\begin{definition}[Jacobi coordinates] \label{def:JacobiCoordinates} \quad \\
		Let $\C_1 = \C_{\wedge} \preccurlyeq \C_2 \preccurlyeq \ldots \preccurlyeq \C_n = \C_{\vee}$ be a strictly increasing sequence of cluster decompositions $C_i$ of $\bN$ with $\left|C_i\right| = n + 1 - i$ (this already follows from the strict increase of the ordering and the length of the sequence $n$). Then we know that for $i \in \{1, \ldots, n-1\}$ the set $C_{i} \setminus C_{i+1}$ consists of exactly two clusters, which we call $D_1^{i}, D_2^{i}$. Then \textbf{Jacobi coordinates} with respect to this sequence are any coordinates of the form
		\begin{align*}
			q_{D_1^{1}, D_2^{1}}^{I}, \ldots, q_{D_{1}^{n-1}, D_2^{n-1}}^{I}, q_{\bN}, p_{D_1^{1}, D_2^{1}}^{I}, \ldots, p_{D_{1}^{n-1}, D_2^{n-1}}^{I}, p_{\bN}.
		\end{align*}
	\end{definition}
	
	\begin{remark}
		\quad \\
		Note that we have a choice in the naming of the two clusters $D_1^{i}, D_2^{i}$ in the above definition. If we switch the naming for any index, this will give us a minus sign for both the positions and the momenta. This is why we do not speak about \enquote{the} Jacobi coordinates as we allow this freedom. \\
		Usually in applications, we will not give the sequence of cluster decompositions but rather just the set of Jacobi coordinates. \\
		Note that there are sometimes more than one sequence of cluster decompositions leading to the same Jacobi coordinates (with just a permutation of the coordinates as defined above). \hfill $\diamond$
	\end{remark}
	
	A good computational introduction to Jacobi coordinates where the previously described intuition of successively merging clusters is apparent is \cite{lim1991binary}, where the following theorems are proven:
	
	\begin{lemma}[Properties of Jacobi coordinates] \label{lem:PropJacobiCoordinates}\quad \\
		Let $q_{D_1^{1}, D_2^{1}}^{I}, \ldots, q_{D_{1}^{n-1}, D_2^{n-1}}^{I}, q_{\bN}, p_{D_1^{1}, D_2^{1}}^{I}, \ldots, p_{D_{1}^{n-1}, D_2^{n-1}}^{I}, p_{\bN}$ be Jacobi coordinates as in Definition \ref{def:JacobiCoordinates}.
		\begin{enumerate}[(i)]
			\item Then the Jacobi coordinates are symplectic coordinates, i.e. the linear map
			\begin{align*}
				&(p_1, \ldots, p_n, q_1, \ldots, q_n) \\ &\mapsto \left(p_{D_1^{1}, D_2^{1}}^{I}, \ldots, p_{D_{1}^{n-1}, D_2^{n-1}}^{I}, p_{\bN}, q_{D_1^{1}, D_2^{1}}^{I}, \ldots, q_{D_{1}^{n-1}, D_2^{n-1}}^{I}, q_{\bN}\right)
			\end{align*}
			is a symplectomorphism.
			\item Let $S \in \{L, J, K\}$ be a system quantity from Definition \ref{def:System_quantities}. Then $S$ can be expressed using the Jacobi coordinates as
			\begin{align*}
				S &= \sum_{i=1}^{n-1} S_{D_1^i, D_2^i}^{I} + S_{\bN}.
			\end{align*}
		\end{enumerate}
	\end{lemma}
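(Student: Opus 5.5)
Both parts follow from one elementary identity for merging two clusters, which I would then iterate along the chain $\C_1 \preccurlyeq \ldots \preccurlyeq \C_n$. Throughout I use absolute barycenters, i.e. the reference subsystem is $\bN$ with center of mass not subtracted. By Remark~\ref{rem:ClusterCoordinates}, internal distances and internal momenta do not depend on this choice.

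\textbf{Step 1 (two-cluster merging identity).} Let $D_1, D_2$ be disjoint clusters and $D = D_1 \cup D_2$. Write $q_{D_j}, p_{D_j}$ for the absolute barycenters and total momenta, and $m := m_D$. Inverting Definition~\ref{def:ClusterCoordinates} gives
\begin{align*}
	q_{D_1} &= q_D + \tfrac{m_{D_2}}{m} q^{I}_{D_1,D_2}, & q_{D_2} &= q_D - \tfrac{m_{D_1}}{m} q^{I}_{D_1,D_2},\\
	p_{D_1} &= \tfrac{m_{D_1}}{m} p_D + p^{I}_{D_1,D_2}, & p_{D_2} &= \tfrac{m_{D_2}}{m} p_D - p^{I}_{D_1,D_2}.
\end{align*}
Substituting these, I would check the following for any bilinear pairing $\beta$ (the Euclidean scalar product or the wedge product):
\begin{align*}
	\beta(q_{D_1},p_{D_1}) + \beta(q_{D_2},p_{D_2}) = \beta(q_D,p_D) + \beta\bigl(q^{I}_{D_1,D_2},p^{I}_{D_1,D_2}\bigr).
\end{align*}
The cross terms cancel because their coefficients are $\tfrac{m_{D_1}m_{D_2}}{m^2} - \tfrac{m_{D_2}m_{D_1}}{m^2} = 0$ and $1-1=0$. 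The diagonal internal coefficient is $\tfrac{m_{D_2}}{m}+\tfrac{m_{D_1}}{m}=1$.

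The same substitution gives the analogous identities for the mass-weighted quadratic forms:
\begin{align*}
	\tfrac{m_{D_1}}{2}\norm{q_{D_1}}^2 + \tfrac{m_{D_2}}{2}\norm{q_{D_2}}^2 &= J_D + J^{I}_{D_1,D_2},\\
	\tfrac{\norm{p_{D_1}}^2}{2m_{D_1}} + \tfrac{\norm{p_{D_2}}^2}{2m_{D_2}} &= K_D + K^{I}_{D_1,D_2}.
\end{align*}
For $J$ the cross term is $m_{D_1}\tfrac{m_{D_2}}{m} - m_{D_2}\tfrac{m_{D_1}}{m} = 0$, and the internal coefficient is $\tfrac{m_{D_1}m_{D_2}}{m} = m^{I}_{D_1,D_2}$. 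The computation for $K$ is analogous.

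\textbf{Step 2 (telescoping, part (ii)).} For a partition $\C_i$ define $\Sigma_i := \sum_{C \in \C_i} S_C$, with external quantities taken in absolute coordinates. Then $\Sigma_1 = S$, because the clusters of $\C_1$ are singletons, and $\Sigma_n = S_{\bN}$. Passing from $\C_i$ to $\C_{i+1}$ replaces $D_1^i, D_2^i$ by $D_1^i\cup D_2^i$, so Step 1 gives $\Sigma_i = \Sigma_{i+1} + S^{I}_{D_1^i,D_2^i}$. Summing over $i$ yields $S = \sum_{i=1}^{n-1} S^{I}_{D_1^i,D_2^i} + S_{\bN}$ for $S \in \{L,J,K\}$.

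\textbf{Step 3 (symplecticity, part (i)).} The position part of the map is a linear map $A$ of $\R^{nd}$, acting blockwise as $A_0 \otimes \ind_d$. Step 2 applied to the scalar-product pairing shows that the canonical pairing is preserved:
\begin{align*}
	\sum_i \scalprod{p_i}{q_i} = \sum_k \scalprod{p^{I}_k}{q^{I}_k} + \scalprod{p_{\bN}}{q_{\bN}}.
\end{align*}
I would argue this as follows. The new momenta are linear in $p$ and independent of $q$, so there is a linear $B$ with new momenta $= Bp$. The pairing identity reads $\scalprod{Bp}{Aq} = \scalprod{p}{q}$ for all $p,q$, so $B = A^{-T}$ and the map is the cotangent lift of $A$. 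Invertibility of $A$ follows from the same identity: if $Aq=0$, then $\scalprod{p}{q}=0$ for all $p$, so $q=0$. A cotangent lift preserves the Liouville form $\sum_i p_i\,\diff q_i$ and hence $\omega$.

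The only real work is the bookkeeping in Step 1, in particular getting the signs and mass ratios right. Everything else is formal telescoping.
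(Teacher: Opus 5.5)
Your proof is correct, but it does not follow the paper's route. The paper gives no computation: it defers (i) to the theorem on p.~157 of Lim \cite{lim1991binary} and (ii) to formula (4.2) there. It notes only that renaming, multiplicative factors and the reference-frame convention of Remark~\ref{rem:ClusterCoordinates} have to be reconciled.

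You instead reduce everything to one two-cluster merging identity. It holds for an arbitrary bilinear pairing and for the two mass-weighted quadratic forms. You telescope it along $\C_1 \preccurlyeq \ldots \preccurlyeq \C_n$ to get (ii). You then get (i) from the scalar-product instance of the same telescoping: $\scalprod{Bp}{Aq} = \scalprod{p}{q}$ forces $B = A^{-T}$, so the map is the cotangent lift of $A$.

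What your version buys is self-containedness and transparency:
\begin{enumerate}[(a)]
	\item It makes explicit why the paper's relative external coordinates are irrelevant: you work with absolute barycenters, and only internal quantities enter.
	\item It treats the $\R^d \wedge \R^d$-valued angular momentum and arbitrary $d$ uniformly.
	\item It obtains symplecticity essentially for free, rather than through a separate matrix computation.
\end{enumerate}
The paper's citation is shorter, but it leaves the reader to do exactly this bookkeeping against an external source with different conventions.

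One point worth stating explicitly: in Step 3 you apply the telescoping of Step 2 to $\sum_i \scalprod{q_i}{p_i}$. That quantity is not one of $L, J, K$, but it is covered because Step 1 holds for any bilinear pairing.
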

	
	\begin{proof}
		\quad \\
		The first statement is just the theorem proved in \cite[p.157]{lim1991binary} whereas the second statement is expression (4.2) on page 163 from \cite{lim1991binary} up to some renaming of the quantities, some multiplicative factors and Remark~\ref{rem:ClusterCoordinates} that the unusual definition of the external cluster coordinates with respect to the total center of mass cancels out, since internal coordinates are used.
	\end{proof}

	\subsection{The final cluster decomposition and the asymptotic direction of non-collision singularities}
	
	This section is adapted from \cite[Section 2]{quaschner2025improbability}. Its purpose is to explain how one can decompose the set of particles $\bN$ into asymptotic clusters.
	\\
	One of the main questions when we are using some cluster coordinates is, how the clusters shall be defined. One idea for this is the so-called \textbf{Graf partition} in position space, from which a cluster decomposition can be deduced. We follow \cite[Chapter 12.6, p. 316 et seqq]{knauf2018mathematical} to recap these definitions and results that are only necessary for defining the final cluster decomposition in Definition~\ref{def:TempAndFinalClusterDecomposition} below:
	
	\begin{definition}[Graf partition] \label{def:GrafPartition} \quad \\
		For $\delta \in (0,1)$, let
		\begin{align*}
			J^{(\delta)}:& \; M \rightarrow [0, \infty), \quad J^{(\delta)}(q) := \max\left\{ J_{\C}^{E}(q) + \delta^{\left|\C \right|} \mid \C \in \mathcal{P}(\bN)\right\},
			\intertext{where}
			J_{\C}^{E}(q) &:= \sum_{C \in \C} \frac{m_C}{2} \norm{q_C}^2.
		\end{align*}
		This induces a measure-theoretic partition of the position space $M$ into the sets
		\begin{align*}
			\Xi_{\C}^{(\delta)} &:= \left\{ q \in M \mid J^{(\delta)}(q) = J_{\C}^{E}(q) + \delta^{\left|\C \right|} \right\}
		\end{align*}
		for arbitrary $\C \in \mathcal{P}(\bN)$. These sets $\left(\Xi_{\C}^{(\delta)}\right)_{\C \in \mathcal{P}(\bN)}$ are called the \textbf{Graf partition}. Alternatively, we can define for a position $q \in M$ any $\C \in \mathcal{P}(\bN)$ as a \textbf{Graf cluster decomposition}, if $\C$ satisfies $q \in \Xi_{\C}^{(\delta)}$.
	\end{definition}
	
	\begin{remark} \label{rem:PropertiesGrafPartition}
		\quad \\
		We gather some useful statements from \cite[Chapter 12.6]{knauf2018mathematical} on the Graf partition:
		\begin{enumerate}[(i)]
			\item The Graf partition is only a measure-theoretic partition of the phase space in the sense that $\Xi_{\C}^{(\delta)} \cap \Xi_{\mathcal{D}}^{(\delta)} \neq \emptyset$ is possible, but we always have
			\begin{align*}
				\lambda^{nd}\left(\Xi_{\C}^{(\delta)} \cap \Xi_{\mathcal{D}}^{(\delta)}\right) = 0 \quad \text{ for } \C \neq \mathcal{D}.
			\end{align*}
			\item If two particles are in the same cluster, their distance is bounded above by a constant that only depends on $\delta$ (and the masses).
			\item \label{bul:GrafPartitionMinimalDistance} If two particles are not in the same cluster, their distance is bounded below by a positive constant that only depends on $\delta$ (and the masses).
			\item \label{bul:DeltaGrafPartitionSufficientlySmall} For some sufficiently small $0<\delta_0< \frac{1}{n}$ and all $\delta \leq \delta_0$, if for some position $q$ there are two different Graf cluster decompositions $\C$ and $\mathcal{D}$, then these two are comparable in the partial order for clusters.
			\item For sufficiently small $\delta$ the distance between particles from the same cluster is always smaller than the distance between any particles from different clusters. \hfill $\diamond$
		\end{enumerate}
	\end{remark}
	
	With the Graf cluster decomposition we can define a temporal decomposition into clusters for solution curves of e.g. the Hamiltonian problem and determine the final splitting of the particles:
	
	\begin{definition}[Temporal and final Graf cluster decomposition] \label{def:TempAndFinalClusterDecomposition} \quad \\
		Let $T^{+} > 0$ be an arbitrary time and $q: [0, T^{+}) \rightarrow M$ be a continuous function.
		\begin{enumerate}[(i)]
			\item For arbitrary $\delta \in (0,1)$ we call a family $\left(\C_t^{(\delta)}\right)_{t \in [0, T^{+})}$ with $\C_t^{(\delta)} \in \mathcal{P}(\bN)$ a \textbf{temporal Graf cluster decomposition}, if
			\begin{align*}
				q(t) \in \Xi_{\C_t^{(\delta)}}^{(\delta)} \quad \forall t \in [0, T^{+}).
			\end{align*}
			\item For $\delta \in (0,\delta_0)$ as in Remark~\ref{rem:PropertiesGrafPartition}, \eqref{bul:DeltaGrafPartitionSufficientlySmall} we call a temporal Graf cluster decomposition $\left(\C_t^{(\delta)}\right)_{t \in [0, T^{+})}$  the \textbf{coarsest temporal Graf cluster decomposition}, if
			\begin{align*}
				\C_t^{(\delta)} &= \bigvee_{\substack{\C \in \mathcal{P}(\bN) \\ q(t) \in \Xi_{\C}^{(\delta)}}} \C.
			\end{align*}
			\item Let for $\delta \in (0,\delta_0)$ $\left(\C_t^{(\delta)}\right)_{t \in [0, T^{+})}$ be the coarsest temporal Graf cluster decomposition. We define the \textbf{final Graf cluster decomposition} $\C_{\text{fin}}$ as
			\begin{align*}
				\C_{\text{fin}} := \bigwedge_{\delta \in (0,\delta_0)} \bigwedge_{s \in (0, T^{+})} \bigvee_{t \in [s, T^{+})} \C_t^{(\delta)}.
			\end{align*}
		\end{enumerate}
	\end{definition}

	\begin{remark} \label{rem:AsymptoticGrafClusterDecomposition}
		\quad
		\begin{enumerate}[(a)]
			\item Note that there is not a unique temporal Graf cluster decomposition, as there is some freedom at times where two cluster decompositions attain the maximum in $J^{(\delta)}$. However, we can get a well-defined expression for small values $\delta \leq \delta_0$ as in the definition of the \textbf{coarsest} temporal Graf cluster decomposition.\footnote{Analogously one could have used the \textbf{finest} temporal cluster decomposition to dissolve this ambiguity.} This follows as for small $\delta$ due to the comparability of all possible cluster decompositions the largest and smallest element of this finite chain (as we only have finitely many clusters) are well defined and suitable cluster decompositions as well, i.e. we have indeed
			\begin{align*}
				q(t) \in \Xi_{\mathcal{D}}^{(\delta)} \quad \text{ for } \mathcal{D} := \bigvee_{\substack{\C \in \mathcal{P}(\bN) \\ q(t) \in \Xi_{\C}^{(\delta)}}} \C.
			\end{align*}
			\item The final Graf cluster decomposition provides information about which particles come arbitrarily close to each other at times close to the \enquote{escape} time $T^{+}$ (note that the definition works for arbitrary curves $q$ with values in $M$, but we used the notation $T^{+}$ to suggest the correspondence to the escape time in the setting of solution curves, which is our main application).
			\item Note that in the definition of $\C_{\text{fin}}$ the sets $\bigwedge_{s \in (0, T^{+})} \bigvee_{t \in [s, T^{+})} \C_t^{(\delta)}$ are indeed getting finer for smaller $\delta$-values, hence we could talk of the limit of the cluster decomposition for $\delta \downarrow 0$. 
			\item Since we have a finite lattice and a monotonically decreasing sequence \linebreak $\left(\bigwedge_{s \in (0, T^{+})} \bigvee_{t \in [s, T^{+})} \C_t^{(\delta)}\right)_{\delta \in (0, \delta_0)}$, there is some positive value $\delta_{\min} >0$ for which the minimum is attained, i.e. 
			\begin{align*}
				\C_{\text{fin}} = \bigwedge_{s \in (0, T^{+})} \bigvee_{t \in [s, T^{+})} \C_t^{(\delta_{\min})}.
			\end{align*}
			As the sequence $\left(\bigvee_{t \in [s, T^{+})} \C_t^{(\delta_{\min})}\right)_{s \in (0, T^{+})}$ is decreasing as well, by the same argument there is a time $s_1 < T^{+}$ with
			\begin{align*}
				\C_{\text{fin}} = \bigwedge_{s \in (0, T^{+})} \bigvee_{t \in [s, T^{+})} \C_t^{(\delta_{\min})} = \bigvee_{t \in [s_1, T^{+})} \C_t^{(\delta_{\min})}.
			\end{align*}
			So particles from different clusters of the final cluster decomposition will never be in the same temporal cluster decomposition after the time $s_1$ and hence (by the properties of the Graf partition, Remark~\ref{rem:PropertiesGrafPartition},\eqref{bul:GrafPartitionMinimalDistance}) will always have a positive distance (that depends on $\delta_{\min}$).\hfill $\diamond$
		\end{enumerate}
	\end{remark}
	
	Given the final cluster decomposition, we usually restrict our attention to the particles of a single final subsystem, i.e. a cluster from the final cluster decomposition. In this paper, we care about two different types of final clusters: Clusters suffering a total collision and clusters having a non-collision singularity (i.e. they become unbounded) and that consist of exactly four particles. For the latter case, we will use the following result that is proven in \cite[Section 4.4]{quaschner2025improbability}
	
	\begin{proposition}[Existence of an asymptotic direction] \label{prop:ExistenceAsymptoticDirection} \quad \\
		For a non-collision singularity of a subsystem $U$ with three clusters $\left(C_1^{(i)}, C_2^{(i)}, C_3^{(i)}\right)$ with infinitely many passages of the messenger $C_2^{(i)}$ from $C_1^{(i)}$ to $C_3^{(i)}$ or vice versa under some assumptions on the external forces the following limits exist and satisfy
		\begin{align*}
			\lim\limits_{i \rightarrow \infty} \hat{q}_{C_1^{(i)}}\left(x^{(i)}\right) &=- \lim\limits_{i \rightarrow \infty} \hat{q}_{C_3^{(i)}}\left(x^{(i)}\right).
		\end{align*}
		We call $\lim\limits_{i \rightarrow \infty} \hat{q}_{C_1^{(i)}}\left(x^{(i)}\right)$ the \textbf{asymptotic direction} of the non-collision orbit and denote it by $\operatorname{AD}_U$.
	\end{proposition}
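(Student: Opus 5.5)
The plan is to reduce the statement to two claims about the relative vector $q^{I}_{C_1^{(i)},C_3^{(i)}}$ between the two outer clusters, evaluated along the passage times $x^{(i)}$. Claim (a) is that its direction $\hat q^{I}_{C_1^{(i)},C_3^{(i)}}$ converges. Claim (b) is that, at the passage times, $\hat q_{C_1^{(i)}}$ and $-\hat q_{C_3^{(i)}}$ both become asymptotically parallel to this relative direction. Together these give existence of both limits and the sign relation at once.

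For (b) I would work in coordinates relative to the center of mass of $U$, so that $m_{C_1}q_{C_1}+m_{C_2}q_{C_2}+m_{C_3}q_{C_3}=0$. At a passage time the messenger $C_2^{(i)}$ is by definition close to one of the outer clusters, say $C_1^{(i)}$, compared with the diverging outer distance. Hence
\begin{align*}
 q_{C_1}=\frac{m_{C_3}}{m_U}\,q^{I}_{C_1,C_3}+o\bigl(\|q^{I}_{C_1,C_3}\|\bigr),\qquad
 q_{C_3}=-\frac{m_{C_1}+m_{C_2}}{m_U}\,q^{I}_{C_1,C_3}+o\bigl(\|q^{I}_{C_1,C_3}\|\bigr).
\end{align*}
Normalising gives $\hat q_{C_1}+\hat q_{C_3}\to 0$. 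The assumptions on the external forces enter here only to guarantee that the center of mass of $U$, and its momentum, stay bounded up to $T^{+}$. This makes the reference frame harmless compared with $\|q^{I}_{C_1,C_3}\|\to\infty$.

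For (a) I would pass to Jacobi coordinates (Lemma~\ref{lem:PropJacobiCoordinates}) adapted to the splitting $\{C_1\cup C_2, C_3\}$ on a passage near $C_1$, and to $\{C_1, C_2\cup C_3\}$ on a passage near $C_3$. I would then control the internal angular momentum $L^{I}$ of the two-cluster pair. Its time derivative is a torque $q^{I}\wedge F$. Here $F$ collects the force of the messenger on the distant cluster, of order $\|q^{I}\|^{-1-\alpha}$, together with the external forces, which are small by assumption. So on each passage interval the torque is integrable, with bound $\operatorname{Const}\,\|q^{I}\|^{-\alpha}$ times the passage duration. Given a bound $\|L^{I}\|\le \operatorname{Const}$, the angular speed of $\hat q^{I}$ is at most $\|L^{I}\|/(m^{I}\|q^{I}\|^2)$, up to the correction from switching Jacobi coordinates between passages. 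This switch changes $q^{I}$ only by the messenger's offset, which is $o(\|q^{I}\|)$ and tends to zero in angle. The total angular variation on $[t_i,T^{+})$ is then bounded by
\begin{align*}
 \operatorname{Const}\int_{t_i}^{T^{+}}\frac{\diff t}{\|q^{I}(t)\|^{2}}+\sum_{j\ge i}\varepsilon_j,
\end{align*}
where the $\varepsilon_j$ are the angular jumps caused by the coordinate switches. Both terms tend to $0$, since $T^{+}<\infty$ and $\|q^{I}\|\to\infty$. This gives the Cauchy property of $\hat q^{I}$ and hence claim (a).

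The main obstacle is summability of the jumps $\varepsilon_j$ and the uniform bound on $L^{I}$. Both require quantitative growth of $\|q^{I}\|$ along passages, not merely divergence. I would try to obtain this as follows. Let $r_j$ denote $\|q^{I}\|$ at the $j$-th passage, and let $\rho_j$ be the messenger's distance to its host cluster at that time. The first step is to show that the energy exchange at each close encounter forces $r_{j+1}\ge(1+c)\,r_j$ for some $c>0$, following the analysis of the messenger's energy in the previous sections. Granting this, $\varepsilon_j\le\operatorname{Const}\,\rho_j/r_j$ decays geometrically. The torque estimate then likewise gives a convergent series, with term $r_j^{-\alpha}$ times the passage duration, which in turn keeps $L^{I}$ bounded.
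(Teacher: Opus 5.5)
Your part (b) is fine: at $x^{(i)}$ the messenger is at distance $1$ from its host, so the centre-of-mass identity gives $\hat q_{C_1^{(i)}}+\hat q_{C_3^{(i)}}\to 0$ once $\|q^{I}_{C_1,C_3}(x^{(i)})\|\to\infty$.

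The gap is in (a), at the step you treat as bookkeeping: the uniform bound on $L^{I}$. The quantity you track is not one continuous function of time. Comparing the two Jacobi decompositions of the internal angular momentum of $U$ (Lemma~\ref{lem:PropJacobiCoordinates}) shows that at every change of tree $L^{I}_{C_1,C_2\cup C_3}-L^{I}_{C_1\cup C_2,C_3}=L^{I}_{C_1,C_2}-L^{I}_{C_2,C_3}$. This is the difference of the messenger's angular momenta relative to the two outer clusters. At a switch the messenger is at distance of order one from one outer cluster and of order $r_j$ from the other. So a priori these terms are bounded only by its diverging momentum, respectively $r_j$ times it.

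Since $L^{I}_{C_1\cup C_2,C_3}$ is a constant multiple of $L_{C_3}$, the transverse part of the kick that $C_3$ receives in its close encounter enters your estimate only through such a jump, because you are tracking the other pair at that time. Integrating torques between switches does not see it. A growth law $r_{j+1}\ge(1+c)r_j$ says nothing about transverse kicks either. So your closing claim that the torque series keeps $L^{I}$ bounded does not follow.

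Separately, the torque bound $\operatorname{Const}\|q^{I}\|^{-\alpha}$ holds only while the messenger is near its host. On its approach to the other cluster the torque is of order $\|q^{I}\|\,h\,\rho^{-2-\alpha}$, with $h$ the messenger's distance from the outer line and $\rho$ its distance to the target. Only $h\le\rho$ together with a propagation estimate brings its integral down to $O(\tau_j)$, with $\tau_j$ the passage duration. This part is repairable, since $\sum_j\tau_j\le T^{+}$.

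What is missing is quantitative transverse control at the close encounters. The paper does not prove the proposition itself but imports it from \cite[Section 4.4]{quaschner2025improbability}. The ingredients it takes from there, and reuses in the proof of Lemma~\ref{lem:subsysDoubleCones}, are the bound $\|L_{C_1^{(k)}}(x^{(k)})\|/\|q_{C_1^{(k)}}(x^{(k)})\|\le\operatorname{Const}$ from \cite[Lemma 4.20]{quaschner2025improbability} and the orthogonal-component bounds of Proposition~\ref{prop:BoundsOrthogonalComponents}. The latter amount to $\|L_{C_1}\|,\|L_{C_3}\|\le\operatorname{Const}(\mathcal{L})$ on the relevant intervals, and to a messenger offset from the outer line of size $\lesssim\mathcal{L}/\|p^{I}\|$. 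With these, the rotation of the outer direction over a passage and the error from relabelling clusters are both at most of order $\tau_j/r_j^2$ (using Corollary~\ref{cor:SomeComparisons}). Convergence then follows from $T^{+}<\infty$ alone, with no growth rate needed.

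If you want to avoid these estimates, the minimum is to re-bound $L^{I}$ at each switch through the Jacobi decomposition of the total internal angular momentum of $U$. That quantity is nearly conserved, because the external torques are summable by condition~\eqref{bul:ConIntegratedForces}. This only gives $\|L^{I}\|\lesssim\mathcal{L}+\|p^{I}\|$, hence a rotation of roughly $1/r_j$ per passage. Convergence then genuinely needs $\sum_j 1/r_j<\infty$, which you only assert. Mere divergence of $r_j$ does not give it, so your sentence ``both terms tend to $0$'' is not justified as written.
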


	\subsection{The Poincaré Surface Method}
	\label{sec:PoincareSurfaceMethod} \quad
	
	The Poincaré surface method was developed in \cite{fleischer2018improbability} and generalized in \cite{fleischer2019improbability} in order to prove the improbability of some sets in phase space, e.g. the set of initial conditions leading to a collision singularity (compare \cite{fleischer2019improbabilityCollisions}), of initial conditions leading to non-collision singularities for $n=4$ bodies \cite{fleischer2018improbability} (with even more general assumptions than in this thesis) or of initial conditions leading to orbits where the asymptotic velocity does not exist \cite{knauf2018asymptotic}. \\
	This is our main tool in proving the improbability result of Theorem~\ref{thm:ImpResSubsystems}. In order to keep this paper self-contained we will recall some results following the presentation in \cite{fleischer2019improbability}.
	\\
	The general setting for the Poincaré surface method is a smooth manifold $P$ with a volume form $\Omega$ and a $C^{1}$ vector field $X$ for which the Lie derivative $\mathcal{L}_X \Omega = 0$, i.e. the flow generated by $X$ is volume preserving. This flow is denoted by $\Phi : D \rightarrow P$ and is defined on a maximal domain $D$ of the form
	\begin{align*}
		D := \left\{ (t,x) \in \R \times P \mid T^{-}(x) < t < T^{+}(x) \right\}
	\end{align*}
	where $T^{-}, T^{+} : P \rightarrow \R \cup \{\pm \infty\}$ are the escape times of the initial condition with $T^{+}>0, \; T^{-}<0$.
	
	\begin{definition}[Wandering points] \quad \\
		We call a point $x \in P$ \textbf{wandering} if and only if there exists a neighborhood $U_x$ of $x$ and some $t_{-} \in (0, T^{+}(x))$ with
		\begin{align*}
			U_x \cap \Phi(((t_{-}, T^{+}(x)) \times U_x) \cap D ) = \emptyset,
		\end{align*}
		i.e. after some time the neighborhood is not entered by the flow any more. The set of wandering points is denoted by $\operatorname{Wand}$.
	\end{definition}
	
	Wandering points are interesting for dynamical systems in general. The set of wandering points is in general not of $\Omega$-measure zero. But an important property is the following lemma \cite[Lemma 1.2]{fleischer2019improbability}:
	
	\begin{lemma}[Singularities and wandering points] \quad \\
		Let
		\begin{align*}
			\operatorname{Sing} := \{ x \in P \mid T^{+}(x)<\infty \}
		\end{align*}
		be the set of singular points. Then we have
		\begin{align*}
			\operatorname{Sing} \subseteq \operatorname{Wand}.
		\end{align*}
	\end{lemma}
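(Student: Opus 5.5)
The claim to prove is $\operatorname{Sing} \subseteq \operatorname{Wand}$: every point with finite forward escape time is wandering. The natural approach is by contradiction, using the continuity of the flow on its open maximal domain $D$ and the lower semicontinuity of $T^{+}$. Fix $x \in \operatorname{Sing}$, so $T^{+}(x) < \infty$. Standard ODE theory for a $C^1$ vector field gives two facts. First, $D$ is open and $\Phi$ is continuous on $D$. Second, $T^{+}$ is lower semicontinuous, so for every $\epsilon>0$ there is a neighborhood $V$ of $x$ with $T^{+}(y) > T^{+}(x) - \epsilon$ for all $y \in V$.

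The key observation is that points of a neighborhood that are revisited by the flow at late times must produce arbitrarily long orbits near $x$, and this contradicts finiteness of the escape time. Suppose $x$ is not wandering. Take a relatively compact neighborhood $U_x$ of $x$ whose closure $K$ is compact. Non-wandering means that for every $t_- \in (0, T^{+}(x))$ there exist $y \in U_x$ and $t \in (t_-, T^{+}(y))$ with $\Phi(t,y) \in U_x$.

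The plan is to pick $t_-$ close to $T^{+}(x)$ and shrink $U_x$, using the compactness of $K$ together with the uniform local existence time on compact sets. For an arbitrary $z \in \overline{U_x}$ there is a uniform $\tau>0$ with $T^{+}(z) > \tau$ and $T^{-}(z)<-\tau$. Given $y$ and $t$ as above, set $z = \Phi(t,y) \in U_x$. Then $T^{+}(y) = t + T^{+}(z) > t_- + \tau$. Choosing $t_- > T^{+}(x) - \tau/2$ and taking $U_x$ so small that $T^{+}$ is also bounded \emph{above} near $x$ would give a contradiction. However, $T^{+}$ is only lower semicontinuous, not upper semicontinuous.

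This missing upper bound is the main obstacle, and I would resolve it by a volume/recurrence argument rather than pointwise. I see two options. The first is to restrict to points where $T^{+}$ is finite and use that $\Phi(t,\cdot)$ maps $U_x$ into $\operatorname{Sing}$. The second, which I try first, is to notice that $y$ and $z=\Phi(t,y)$ lie on the same orbit. This gives $T^{+}(z)=T^{+}(y)-t$. Letting $t_- \uparrow T^{+}(x)$ and $U_x \downarrow \{x\}$ produces sequences $y_k \to x$ and $z_k \to x$ with $t_k \to T^{+}(x)$ or beyond. By lower semicontinuity at $x$, $T^{+}(z_k) \ge \tau$, so $T^{+}(y_k) \ge t_k+\tau$.

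Then I use backward flow from $z_k$. Since $z_k \to x$ and $x$ has a uniform two-sided existence interval, the points $\Phi(-t_k, z_k) = y_k$ are defined with $t_k$ approaching $T^{+}(x)$. Continuity of $\Phi$ near $(-s, x)$ for $s < -T^{-}(x)$ lets me transfer this to the orbit of $x$. Concretely, $\Phi(s, x)$ for $s$ near $T^{+}(x)$ should be close to $\Phi(s - t_k, \cdot)$ of a point near $x$, and that lies in a compact set. This would show the orbit of $x$ stays in a compact set as $s\uparrow T^{+}(x)$, contradicting the standard fact that a solution with finite escape time leaves every compact set. I expect making this last continuity transfer rigorous, uniformly in $k$, to be the delicate step.
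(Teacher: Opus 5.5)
\textbf{Gap: the negation of \enquote{wandering} drops the upper bound on return times.} In this paper, returns to $U_x$ only have to be excluded for times $t\in(t_-,T^{+}(x))$ with $(t,y)\in D$, so the return times are capped by the escape time of $x$ itself. You replace this by $t\in(t_-,T^{+}(y))$, and later allow $t_k\to T^{+}(x)$ \enquote{or beyond}. No contradiction can follow from that weaker hypothesis, because the correspondingly strengthened statement is false. Take the Newtonian two-body problem ($\alpha=1$) in its centre-of-mass frame and let $x$ lie on a collision orbit of negative energy. Every neighborhood of $x$ contains points with nonzero angular momentum on periodic Kepler ellipses, and these return to themselves at all times $mT$, $m\in\N$. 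So your negated condition holds even though $x\in\operatorname{Sing}$.

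This is why the missing upper semicontinuity of $T^{+}$ looks like an obstacle, and why your detours cannot help. The estimate $T^{+}(y)>t_-+\tau$ contains no contradiction. A recurrence or volume argument works against you rather than for you. And $\Phi(t,\cdot)$ does not map $U_x$ into $\operatorname{Sing}$.

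\textbf{How to close it.} With the correct negation, your last paragraph becomes a complete proof, and no uniformity in $k$ is needed.
Non-wandering yields $y_k,z_k\to x$ and $t_k\in(t_-^{(k)},T^{+}(x))$ with $t_-^{(k)}\uparrow T^{+}(x)$ and $\Phi(t_k,y_k)=z_k$; hence $t_k\to T^{+}(x)$.
Fix $\eta\in(0,\min\{T^{+}(x),-T^{-}(x)\})$ and any $s\in(T^{+}(x)-\eta,T^{+}(x))$. Then $(s,x)$ and $(s-T^{+}(x),x)$ both lie in the open set $D$.
For large $k$, continuity gives $\Phi(s,y_k)\to\Phi(s,x)$.
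The group property gives $\Phi(s,y_k)=\Phi(s-t_k,z_k)$, and this converges to $\Phi(s-T^{+}(x),x)$.
Hence $\Phi(s,x)=\Phi(s-T^{+}(x),x)$. The orbit of $x$ is therefore periodic with period $T^{+}(x)>0$, hence complete, contradicting $T^{+}(x)<\infty$. In your formulation, $\Phi(s,x)$ stays in the compact arc $\Phi([-\eta,0]\times\{x\})$ as $s\uparrow T^{+}(x)$, which is exactly the contradiction you were aiming for.

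This argument is purely topological: it uses neither volume preservation nor any semicontinuity of $T^{+}$. The paper itself does not prove the lemma; it quotes it from Fleischer--Knauf.
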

	
	We now need a sequence of \textbf{Poincaré surfaces} $\iota_m: \overline{\mathcal{H}}_m \rightarrow P$ consisting of pairwise disjoint closed codimension-one $\partial$-submanifolds (which may or may not have a boundary, with $\mathcal{H}_m$ we will denote the manifold without its boundary) with two conditions:
	
	\begin{itemize}
		\item $X$ shall be \textbf{transversal} to $\mathcal{H}_m$. From this we obtain a $(\dim(P)-1)$-form  $i_X \Omega$ that is a volume form when restricted to  $\mathcal{H}_m$, more precisely we name this form \linebreak $\mathcal{V}_m := \iota_m^* (i_X \Omega)$.
		\item $\lim\limits_{m \rightarrow \infty} \int_{\mathcal{H}_m} \mathcal{V}_m = 0$, i.e. the volume of the surfaces converges to 0.
	\end{itemize}
	With this sequence of transversal submanifolds we can define another set of points:
	
	\begin{definition} [Transition points] \quad
		\begin{enumerate}[(i)]
			\item For an arbitrary initial condition $x \in P$ we define the \textbf{forward orbit} through $x$ as
			\begin{align*}
				O^{+}(x) &:= \{ \Phi(t,x) \mid 0 \leq t < T^{+}(x) \}.
			\end{align*}	
			\item We define the set of \textbf{transition points} that hit almost all of the Poin\-caré surfaces as
			\begin{align*}
				\operatorname{Trans}_{\Phi} := \left\{ x \in P \mid \exists m_0 \in \N, \; \forall m \geq m_0: O^{+}(x) \cap \overline{\mathcal{H}}_m \neq \emptyset \right\}.
			\end{align*}
		\end{enumerate}
	\end{definition}
	Note that, for a given family of Poincaré surfaces, the set of transition points need not have measure zero. But intersecting the set of transition points with the set of wandering points, we get a set of measure zero, as Theorem A in \cite{fleischer2019improbability} states:
	\begin{theorem*}[Improbability of the set of wandering transition points] \quad \\
		Under the above assumptions on the system and the Poincaré surfaces we have
		\begin{align*}
			\Omega(\operatorname{Trans}_{\Phi} \cap \operatorname{Wand}) = 0,
		\end{align*}
		where we identify a volume form with its induced measure. Therefore the union of all wandering orbits that consist of transition points to the Poincaré surface is a set of $\Omega$-measure zero.
	\end{theorem*}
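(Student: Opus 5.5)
The plan is to localize using the wandering property and then to bound the measure of each localized piece by a flow-box volume estimate. The bound will be proportional to $\int_{\mathcal{H}_m}\mathcal{V}_m$, which tends to $0$ as $m\to\infty$.

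\emph{Step 1 (localization).} For every $x\in\operatorname{Trans}_\Phi\cap\operatorname{Wand}$ choose a neighborhood $U_x$ and a time $t_-(x)$ as in the definition of wandering points. Shrinking $U_x$ slightly, we may take it from a fixed countable base of the topology of $P$, and we may replace $t_-(x)$ by a rational number. This is possible provided the non-return property is stated uniformly for the points of $U_x$, i.e.\ $\Phi(t,z)\notin U_x$ for all $z\in U_x$ and all $t\in(t_-,T^+(z))$, which is how we read the condition $U_x\cap\Phi(((t_-,T^+(x))\times U_x)\cap D)=\emptyset$. We also record $m_0(x)$. Hence $\operatorname{Trans}_\Phi\cap\operatorname{Wand}$ is a countable union of sets of the form
\[
A=\{x\in U\mid \forall m\ge m_0:\ O^+(x)\cap\overline{\mathcal{H}}_m\neq\emptyset\},
\]
where $U$ is an open set satisfying the uniform non-return property with a fixed time $t_->0$. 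It therefore suffices to show $\Omega(A)=0$ for each such $A$.

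\emph{Step 2 (flow box).} For $m\ge m_0$ consider the map $F_m:\{(y,s)\in\overline{\mathcal{H}}_m\times[0,\infty)\mid -s>T^-(y)\}\to P$, $F_m(y,s):=\Phi(-s,y)$. Because $\mathcal{L}_X\Omega=0$, the form $\Omega$ is invariant under the flow. Hence $F_m^*\Omega=\pm\,\mathcal{V}_m\wedge \diff s$, and transversality of $X$ to $\mathcal{H}_m$ makes this a volume form up to sign. Every $x\in A$ lies in the image of $F_m$: some $s\ge0$ with $\Phi(s,x)=y\in\overline{\mathcal{H}}_m$ exists, and then $x=F_m(y,s)$. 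Moreover $x\in U$, so
\[
A\subseteq F_m(E_m),\qquad E_m:=\{(y,s)\mid \Phi(-s,y)\in U\}.
\]
$F_m$ need not be injective, but the measure of an image is at most the integral of the pulled-back density. This gives
\[
\Omega(A)\le\int_{\overline{\mathcal{H}}_m}\lambda^1\bigl(E_m^y\bigr)\,\mathcal{V}_m(y),
\]
where $E_m^y$ is the $s$-fiber of $E_m$ over $y$. The boundary $\partial\mathcal{H}_m$ has zero $\mathcal{V}_m$-measure and can be ignored.

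\emph{Step 3 (fiber length via non-return).} Fix $y$ and take $s>s'$ in $E_m^y$. Put $z:=\Phi(-s,y)\in U$. Then $\Phi(s-s',z)=\Phi(-s',y)\in U$, and non-return forces $s-s'\le t_-$. Thus $E_m^y$ has diameter at most $t_-$, so $\lambda^1(E_m^y)\le t_-$. Inserting this into Step 2 gives $\Omega(A)\le t_-\int_{\mathcal{H}_m}\mathcal{V}_m$ for all $m\ge m_0$. Letting $m\to\infty$ yields $\Omega(A)=0$, and countable subadditivity concludes the proof.

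The main obstacle I expect is making Step 2 rigorous. One must check measurability of $E_m$, justify the change-of-variables inequality for the non-injective map $F_m$ on a manifold with boundary and with an open domain limited by $T^-$, and verify that $F_m^*\Omega$ equals $\mathcal{V}_m\wedge\diff s$ up to sign. I would first prove this last identity at $s=0$ from $\iota_m^*(i_X\Omega)=\mathcal{V}_m$, and then propagate it to all $s$ using flow invariance of $\Omega$. A secondary point is the uniform version of non-return used in Step 1, which may require shrinking $U_x$.
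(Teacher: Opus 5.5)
Your Steps 2 and 3 are sound. Pulling $\Omega$ back along $(y,s)\mapsto\Phi(-s,y)$ gives $\pm\mathcal{V}_m\wedge\diff s$, each time fibre has length at most the non-return time, and $\int_{\mathcal{H}_m}\mathcal{V}_m\to 0$ finishes the argument. This flow-box mechanism is essentially what lies behind Theorem A of \cite{fleischer2019improbability}, which the paper quotes without proof.

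The gap is in Step 1. The definition of $\operatorname{Wand}$ forbids returns to $U_x$ only for times $t\in(t_-,T^{+}(x))$, i.e.\ up to the escape time of the base point $x$. You replace this by $t\in(t_-,T^{+}(z))$ for every $z\in U_x$. When $T^{+}(x)=\infty$ the two conditions coincide. When $T^{+}(x)<\infty$ yours is strictly stronger, and shrinking $U_x$ cannot recover it. Take the Kepler problem (or two bodies with vanishing total momentum) and a point $x$ on a collision orbit of negative energy. Every neighbourhood of $x$ contains points on periodic elliptic orbits, and these return to that neighbourhood at arbitrarily large times. So under your reading the lemma $\operatorname{Sing}\subseteq\operatorname{Wand}$ is false. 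What you prove then says nothing about singular orbits, which are exactly the orbits to which the paper applies the theorem.

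The repair is short, because Steps 2 and 3 use only measurability of the localizing set and a bound on return times, not openness. Write
$\operatorname{Trans}_\Phi\cap\operatorname{Wand}\subseteq(\operatorname{Trans}_\Phi\cap\operatorname{Wand}\cap\{T^{+}=\infty\})\cup(\operatorname{Trans}_\Phi\cap\operatorname{Sing})$.
\begin{itemize}
\item The first set is handled by your argument verbatim, since there the two readings agree.
\item For the second set, fix $\varepsilon>0$. In place of $U$, use the Borel sets $S_a:=\{z\in P\mid a\le T^{+}(z)<a+\varepsilon\}$ for $a\in\varepsilon\N_0$; they are Borel because $T^{+}$ is lower semicontinuous, $D$ being open.
\item Since $T^{+}(\Phi(-s,y))=T^{+}(y)+s$, the fibre $\{s\ge 0\mid \Phi(-s,y)\in S_a\}$ lies in an interval of length at most $\varepsilon$.
\item Step 3 then gives the bound $\varepsilon\int_{\mathcal{H}_m}\mathcal{V}_m\to 0$ for each $a$ and each $m_0$.
\end{itemize}
Hence $\Omega(\operatorname{Trans}_\Phi\cap\operatorname{Sing})=0$ without any wandering hypothesis. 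Together with the first part, this proves the theorem for the definition of $\operatorname{Wand}$ as actually stated.
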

	
	With this theorem we need to find a sequence of Poincaré surfaces for which our set of considered singularities is contained in the set of transition points. In order to prove the improbability of singularities in total, sometimes different Poincaré surfaces need to be defined. For example the improbability of collision singularities was proven with some sequence of Poincaré surfaces for arbitrary $n$ and proofs for the improbability of non-collision singularities could be using different Poincaré surfaces. This depends on the structure of the orbits corresponding to the different types of singularities.

	\subsection{Classical Results - the Theorems of Painlevé and von Zeipel}
	\label{sec:ClassicalResults} \quad
	
	The theorems of Painlevé and von Zeipel are classical results that qualitatively describe singularities in the $n$-body problem of celestial mechanics. Here we will use generalized versions that adapt the statement to subsystems of the final cluster decomposition. The proofs can be found in \cite[Sections 3.2,3.3]{duignan2026blowup}.
	
	\begin{theorem*}[Painlevé for subsystems] \quad \\
		Let $(q(t,x), p(t,x))_{t \in [0, T^{+}(x))}$ be a solution curve of the Hamiltonian system \eqref{eq:Hamiltonfunktion} with initial condition $x \in \operatorname{Sing}$, i.e. the escape time $T^{+}(x)$ is finite. Let $\C$ be the final cluster decomposition. Then we have for any $C \in \C$ with $\left|C\right|>1$:
		\begin{align*}
			\lim\limits_{t \uparrow T^{+}(x)} \min\left\{ \norm{q_i(t, x) - q_j(t, x)} \mid i, j \in C, i \neq j \right\} = 0.
		\end{align*}
	\end{theorem*}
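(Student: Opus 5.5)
We argue by contradiction, following Painlevé's classical argument adapted to the subsystem $C$. Assume that for some $C \in \C$ with $\left|C\right|>1$ the quantity
\begin{align*}
	r_C(t) := \min\left\{ \norm{q_i(t,x)-q_j(t,x)} \mid i,j \in C,\ i\neq j\right\}
\end{align*}
does not tend to $0$ as $t \uparrow T^{+}(x)$. Then there are $\epsilon>0$ and a sequence $t_k \uparrow T^{+}(x)$ with $r_C(t_k) \geq \epsilon$. By Remark~\ref{rem:AsymptoticGrafClusterDecomposition}(d), after the time $s_1$ particles from different final clusters keep a distance bounded below by a constant depending on $\delta_{\min}$. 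Hence at the times $t_k$ every interparticle distance involving a particle of $C$ is bounded below by $\min\{\epsilon, \Con{1}(\delta_{\min})\}$.

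The key step is a uniform local existence estimate. If at time $t_k$ all interparticle distances of particles in $C$ are at least $\eta>0$, then there is a time $\tau(\eta,\ldots)>0$ such that on $[t_k, t_k+\tau]$ these distances stay above $\eta/2$. The difficulty is that $\tau$ must be independent of $k$, so we need an a priori bound on the relative velocities. We obtain it from the energy. The total energy $H$ is conserved. The potential energy of pairs in different final clusters is bounded after $s_1$. Subsystems not containing collisions contribute bounded negative potential. The delicate point is that other clusters may collide, so their internal kinetic energy blows up and energy can flow between subsystems. Here we would use the Sperling-type control of energy exchange from \cite{sperling1972collision,duignan2026blowup}: the internal energy $H_C^{I}$ of each final cluster has a finite limit as $t\uparrow T^+$, since $\frac{d}{dt}H_C^I$ is given by intercluster forces, which are bounded, times internal velocities, which are integrable. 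The integrability follows because $\int \norm{\dot q}\,dt$ is finite for singular orbits with $\alpha<2$.

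With $H_C^{I}(t_k)$ bounded and the internal potential of $C$ bounded at $t_k$ (all distances $\geq\epsilon$), the internal kinetic energy $K_C^{I}(t_k)$ is uniformly bounded. The forces on particles of $C$, namely the internal ones and the external ones from other final clusters, are bounded as long as the distances stay $\geq \eta/2$. A Gronwall/ODE-comparison argument then gives a uniform $\tau$. Consequently $r_C$ stays positive on $[t_k,t_k+\tau]$, and the internal accelerations remain bounded there.

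Finally, we derive the contradiction. Choose $k$ with $T^{+}(x)-t_k<\tau$. Then on $[t_k,T^{+})$ all forces acting on particles of $C$ relative to $C$'s dynamics are bounded. The internal motion of $C$ is then uniformly Lipschitz, so the internal positions converge and no internal distance blows up. Combined with the boundedness of the external forces, the particles of $C$ cannot separate to infinity or merge, and they stay within bounded distance of each other. This contradicts the definition of $\C_{\text{fin}}$: the internal configuration converges to a noncollision configuration with all distances positive and finite. For $\delta$ small, at times near $T^{+}$ the Graf decomposition would then split $C$ into proper subclusters whenever the minimal distance within $C$ stays bounded below, because by Remark~\ref{rem:PropertiesGrafPartition} particles in the same temporal cluster for arbitrarily small $\delta$ must come arbitrarily close. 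Hence $C$ would not be a block of $\C_{\text{fin}}$.

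The main obstacle is the uniform a priori velocity bound in the presence of collisions in other clusters, i.e.\ the energy-exchange control. I would try first to establish the convergence of $H_C^{I}$ via integrability of $\norm{\dot q_C}$ and of the intercluster force work.
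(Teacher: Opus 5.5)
\textbf{There is a genuine gap in the step you yourself flag as the key one: the uniform bound on $K_C^{I}(t_k)$.} You get it from convergence of $H_C^{I}$, which you justify by integrability of the internal velocities ``because $\int\norm{\dot q}\,\diff t$ is finite for singular orbits with $\alpha<2$''. That claim is false for non-collision singularities. If $\norm{q(t)}\to\infty$ as $t\uparrow T^{+}$, then $\int_0^{T^{+}}\norm{\dot q}\,\diff t\geq\limsup_{t\uparrow T^{+}}\norm{q(t)-q(0)}=\infty$. The same holds for the internal velocities of any final cluster whose internal moment of inertia diverges.

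The Sperling-type energy-exchange control of \cite{sperling1972collision,duignan2026blowup} gives a bounded change of $h_B^{I}$ only for clusters with $J_B^{I}\to 0$; that is how it is used in Section~\ref{sec:CollidingSubsystems}. For an unbounded four-particle subsystem the paper only shows that $H^{I}_{\{1,2,3,4\}}$ grows by at most an additive constant per passage (Proposition~\ref{prop:BoundsPhaseQuasiCollision}). Accordingly, its finite-volume sets only impose $\betr{H_{\{1,2,3,4\}}}\leq\operatorname{Const}\log\norm{p_{C_3}}$.

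Your side remark that non-colliding subsystems contribute bounded negative potential fails for the same clusters, since they contain ever tighter pairs. The total-energy route does not rescue the argument either: it would need lower bounds on $H^{I}_{C'}$ for the other final clusters, and these are unavailable when those clusters are unbounded. Without an upper bound on $H_C^{I}(t_k)$ you have none on $K_C^{I}(t_k)$. Your existence time, roughly $\tau_k\sim\epsilon/\sqrt{K_C^{I}(t_k)}$, may then shrink fast enough that $t_k+\tau_k<T^{+}$ for every $k$, and no contradiction follows.

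This is not a marginal case. If $J_C^{I}\to 0$, then all distances in $C$ go to zero and the claim is trivial. So the theorem has content only for clusters without total collision. In particular it covers the unbounded subsystems to which the paper applies it (conditions (iv) and (x) of Definition~\ref{def:NC_SingXiaType}), and that is exactly where your integrability claim breaks down.

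What is missing is a mechanism that bounds the energy pumped into $C$ locally in time, without global integrability. One possible ingredient: after $s_1$ the external forces on $C$ are $C^{1}$ with bounded derivative. Hence the power delivered to a tight pair $D$ is only $O\bigl(\norm{q_D^{I}}\,\norm{p_D^{I}}\bigr)$, which is the cancellation used in the proof of Proposition~\ref{prop:BoundsPhaseQuasiCollision}. This would have to be combined with control of the work done on $C$ during the phases where all distances in $C$ stay at least $\epsilon/2$.

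The rest of your outline is sound once such a bound is available. This includes the uniform local-existence step and the final contradiction. You can state the latter more directly as $\liminf_{t\uparrow T^{+}}\min\{\norm{q_i-q_j}\mid i,j\in C,\ i\neq j\}=0$. This follows because $J^{I}_{\C}\leq\delta^{\left|\C\right|}$ on $\Xi^{(\delta)}_{\C}$, and Remark~\ref{rem:AsymptoticGrafClusterDecomposition}(d) keeps the connecting chains inside $C$.

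Note that the paper does not prove this theorem itself; it refers to \cite[Sections 3.2, 3.3]{duignan2026blowup}.
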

	
	Painlevé's theorem provides some qualitative insight into the orbits of singularities, as the minimal distance has to converge to zero, i.e. there is at least one energy reservoir in the system. \\
	The second important qualitative result is that of von Zeipel which characterizes collision and non-collision singularities by their moment of inertia (compare Definition~\ref{def:System_quantities}). Additionally, we can adapt this statement for subsystems of the final cluster decomposition.
	
	\begin{theorem*}[von Zeipel] \quad \\
		Let $(q(t,x), p(t,x))_{t \in [0, T^{+}(x))}$ be a solution curve of the Hamiltonian system \eqref{eq:Hamiltonfunktion} with initial condition $x \in \operatorname{Sing}$. Let $\C$ be the final cluster decomposition. Then we have for any $C \in \C$ with $\left|C\right|>1$:
		\begin{align*}
			\lim\limits_{t \uparrow T^{+}(x)} J_C^{I}(q(t,x)) \in [0, \infty]
		\end{align*}
		exists and the limit is finite if and only if it is zero, in which case all particles of $C$ have a collision at some point in $\R^d$.
	\end{theorem*}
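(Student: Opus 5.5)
We prove the statement for one fixed final cluster $C \in \C$ with $\left|C\right|>1$. The plan is to show that $J_C^{I}$ is eventually "almost convex" along the orbit, which gives existence of the limit. The dichotomy then follows from Painlevé's theorem for subsystems together with the definition of the final cluster decomposition.

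\emph{Step 1: separating internal and external forces.} By Remark~\ref{rem:AsymptoticGrafClusterDecomposition}(d) there is a time $s_1<T^{+}(x)$ after which particles of $C$ and particles of $\bN\setminus C$ have mutual distances bounded below by a constant depending only on $\delta_{\min}$. On $[s_1,T^{+})$ the external forces $F_i^{\text{ext}}$, $i \in C$, are therefore bounded, and so are their spatial derivatives. Next apply the Lagrange--Jacobi computation to the internal coordinates of $C$, using the homogeneity $\scalprod{y}{\nabla V_{i,j}(y)}=-\alpha V_{i,j}(y)$. Writing $U_C:=\sum_{i<j\in C}V_{i,j}\le 0$, this gives
\begin{align*}
	\ddot J_C^{I} = 2K_C^{I} + \alpha U_C + \mathcal{O}\left(\Con{1}\sqrt{J_C^{I}}\right)
	= 2H_C^{I} + (2-\alpha)\betr{U_C} + \mathcal{O}\left(\Con{1}\sqrt{J_C^{I}}\right).
\end{align*}
The error term arises from pairing the internal positions $q_{\{i\}}-q_C$ with the bounded external forces.

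\emph{Step 2 (main obstacle): a lower bound on $H_C^{I}$.} We have $\frac{d}{dt}H_C^{I}=\sum_{i\in C}\scalprod{\dot q_i-\dot q_C}{F_i^{\text{ext}}}$. The velocities inside $C$ may blow up, so this work term is not obviously integrable. I would handle it by integrating by parts: the time integral of this work term equals the boundary term $\left[\sum_{i \in C}\scalprod{q_{\{i\}}-q_C}{F_i^{\text{ext}}}\right]$ minus $\int \sum_{i \in C}\scalprod{q_{\{i\}}-q_C}{\tfrac{d}{dt}F_i^{\text{ext}}}\,\diff t$. The derivative $\frac{d}{dt}F_i^{\text{ext}}$ is bounded by $\operatorname{Const}$ times the velocities of the particles involved, because the external forces have bounded derivatives. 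It then remains to bound $\int_{s_1}^{T^{+}}\norm{p(t)}\,\diff t$, which is a Sperling-type estimate of the energy exchange between subsystems \cite{sperling1972collision}. For this one uses $\norm{p}\lesssim \sqrt{\betr{V}+\operatorname{Const}}$ together with the fact that near-collisions cost only integrable time. The result is
\begin{align*}
	H_C^{I}\ge -\operatorname{Const}\left(1+\sup_{s\le t}\sqrt{J_C^{I}(s)}\right).
\end{align*}
This is the step I expect to be the genuinely hard one.

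\emph{Step 3: existence of the limit.} Combining Steps 1 and 2 gives $\ddot J_C^{I}\ge (2-\alpha)\betr{U_C}-\operatorname{Const}\left(1+\sup\sqrt{J_C^{I}}\right)$. If $J_C^{I}$ is unbounded, a standard comparison argument shows that $J_C^{I}\to\infty$. Indeed, once $J_C^{I}$ is large and $\dot J_C^{I}>0$, the negative part of $\ddot J_C^{I}$ is too small on the finite interval $[s_1,T^{+})$ to bring it back down. If $J_C^{I}$ is bounded, then $J_C^{I}+A t^2$ is convex for a suitable constant $A$, hence eventually monotone, and therefore has a limit in $[0,\infty)$ because $T^{+}<\infty$. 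In both cases $\lim_{t\uparrow T^{+}}J_C^{I}$ exists in $[0,\infty]$.

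\emph{Step 4: a finite limit is zero.} Suppose the limit is finite. Then $\dot J_C^{I}$ stays bounded from above near $T^{+}$, so integrating the lower bound on $\ddot J_C^{I}$ gives $\int^{T^{+}}\betr{U_C}\,\diff t<\infty$, and hence $\int^{T^{+}}K_C^{I}\,\diff t<\infty$. Together with the bounded external momenta this yields $\int^{T^{+}}\norm{\dot q_i}\,\diff t<\infty$, so every $q_i$, $i\in C$, converges. By Painlevé's theorem for subsystems the limit configuration contains at least one collision. Now partition $C$ into the collision components of this limit configuration. Particles in distinct components stay at a positive distance from each other near $T^{+}$, so they lie in distinct Graf clusters for all small $\delta$ and all late times. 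This contradicts the minimality of $C$ in $\C_{\text{fin}}$ as a meet of joins, unless there is only one component. Hence all particles of $C$ collide at one point of $\R^d$, and the limit of $J_C^{I}$ equals $0$. Conversely, a total collision of $C$ obviously gives the limit $0$.
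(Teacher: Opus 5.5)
There is a genuine gap, and it sits exactly where the content of von Zeipel's theorem lies. In Step~4 you infer from the finiteness of $\lim J_C^{I}$ that $\dot J_C^{I}$ stays bounded from above near $T^{+}$. This does not follow. Almost-convexity ($J_C^{I}+At^2$ convex) only bounds $\dot J_C^{I}$ from \emph{below}, and a convex function with a finite limit at a finite endpoint can have derivative tending to $+\infty$; take $-\sqrt{T^{+}-t}$.

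The Lagrange--Jacobi identity does not repair this. For bounded $J_C^{I}$, Steps~1--2 together with $\betr{\dot J_C^{I}}\le 2\sqrt{J_C^{I}K_C^{I}}$ give at best $\ddot J_C^{I}\ge c\,(\dot J_C^{I})^2-\operatorname{Const}$. But $\dot J_C^{I}=(T^{+}-t)^{-1/2}$ satisfies this inequality while $J_C^{I}$ stays bounded. Hence $\int^{T^{+}}\betr{U_C}\,\diff t<\infty$, $\int^{T^{+}}K_C^{I}\,\diff t<\infty$ and the convergence of the $q_i$ are not obtained.

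That a bounded cluster cannot keep oscillating up to $T^{+}$ is precisely the nontrivial part of von Zeipel's theorem. It needs an additional dynamical argument on the cluster structure, classically e.g.\ an induction over subclusters. The paper does not reprove it and instead refers to the Sperling-type analysis in \cite{duignan2026blowup}. Your closing argument is fine once convergence is known: convergent positions inside one final cluster must coalesce, by minimality of $C$ in $\C_{\text{fin}}$ and the Graf partition, which even makes Painlev\'e's theorem superfluous there.

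Step~2 also fails, and Step~3 rests on it in both of its cases. In the bounded case, the convexity of $J_C^{I}+At^2$ needs the lower bound on $H_C^{I}$. Your integration by parts requires $\int_{s_1}^{T^{+}}\norm{p(t)}\,\diff t<\infty$, but up to mass factors this integral dominates the path length of every particle.

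If $J_C^{I}$ is unbounded, some $q_i$ with $i\in C$ is unbounded, since $q_C$ converges. Then the integral is infinite, exactly in the case where Step~3 needs the energy bound to conclude $J_C^{I}\to\infty$. The same happens through the velocities of particles of other unbounded final clusters, which enter $\dot F_i^{\text{ext}}$.

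In the bounded case, finiteness of $\int\norm{\dot q_i}\,\diff t$ for $i\in C$ already \emph{is} the convergence of the positions, i.e.\ the conclusion. So invoking it as a ``Sperling-type estimate'' where ``near-collisions cost only integrable time'' is circular. Consequently neither the lower bound on $H_C^{I}$ nor the existence of $\lim J_C^{I}$ is established by your argument.
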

	
	Additionally, note that the center of mass of any cluster of the final cluster decomposition has a definite limit in $\R^d$. This is due to the fact that on the cluster momenta only bounded forces (from particles of the other clusters) can act and the total time is finite. Hence, subsystems having a total collision stay within a bounded set in $\R^d$, whereas the subsystems experiencing a non-collision singularities are unbounded (we will thus call them unbounded subsystems, as this is a shorter notation).
	\\
	From these classical results one can already get a qualitative course of non-collision singularities. It is not only necessary that some particles always have to be close together (compare Painlevé's theorem), but in order for the system to diverge, particles must come close to each other and then separate again. But close to the escape time, there has to be another interaction for these particles with different particles than before (a turnaround is only possible due to large forces, if there is only a small amount of time left).
	
	\section{Considerations of a colliding subsystem}
	\label{sec:CollidingSubsystems}
	
	The case of a cluster $B$ from the final cluster decomposition that has a bounded internal moment of inertia, i.e. $\lim\limits_{t \uparrow T^{+}} J_B^{I}(t) = 0$, was treated by Duignan and Quaschner \cite{duignan2026blowup}.
	For convenience we recall the important statements: The total change of the energy of the subsystem $h_B^{I}$ is bounded, hence also the total energy of the subsystem is bounded. From this one can derive that a perturbed collision orbit is similar to a collision orbit without perturbation. From this analysis, there was derived in \cite[Proposition 5.3]{duignan2026blowup} a set of finite volume in which the colliding subsystem will stay close to the escape time (depending on a bound $A$ for some system quantities): 
	\begin{proposition} \label{prop:SubsetCollisions}
		Let $A>0$ be given. Consider a system of $n$ particles $(q,p)$ evolving according to
		\begin{equation*}
			\dot{q}_j = \frac{p_j}{m_j},\qquad \dot{p}_j = -\frac{\partial V}{\partial q_j} + F_j(t),\qquad j = 1,\dots, n 
		\end{equation*}
		with attractive pair potentials and bounded external forces satisfying
		\[
		\sup_{t\in(0,T^+),\,j\in C} \norm{F_j(t)} 	\leq A .
		\]
		Then there is a set of finite volume $\mathcal{S}$ (depending on $A$), such that any collision solution satisfying
		\begin{equation}\begin{split}
				\lim_{t\uparrow T^+} J_C^I(t) = 0,\quad
				\limsup_{t\uparrow T^+}\norm{q_C(t)} \leq A,\quad
				\limsup_{t\uparrow T^+}\norm{p_C(t)} \leq A,\\
				\sup_{t\in(0,T^+)} |h_C^I(t)| \leq A,\quad 
				\limsup_{t\uparrow T^+} -V_C(Q(t))  \leq A
			\end{split}
		\end{equation}
		will stay in this set close to the escape time.
	\end{proposition}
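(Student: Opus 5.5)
The plan is to build $\mathcal{S}$ as a product of an external block and an internal block, and to show that the internal block has finite Lebesgue measure after passing to blow-up (McGehee-type) coordinates centred at the total collision. The external block is easy. The centre of mass $q_C$ and momentum $p_C$ eventually satisfy $\norm{q_C},\norm{p_C}\le A+1$, a compact set in $\R^{2d}$. The same holds for all coordinates of the particles outside $C$ that enter the statement. So everything reduces to the internal phase space of $C$, of dimension $2D$ with $D:=d(\left|C\right|-1)$, written in Jacobi coordinates as in Lemma~\ref{lem:PropJacobiCoordinates}. Those coordinates are symplectic, so Lebesgue measure is preserved.

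For the internal block I would write $q^I = r s$ with $r=\sqrt{2J_C^I}$ and $s$ on the mass-ellipsoid $\Sd$-type sphere. The momentum splits as $p^I=(\rho/r)\, \M s + \pi$, with radial part $\rho\sim \dot J_C^I/r$ and tangential part $\pi$. The key dynamical inputs are the following.
\begin{enumerate}[(i)]
\item Energy control. The bound $\sup\left|h_C^I\right|\le A$ together with bounded external forces gives $K_C^I = h_C^I - V_C^I \le A - V_C^I$. Here $-V_C^I(q^I)= r^{-\alpha}U(s)$ with $U$ homogeneous.
\item Sundman/Lagrange--Jacobi. We have $\ddot J_C^I = 2K_C^I + \alpha V_C^I + \mathcal{O}(\operatorname{Const} A\, r)$. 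Hence near collision $J_C^I$ is eventually monotone, $\dot J_C^I<0$, and Sundman's inequality shows that the tangential part $\norm{\pi}^2$ is bounded by a small multiple of $r^{-\alpha}U(s)$. Concretely, I aim for $\norm{\pi}^2\le \epsilon(r)\, r^{-\alpha} U(s)$ with $\epsilon(r)\to0$, or at least $\norm{\pi}^2 \le \operatorname{Const}(A)\, r^{-\alpha}$ uniformly in $s$. This is the standard fact that collision orbits lose their angular momentum and their tangential velocity becomes negligible, as in Sperling's estimates.
\end{enumerate}
I would then take $\mathcal{S}_{\text{int}}$ to be the set of points with $r\le r_0$, $\left|\rho\right|^2 \le \operatorname{Const}(A) + 2 r^{-\alpha}U(s)$ and $\norm{\pi}^2\le \operatorname{Const}(A)\, r^{-\alpha}$. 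By (i) and (ii), every solution in the statement enters this set and stays there close to $T^+$.

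For the volume, the measure in these coordinates is $r^{D-1}\diff r\,\diff s\,\diff\rho\,\diff\pi$, where $\pi$ ranges over a $(D-1)$-dimensional space. The $\pi$-integral contributes $r^{-\alpha(D-1)/2}$, and the $\rho$-integral contributes at most $\operatorname{Const}(1+r^{-\alpha/2}U(s)^{1/2})$. The total radial integrand is therefore $r^{D-1-\alpha(D-1)/2-\alpha/2}\,U(s)^{1/2}$, and its exponent exceeds $-1$ because $\alpha<2$. What remains is $\int_{\Sd} U(s)^{1/2}\diff s$. This is finite, since $U$ has singularities of order $\left|s_i-s_j\right|^{-\alpha}$ on codimension-$d$ sets with $d\ge2>\alpha/2$.

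The main obstacle is step (ii): making the tangential bound uniform near \emph{partial} collision directions $s$. There $U(s)$ is huge, and a naive bound $\norm{\pi}^2\lesssim r^{-\alpha}U(s)$ would give a factor $U(s)^{(D-1)/2}$, which is not integrable on $\Sd$. I would first try to prove the uniform bound $\norm{\pi}^2\le \operatorname{Const}(A)r^{-\alpha}$ along the tail of the orbit, using McGehee's variable $v=\rho\, r^{\alpha/2}$. The idea is that $v$ is monotone up to $\mathcal{O}(A r^{1+\alpha/2})$ errors, that the energy relation forces $\norm{\pi}^2 r^{\alpha}= 2U(s)-v^2+\mathcal{O}(r^\alpha)$, and that the integrated decay of $\norm{\pi}^2 r^{\alpha}$ along the tail is controlled by the total variation of $v$.

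If this fails near partial collisions, the fallback is to decompose further. Near a partial-collision direction I would treat the close subcluster as its own collision in the radial variable of that subcluster, and repeat the above estimate inductively over the cluster lattice. Each close subcluster then contributes only an integrable factor of the same type, $r'^{\,D'-1-\alpha D'/2}$.
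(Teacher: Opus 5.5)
\textbf{Gap: step (ii) is not established, and the last hypothesis of the statement is never used.}

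Your argument rests on step (ii), and that step is not proved. The appeal to Sundman's inequality points the wrong way. That inequality bounds $\norm{L}^2$ by $J$ times the non-radial kinetic energy, so it gives a \emph{lower} bound on the tangential part in terms of $L$, never an upper bound on $\norm{\pi}$. Moreover, for $\betr{C}\ge 3$ the tangential momentum also contains shape deformations that $L$ does not see.

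The McGehee sketch also falls short. Write $u=r^{\alpha/2}\pi$ and let $\tau$ be McGehee time. Near-monotonicity of $v$ only yields the integrated bound $\int \norm{u}^2\diff\tau<\infty$. A pointwise bound $\norm{\pi}^2\le\operatorname{Const}(A)\,r^{-\alpha}$ along the whole tail would need control of $\frac{\diff}{\diff\tau}\norm{u}^2$. That derivative contains $\scalprod{u}{\nabla U(s)}$, which is unbounded precisely near partial-collision shapes. This is the obstacle you name yourself and then leave open, with only a vague inductive fallback. As written, you have not shown that the orbit enters and stays in your $\mathcal{S}_{\text{int}}$.

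The missing idea is to use the last hypothesis. Here $Q$ must be the normalized (blown-up) configuration, because $-V_C(q(t))\to\infty$ at a total collision. So $\limsup_{t\uparrow T^+}-V_C(Q(t))\le A$ says exactly that $U(s(t))\le A+1$ eventually. That is the uniformity near partial collisions you were missing, and no dynamics is needed to obtain it.

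With this bound, energy gives $K_C^I\le A+(A+1)r^{-\alpha}$, hence $\norm{p^I}\le\operatorname{Const}(A)\,r^{-\alpha/2}$ for $r\le1$. In particular your tangential bound comes for free. You can then take $\mathcal{S}$ to be the set cut out by the hypotheses themselves:
\begin{itemize}
\item $J_C^I\le r_0^2/2$;
\item $\norm{q_C},\norm{p_C}\le A+1$;
\item $h_C^I\le A$;
\item $-V_C(Q)\le A+1$.
\end{itemize}
Containment near $T^+$ is then immediate. At fixed $q^I=rs$ the momentum lies in a ball of volume $\lesssim r^{-\alpha D/2}$. The volume is therefore bounded by $\operatorname{Const}(A)\int_0^{r_0}r^{D-1-\alpha D/2}\diff r<\infty$, since $D(1-\alpha/2)>0$. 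Neither $\int U(s)^{1/2}$ over the sphere nor any induction over the cluster lattice is needed.

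This is how the proposition, quoted from \cite[Proposition 5.3]{duignan2026blowup}, is set up. The fact that every collision orbit satisfies these bounds for \emph{some} $A$ is a separate asymptotic statement, which the paper uses via exhaustion in Section~\ref{sec:PoincareSurfaces}.

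Two minor points. Writing $p^I=(\rho/r)\M s+\pi$ is inconsistent with your bound $\betr{\rho}^2\le\operatorname{Const}+2r^{-\alpha}U(s)$; you mean $p^I=\rho\,\M s+\pi$. Also, nothing about particles outside $C$ needs to be bounded.
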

	
	\section{Considerations of unbounded subsystems}
	\label{sec:UnboundedSubsystem}
	
	In this section, we consider an unbounded subsystem $U$ of the final cluster decomposition with exactly $\left|U\right| = 4$ particles. For notational simplicity, we will assume that $U=\{1,2,3,4\}$. The main analysis for such subsystems was carried out in \cite{quaschner2025improbability}. In order to apply the results from this paper, we have to check the following properties  \cite[Definition 3.1, 4.1]{quaschner2025improbability}
	
	\begin{definition}[Extended conditions for non-collision singularities of Xia-Type with four particles] \label{def:NC_SingXiaType} \quad \\
		Consider a system in which the equations of motion for positions $q$ and momenta $p$ for $n=4$ particles moving in $d$-dimensional space are given as
		\begin{align}
			\begin{split}
				\dot{q}_i(t) &= \frac{p_i(t)}{m_i}, \\
				\dot{p}_i(t) &= \sum_{\substack{j=1, \\ j \neq i}}^{n} \alpha Z_{i,j} \frac{q_i(t)- q_j(t)}{\norm{q_i(t)- q_j(t)}^{2+\alpha}} + F_i(t)
			\end{split}
			\label{eq:EquationsMotionsWithForces}
		\end{align}
		with interaction constants $Z_{i,j} = - m_i \cdot m_j$ 
		and \textbf{external forces}
		\begin{align}
			F_i(t) = \sum_{l=1}^{k} m_i W_l(q_i(t) - c_l(t)) \label{eq:ForcesAreCenterlike}
		\end{align}
		for some fixed $k \in \N$, continuous functions $c_l$ and functions
		\begin{align*}
			W_l \in C^{1}(\R^{d}\setminus\{0\}, \R^d)
		\end{align*}
		such that for any $\delta > 0$ the restrictions ${\left(W_l\right)}_{\mid \R^{d} \setminus B_{\delta}(0)}$ and the derivative ${\left(D W_l\right)}_{\mid \R^{d} \setminus B_{\delta}(0)}$ are bounded by some suitable constant $\operatorname{Const}_{W, \delta} \in (0, \infty)$ for any $l \in \{1, \ldots, k\}$. \\
		We call an initial condition $(q_0, p_0)$ a non-collision singularity of Xia-type, if there are two sequences of times $\left(x^{(i)}\right)_{i \in \N}$ and $\left(s^{(i)}\right)_{i \in \N}$ and a sequence of cluster decompositions $\C^{(i)} := \left\{ C_1^{(i)}, C_2^{(i)}, C_3^{(i)} \right\}$ of $\{1,2,3,4\}$ such that the following conditions are satisfied:
		\begin{enumerate}[(i)]
			\item \label{bul:PassageTimesMonotone}For all $i \in \N$ we have $x^{(i)} < s^{(i)} < x^{(i+1)} < T^{+}(q_0, p_0)$ and $\lim\limits_{i \rightarrow \infty} x^{(i)} = T^{+}(q_0, p_0)$. Here $T^{+}(q_0, p_0)<\infty$ is the escape time for these initial conditions.
			\item \label{bul:PassageTimesOddI} For odd $i$ we have
			\begin{align*}
				\norm{q_{C_1^{(i)}}\left(x^{(i)}\right) - q_{C_2^{(i)}}\left(x^{(i)}\right)} &= 1, \quad \norm{q_{C_3^{(i)}}\left(s^{(i)}\right) - q_{C_2^{(i)}}\left(s^{(i)}\right)} = 1 \\
				\inf_{t \in \left[x^{(i)}, s^{(i)}\right]}\norm{q_{C_1^{(i)}}\left(t\right) - q_{C_2^{(i)}}\left(t\right)} &\geq 1
			\end{align*}
			and $C_2^{(i)} \cup C_3^{(i)} = C_2^{(i+1)} \cup C_3^{(i+1)}$.
			\item \label{bul:PassageTimesEvenI}For even $i$ we have
			\begin{align*}
				\norm{q_{C_3^{(i)}}\left(x^{(i)}\right) - q_{C_2^{(i)}}\left(x^{(i)}\right)} &= 1, \quad \norm{q_{C_1^{(i)}}\left(s^{(i)}\right) - q_{C_2^{(i)}}\left(s^{(i)}\right)} = 1 \\
				\inf_{t \in \left[x^{(i)}, s^{(i)}\right]}\norm{q_{C_3^{(i)}}\left(t\right) - q_{C_2^{(i)}}\left(t\right)} &\geq 1
			\end{align*}
			and $C_2^{(i)} \cup C_1^{(i)} = C_2^{(i+1)} \cup C_1^{(i+1)}$.
			\item \label{bul:SmallDiameterClusters} We assume that the diameter of the nontrivial cluster is uniformly bounded by a small constant, i.e. there is some constant $\Con{sd} \ll 1$ such that
			\begin{align*}
				\sup_{i \in \N, t \in \left[x^{(i)}, s^{(i)}\right], j \in \{1,2,3\}, \left|C_j^{(i)}\right| = 2} \norm{q_{C_j^{(i)}}^{I}(t)} \leq \Con{sd}.
			\end{align*}
			\item \label{bul:FurtherForceConditions_GenAssumptionsPerturbedPassage} The following condition gives some bounds on the strength of the forces. We assume a minimal distance of at least $\delta>0$ for the centers of the disturbing forces to all particles, i.e.
			\begin{align*}
				\inf_{i \in \bN, l \in \{1, \ldots, k\}, t \in [0, T^{+}) } \norm{q_i(t) - c_l(t)} \geq \delta.
			\end{align*}
			By the equations of motion this leads to at most constant forces at all times up to the escape time.
			\item \label{bul:ConstDistDuringQuasicollision} In order to have a complete list of passages we demand that there is some constant such that for odd $i \in \N$
			\begin{align*}
				\sup_{t \in \left[s^{(i)}, x^{(i+1)}\right], j,k \in C_2^{(i)} \cup C_3^{(i)}} \norm{q_j(t) - q_k(t)} \leq \operatorname{Const},
				\intertext{whereas for even $i \in \N$:}
				\sup_{t \in \left[s^{(i)}, x^{(i+1)}\right], j,k \in C_2^{(i)} \cup C_1^{(i)}} \norm{q_j(t) - q_k(t)} \leq \operatorname{Const}.
			\end{align*}
			\item \label{bul:OuterClustersDoNotMeet}
			We assume that for all $i \in \N$ we have
			\begin{align*}
				\inf_{t \in \left[x^{(i)}, s^{(i)}\right], l \in C_1^{(i)}, k \in C_{3}^{(i)}} \norm{q_l(t) - q_k(t)} \geq 1.
			\end{align*}
			\item \label{bul:NC_Sing_Condition} We assume that
			\begin{align*}
				\limsup_{i \rightarrow \infty} \norm{q_{C_1^{(i)}}(x^{(i)})} = \infty.
			\end{align*}
			\item \label{bul:ConIntegratedForces} The forces integrated over a passage are always small, i.e. there is a constant $\Con{FI}$ such that for any $j \in \bN$ and $i \in \N$ we have
			\begin{align}
				\int_{x^{(i)}}^{x^{(i+1)}} \norm{F_j(t)} \diff t \leq  \frac{\operatorname{Const}_{FI}}{\max\left\{\norm{p_{C_1^{(i)}}\left(x^{(i)}\right)}, \norm{p_{C_3^{(i)}}\left(x^{(i)}\right)} \right\}}. \label{eq:ForceIntegratedSmall}
			\end{align}
			\item \label{bul:ConNegEnergyClusters} We assume that for all $i\in \N$ and $t \in \left[x^{(i)}, s^{(i)}\right]$ the internal energy of the non-trivial cluster is negative, i.e. for the cluster $D \in \left\{C_1^{(i)}, C_2^{(i)}, C_3^{(i)} \right\}$ with $\left|D\right|=2$ we have
			\begin{align*}
				H_D^{I}(t) < 0.
			\end{align*}
		\end{enumerate}
	\end{definition}
	
	We will briefly sketch how to interpret these conditions and why they are satisfied in our situation:
	\\
	The external forces $F_i$ are given by the influence of the particles in $\bN \setminus U$. They are of the desired form and by the final cluster decomposition, there is a minimal distance between particles from different clusters, which is exactly condition \eqref{bul:FurtherForceConditions_GenAssumptionsPerturbedPassage}. \\
	The conditions \eqref{bul:PassageTimesMonotone}, \eqref{bul:PassageTimesOddI}, \eqref{bul:PassageTimesEvenI}, \eqref{bul:ConstDistDuringQuasicollision} and \eqref{bul:OuterClustersDoNotMeet} are consistency conditions that are always satisfied for a non-collision singularity with four particles and are mainly used to fix the notation for the passages: The only mechanism for four bodies to become unbounded within finite time is a messenger cluster moving between two outer clusters (this was already known to Saari, the small forces do not perturb the proof). Then also \eqref{bul:NC_Sing_Condition} is just a weak formulation of von Zeipel's theorem (compare Section~\ref{sec:ClassicalResults}) for subsystems (we even get the statement $\lim_{i \rightarrow \infty} \norm{q_{C_1^{(i)}}(x^{(i)})} = \infty$, which could also be derived from all of these assumptions). Conditions \eqref{bul:SmallDiameterClusters} and \eqref{bul:ConNegEnergyClusters} are just a different formulation of Painlevé's theorem for subsystems from Section~\ref{sec:ClassicalResults}.
	\\
	Thus, it remains only to consider condition \eqref{bul:ConIntegratedForces} on a bound for the integrated forces. Note that this condition is in some sense stronger than just the boundedness of forces. We will use some heuristics based on the comparisons of different system quantities given in \cite[Definition 3.6, Corollary 3.7]{quaschner2025improbability} which do not need condition \eqref{bul:ConIntegratedForces} in the proofs:
	\begin{definition}[Comparison relations] \label{def:ComparisonRelations}\quad \\
		Let $a$ and $b$ be two time-dependent quantities. For any time interval $[t_1, t_2]$ we say that
		\begin{align*}
			a \lesssim b \text{ on } [t_1, t_2],
		\end{align*}
		if there is a constant $\Con{1}$, which only depends on the constants of the Hamilton function (masses, dimension etc.) such that
		\begin{align*}
			\forall t \in [t_1, t_2]: \quad a(t) \leq \Con{1} \cdot b(t).
		\end{align*}
		Usually, we do not stress the time interval $[t_1, t_2]$ explicitly. Additionally, we introduce the notation
		\begin{align*}
			a \sim b \; :\Leftrightarrow a \lesssim b \; \text{ and } b \lesssim a.
		\end{align*}
		We use both of these notations as well to compare two fixed values of time-dependent functions at different times.
	\end{definition}
	
	\begin{corollary}[Some comparisons] \label{cor:SomeComparisons} \quad \\
		Consider a non-collision singularity of Xia-type as in Definition~\ref{def:NC_SingXiaType}. For any odd $i>i_0$ we have the following comparisons:
		\begin{enumerate}[(i)]
			\item We can relate the time for a passage to the momenta and positions as
			\begin{align}
				\frac{\norm{q_{C_2^{(i)},C_3^{(i)}}^{I}\left(x^{(i)}\right)}}{s^{(i)} - x^{(i)}} \sim \norm{p_{C_2^{(i)},C_3^{(i)}}^{I}\left(x^{(i)}\right)}. \label{eq:VergleichOrtDurchZeitImpuls23}
			\end{align}
			\item We can compare the momenta in the time interval $\left[x^{(i)}, s^{(i)}\right]$:
			\begin{align}
				\begin{split}
					\norm{p_{C_2^{(i)},C_3^{(i)}}^{I}} &\lesssim \norm{p_{C_1^{(i)},C_2^{(i)}}^{I}} \sim \norm{p_{C_1^{(i)}}} \sim \norm{p_{C_2^{(i)}}}, \\ \norm{p_{C_3^{(i)}}} &\lesssim \norm{p_{C_2^{(i)}}}.
				\end{split}
				\label{eq:CompNormMomenta}
			\end{align}
			\item We have the following comparisons of different cluster distances in the time interval $\left[x^{(i)}, s^{(i)}\right]$:
			\begin{align}
				\norm{q_{C_1^{(i)}}} &\sim \norm{q_{C_3^{(i)}}} \sim \norm{q_{C_1^{(i)}}-q_{C_3^{(i)}}}. \label{eq:Vergleichq1q3q13I}
			\end{align}
		\end{enumerate}
		The constants for those comparisons can be chosen uniformly over all passages.
	\end{corollary}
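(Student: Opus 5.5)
The plan is to fix an odd $i$ (even $i$ is symmetric, with $C_1^{(i)}$ and $C_3^{(i)}$ exchanged) and work in the center-of-mass frame of $U$. I would first collect three a priori facts that use only conditions (i)--(viii) and (x) of Definition~\ref{def:NC_SingXiaType}, so that (ix) is never invoked.

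\emph{Bounded forces.} By (ii) and (vii), on $[x^{(i)},s^{(i)}]$ all mutual distances between different clusters are at least $\min\{1,\text{const}\}$, apart from the $C_2$--$C_3$ distance near $s^{(i)}$, which is still at least $1$ on the whole interval up to the diameter bound (iv). Hence all inter-cluster forces, and by (v) also all external forces $F_j$, are bounded by a uniform constant.

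\emph{Nearly conserved quantities.} Since $T^{+}<\infty$, the total momentum $p_U$ and the total energy $H_U^I$ change only by a uniformly bounded amount.

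\emph{Energy splitting.} Splitting $H_U^I$ by Lemma~\ref{lem:SplitSystemQuantitiesClusterDecomposition}, the binary $D$ has negative internal energy by (x), while its kinetic and potential parts nearly cancel. The remaining inter-cluster potential terms are bounded, so $\sum_j K_{C_j}$ equals a bounded quantity plus $-H_D^I>0$. In particular the cluster momenta are large only if $D$ is tight, and conversely.

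Next I would show that on a passage the messenger moves almost on a straight line with almost constant velocity. The force on $C_2$ is $\mathcal{O}(\operatorname{Const})$, while by (viii) and the finite escape time the passage duration $s^{(i)}-x^{(i)}\to 0$ and the relevant momenta diverge. The first step is to get a lower bound $\norm{p_{C_2}}\gtrsim \norm{q_{C_1}-q_{C_3}}/(s^{(i)}-x^{(i)})$ directly from (ii): the relative distance $q_{C_2,C_3}^I$ must shrink from about $\norm{q_{C_1}-q_{C_3}}$ to $1$. Integrating the equations of motion then shows that the changes of all cluster momenta over the passage are $\mathcal{O}(\operatorname{Const}(s^{(i)}-x^{(i)}))$, which is negligible compared with their size. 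Consequently $q_{C_2,C_3}^I$ evolves almost linearly, and this gives (i): its length decreases from $\norm{q_{C_2,C_3}^I(x^{(i)})}$ to $1$ at nearly constant rate $\norm{p_{C_2,C_3}^I}/m^I$.

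For (iii) I would use the near-linear motion again. The messenger lies, up to a small error, on the segment between $q_{C_1}$ and $q_{C_3}$. The center-of-mass relation $m_{C_1}q_{C_1}+m_{C_2}q_{C_2}+m_{C_3}q_{C_3}=0$ therefore forces $q_{C_1}$ and $q_{C_3}$ to point in nearly opposite directions with comparable lengths. This yields $\norm{q_{C_1}}\sim\norm{q_{C_3}}\sim\norm{q_{C_1}-q_{C_3}}$; the triangle inequality gives one direction and the opposite orientation the other.

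For (ii), momentum balance gives $p_{C_1}+p_{C_2}+p_{C_3}=\mathcal{O}(\operatorname{Const})$. Together with $\norm{p_{C_1}}\gtrsim\norm{q_{C_1}}/(s^{(i)}-x^{(i)})$-type growth, this produces $\norm{p_{C_1,C_2}^I}\sim\norm{p_{C_1}}\sim\norm{p_{C_2}}$, once it is known that $C_3$ is not much faster than $C_2$. I expect the \textbf{main obstacle} to be exactly $\norm{p_{C_3}}\lesssim\norm{p_{C_2}}$, and hence $\norm{p_{C_2,C_3}^I}\lesssim\norm{p_{C_1,C_2}^I}$. My first attempt would be the geometric requirement that the messenger actually catches $C_3$: the relative velocity $\dot q_{C_2}-\dot q_{C_3}$ must point from $C_2$ towards $C_3$, while the outer clusters separate (since $\norm{q_{C_1}}$ grows). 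This forces $\scalprod{\dot q_{C_3}}{\hat q_{C_3}}\ge 0$ and $\scalprod{\dot q_{C_2}}{\hat q_{C_3}}>\scalprod{\dot q_{C_3}}{\hat q_{C_3}}$. If that is insufficient for the tangential components, I would run an induction over passages. The induction would track the ratio $\norm{p_{C_3}}/\norm{p_{C_2}}$, using that at the preceding close approach (condition (vi)) the momenta of the binary and the messenger are exchanged with bounded distortion. The constants would be chosen uniformly for $i>i_0$.
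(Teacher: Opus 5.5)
The paper does not prove this corollary itself. It imports it from its companion paper (Corollary 3.7 there) and only remarks that condition (ix) is not needed there. Judged on its own, your argument has a genuine gap.

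Everything you actually establish rests on single-passage information: bounded forces on $[x^{(i)},s^{(i)}]$, near-straight-line motion, conditions (ii) and (vii), and $p_{C_1}+p_{C_2}+p_{C_3}=0$. That information yields (i), given $\norm{q_{C_1}(x^{(i)})}\to\infty$, but it yields neither (ii) nor (iii). Here is a configuration satisfying all of it:
\begin{itemize}
\item take $m_{C_1}=m_{C_3}$ and $d\ge 2$;
\item let $C_2$ be at rest;
\item let $C_1$ and $C_3$ move with opposite velocities of size $A\approx\norm{q^I_{C_2,C_3}(x^{(i)})}/(s^{(i)}-x^{(i)})$ along two parallel lines at distance slightly more than $1$, so that $C_3$ reaches distance $1$ from $C_2$ at time $s^{(i)}$ and the outer clusters pass each other on the way.
\end{itemize}
Every constraint you use holds. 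Yet $\norm{p_{C_2}}=0\ll\norm{p_{C_1}}=\norm{p_{C_3}}$, and at the crossing $\norm{q_{C_1}-q_{C_3}}\approx 1\ll\norm{q_{C_1}}$. Two of your steps therefore fail:
\begin{itemize}
\item Your lower bound $\norm{p_{C_2}}\gtrsim\norm{q_{C_1}-q_{C_3}}/(s^{(i)}-x^{(i)})$ does not follow from (ii); only the corresponding bound for $p^I_{C_2,C_3}$ does.
\item In (iii), the claim that the messenger stays near the segment $[q_{C_1},q_{C_3}]$ tacitly assumes that $q_{C_1}-q_{C_3}$ changes by only $o(\norm{q_{C_1}-q_{C_3}})$ during the passage.
\end{itemize}
Even for $d=1$, letting $C_3$ move inward with speed $\frac{m_{C_2}}{m_{C_2}+m_{C_3}}A$ forces $p_{C_1}=0$, which contradicts $\norm{p_{C_1}}\sim\norm{p_{C_2}}$.

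What is missing is a history-dependent fact: during late passages the outer clusters recede. Precisely, the component of $p_{C_3}$ (and of $p_{C_1}$) pointing towards the other outer cluster must be $o(\norm{p^I_{C_2,C_3}})$. Your reason for this, ``the outer clusters separate since $\norm{q_{C_1}}$ grows'', does not follow. Growth of $\norm{q_{C_1}(x^{(i)})}$ along the sequence of passages says nothing about the velocity of $C_3$ inside one passage.

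One way to obtain the fact is through the close encounters. At an encounter, the momentum of the group $C_2\cup C_3$ is almost conserved, while the velocity of the messenger relative to the outer cluster reverses. So the outer cluster after the encounter moves outward faster than the group's center of mass, which in turn moves outward faster than the outer cluster before it. Hence the outward velocity of the outer cluster can only increase, up to errors from bounded forces on a finite time interval. An inward speed comparable to $\norm{p^I_{C_2,C_3}(x^{(i)})}\to\infty$ at a late passage would then have to persist back to a fixed earlier time, which is impossible.

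Your fallback induction on $\norm{p_{C_3}}/\norm{p_{C_2}}$ with ``bounded distortion'' per encounter does not achieve this. Multiplicative distortions compound over infinitely many passages, so no uniform constants result. Moreover, the three-body near-collision gives no such control of individual momenta. What propagates is a one-sided, additive quantity, not a ratio.

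Two smaller points:
\begin{itemize}
\item The assertion that $H^I_U$ changes only by a bounded amount is unjustified. Its derivative is of size $\norm{p}$ times bounded forces, and $\int\norm{p}\,dt$ diverges. Even with (ix) the paper only gets $\mathcal{O}(1)$ per passage, hence its $\log\norm{p}$ energy window. You never really use this claim, so it can simply be dropped.
\item Condition (viii) only gives a $\limsup$. Your argument for all $i>i_0$ needs $\norm{q_{C_1}(x^{(i)})}\to\infty$, which must be derived first.
\end{itemize}
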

	
	The duration of a passage is roughly $\operatorname{Const} \cdot \frac{\norm{q_{C_2^{(i)},C_3^{(i)}}^{I}\left(x^{(i)}\right)}}{\norm{p_{C_2^{(i)},C_3^{(i)}}^{I}\left(x^{(i)}\right)}}$, so if we integrate bounded forces over this time interval, we will have a too large contribution.  Note that \linebreak $\max\left\{\norm{p_{C_1^{(i)}}\left(x^{(i)}\right)}, \norm{p_{C_3^{(i)}}\left(x^{(i)}\right)} \right\}$ is approximately the scale with which the messenger cluster is moving which might be larger than $\norm{p_{C_2^{(i)},C_3^{(i)}}^{I}\left(x^{(i)}\right)}$. However, if we can prove that the forces are only of constant size within a small part of the passage and afterwards are decaying (e.g. as the distance between the particles grows), we can get better bounds for the integrated forces. 
	\\
	To prove this statement, we must consider the influence of another asymptotic cluster in Section~\ref{subsec:BoundsIntegratedForces}. There are two types of such clusters: Bounded clusters suffering a total collision and unbounded clusters also having a non-collision singularity. We will treat the case of a bounded cluster in Section~\ref{subsec:EstimatesForcesUnboundedAndBounded} and the case of two unbounded clusters using our assumption on the asymptotic direction in Section~\ref{sec:QuantitativeEstimatesDoubleConeArgument}. In Section~\ref{subsubsec:ConclusionsBoundExternalForces} we will recall some results from \cite{quaschner2025improbability} that are valid for these subsystems in some time interval of the passage. Finally, we will give some additional analysis on the motion in between the end of a passage with constant forces and up until the next passage starts, in Section~\ref{subsec:InvestigationMotionExtendedInterval} which is divided into suitable subsections considering different parts of the passage with different challenges. All of the above is based on some investigations in \cite[Chapters 4 and 5]{quaschner2023non}.
	
	\subsection{Bounds for the integrated forces}
	\label{subsec:BoundsIntegratedForces}
	
	We derive the bounds for the integrated forces and gather some important consequences of these bounds.
	
	\subsubsection{Estimates for the forces between an unbounded and a bounded subsystem}
	\label{subsec:EstimatesForcesUnboundedAndBounded}
	
	Consider an unbounded subsystem $U$ and another bounded subsystem $B$ of the final cluster decomposition. The following proposition establishes the force condition for the \enquote{external} forces of subsystem $B$ on particles of subsystem $U$.
	
	\begin{proposition}[Estimate for the integrated forces - Unbounded and bounded system]
		\label{prop:EstimatesIntegratedForcesUnboundedBounded} Consider an unbounded subsystem $U$ and a bounded subsystem $B$ from the final cluster decomposition of a singularity with $\left|U\right|= 4$. Then with the notation from Definition~\ref{def:NC_SingXiaType} we have for any passage with odd index $i$, sufficiently large $\norm{q_{C_1^{(i)}}\left(x^{(i)}\right)}$ and close to the escape time $T^{+}$ and arbitrary $k \in U$ and $j \in B$:
		\begin{align*}
			\int_{x^{(i)}}^{x^{(i+1)}} \norm{q_k(t)- q_j(t)}^{-1-\alpha} \diff t \leq  \frac{\operatorname{Const}_{FI}}{\max\left\{\norm{p_{C_1^{(i)}}\left(x^{(i)}\right)}, \norm{p_{C_3^{(i)}}\left(x^{(i)}\right)} \right\}}.
		\end{align*}
	\end{proposition}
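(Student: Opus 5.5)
The plan is to use the geometry of the passage. Away from a short initial time window, every particle of $U$ is at distance at least of order $\norm{q_{C_1^{(i)}}\left(x^{(i)}\right)}$ from the bounded set containing $B$. We then integrate this decay against the motion of the messenger.

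\textbf{Setup.} The center of mass $q_B$ converges in $\R^d$, and $J_B^{I}\to 0$. So close to $T^{+}$ all particles of $B$ lie in a ball $B_R(c)$ of fixed radius. By Remark~\ref{rem:AsymptoticGrafClusterDecomposition}(d), $\norm{q_k(t)-q_j(t)}\geq \delta_{\min}'>0$ for $t\geq s_1$, which gives the crude pointwise bound $\norm{q_k-q_j}^{-1-\alpha}\leq \operatorname{Const}$. Write $Q:=\norm{q_{C_1^{(i)}}\left(x^{(i)}\right)}$ and $P:=\max\left\{\norm{p_{C_1^{(i)}}\left(x^{(i)}\right)},\norm{p_{C_3^{(i)}}\left(x^{(i)}\right)}\right\}\sim\norm{p_{C_2^{(i)}}}$ (Corollary~\ref{cor:SomeComparisons}).

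\textbf{Outer clusters.} By \eqref{eq:Vergleichq1q3q13I}, the outer clusters $C_1^{(i)},C_3^{(i)}$ are at distance $\sim Q$ from the center of mass $q_U$. Because $U$ and $B$ are different final clusters and $U$ is unbounded, $\norm{q_U - c}$ stays bounded while $Q\to\infty$. Hence a particle of an outer cluster is at distance $\gtrsim Q$ from $B$ throughout $[x^{(i)},s^{(i)}]$, once $Q$ is large. Its contribution to the integral is therefore at most $(s^{(i)}-x^{(i)})\,Q^{-1-\alpha}$. Using \eqref{eq:VergleichOrtDurchZeitImpuls23} together with $\norm{q^{I}_{C_2,C_3}}\lesssim Q$, this is $\lesssim Q^{-\alpha}/\norm{p^{I}_{C_2,C_3}}$. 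We then need $\norm{p^{I}_{C_2,C_3}}\gtrsim P\,Q^{-\alpha}$, and I expect this to follow from the energy balance of the passage. During $[s^{(i)},x^{(i+1)}]$ the cluster $C_2\cup C_3$ has bounded diameter by \eqref{bul:ConstDistDuringQuasicollision}, so it sits far from $B$ and is treated the same way, with duration $\lesssim 1/P$.

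\textbf{Messenger, the main obstacle.} The messenger $C_2^{(i)}$ travels from near $C_1$ to near $C_3$ with speed $\sim P/m$, essentially along a straight line. Near $q_U$ it may pass close to $B$. I would parametrize by arclength, showing that $\frac{\diff}{\diff t} q_{C_2}$ stays close to a fixed direction with $\norm{\dot q_{C_2}}\gtrsim P$. The key input is that the forces acting on $C_2$ during the passage are bounded once $\norm{q_{C_2}-q_{C_{1,3}}}\ge 1$, so over a time $\lesssim Q/P$ they change the momentum only by $O(Q/P)\ll P$. Given this, the substitution $\diff t\lesssim \diff s/P$ turns the integral into
\[
\frac{1}{P}\int_{\R}\left(\max\{\delta_{\min}',|s|\}\right)^{-1-\alpha}\diff s\lesssim\frac{1}{P},
\]
which is finite because $\alpha>0$. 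The hard part is justifying the near-straight motion with speed uniformly comparable to $P$. This needs $P\gtrsim Q$ in some form, from the growth rates implicit in Corollary~\ref{cor:SomeComparisons} and energy conservation for $U$ with small external work. It also has to handle the pieces near the endpoints, where the messenger is close to an outer cluster; those pieces are again at distance $\gtrsim Q$ from $B$. Summing the contributions of the outer clusters and the messenger gives the claimed bound with a uniform constant $\operatorname{Const}_{FI}$.
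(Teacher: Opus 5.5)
There is a genuine gap in your treatment of the outer clusters and of the interval $[s^{(i)},x^{(i+1)}]$; the messenger part is sound and is essentially the paper's argument.

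\textbf{The gap.} You bound these pieces by (length of the interval)$\,\times\,Q^{-1-\alpha}$. That forces you to assume two things:
\begin{itemize}
\item $\norm{p^{I}_{C_2^{(i)},C_3^{(i)}}(x^{(i)})}\gtrsim P\,Q^{-\alpha}$;
\item the stretch from $s^{(i)}$ to $x^{(i+1)}$ takes time $\lesssim 1/P$.
\end{itemize}
Neither follows from what is available. Corollary~\ref{cor:SomeComparisons} only gives the upper bound $\norm{p^{I}_{C_2,C_3}}\lesssim P$. The paper explicitly warns that $\norm{p^{I}_{C_2,C_3}}$ may be much smaller than $P$: the messenger chases the receding cluster $C_3$, so the relative velocity can nearly cancel. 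In that regime the passage lasts $\sim Q/\norm{p^{I}_{C_2,C_3}}\gg Q/P$. Likewise, the approach to $C_3$ after $s^{(i)}$ takes $\sim 1/\norm{p^{I}_{C_2,C_3}}\gg 1/P$. So both of your duration bounds can fail.

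Energy balance does not rescue this. It ties the total kinetic energy to the binding energy of the binary. It says nothing about how that kinetic energy is split among the Jacobi components, so it gives no lower bound on $\norm{p^{I}_{C_2,C_3}}$.

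\textbf{The missing idea.} Never integrate a supremum over the duration; compare instead with a distance that grows linearly in time.
\begin{itemize}
\item On $[x^{(i)},s^{(i)}]$ the messenger lies between the outer clusters, so $\norm{q_{C_2}}\lesssim\norm{q_{C_1}}\sim\norm{q_{C_3}}$. Hence, once $Q$ is large, $\norm{q_{C_l}-q_B}\geq\operatorname{Const}\norm{q_{C_2}-q_B}$ for $l=1,3$. This is how the paper's proof reduces the whole interval to the messenger integral.
\item On $[s^{(i)},x^{(i+1)}]$, use $\norm{q_{C_2\cup C_3}}=\frac{m_{C_1}}{m_{C_2\cup C_3}}\norm{q_{C_1}}$ together with condition~\eqref{bul:ConstDistDuringQuasicollision}.
\item Use also that $C_1$ recedes with momentum $\sim\norm{p_{C_1}(x^{(i)})}$, since bounded forces act on it only for a time $x^{(i+1)}-x^{(i)}=o(1)$. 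Then $\norm{q_{C_1}(t)}\gtrsim Q+cP\,(t-x^{(i)})$, and $\int_0^\infty (Q+cP\tau)^{-1-\alpha}\diff\tau\lesssim Q^{-\alpha}/P$.
\item The same propagation estimate would also handle the outer clusters on $[x^{(i)},s^{(i)}]$ directly.
\end{itemize}

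\textbf{The messenger.} The obstacle you flag here is not real, and your reasoning for it is slightly off. The passage need not last only $\lesssim Q/P$, and you do not need any relation $P\gtrsim Q$. The bounded forces act for a time $s^{(i)}-x^{(i)}<T^{+}-x^{(i)}=o(1)$, so they change $p_{C_2}$ only by $o(1)\ll P$. That already gives near-straight motion at speed $\sim P$. This is equivalent to the paper's convexity of $t\mapsto\norm{q_{C_2}-q_B}^2$ with second derivative $\sim P^2$, followed by a propagation estimate from its minimum, whose value is bounded below by the Graf separation. Your arclength integral then gives the bound $\lesssim 1/P$.
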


	\begin{proof}\quad \\
		First of all, if we are sufficiently close to the escape time and $k \in C^{(i)}_l$, we can estimate $\norm{q_k(t) - q_j(t)} \geq \frac{1}{2} \norm{q_{C^{(i)}_l}(t) - q_B(t)}$, so it is enough to estimate the integral over the distance of the corresponding subcluster of $U$ containing $k$ and the center of mass of $B$. As $q_B$ stays within a bounded set, whereas $\norm{q_{C^{(i)}_1}}$ and $\norm{q_{C^{(i)}_3}}$ diverge, for sufficiently large $\norm{q_{C^{(i)}_1}(x^{(i)})}$ we have $\norm{q_{C^{(i)}_l}(t) - q_B(t)} \geq \operatorname{Const} \norm{q_{C^{(i)}_2}(t) - q_B(t)}$. So it is enough to estimate $\int_{x^{(i)}}^{s^{(i)}} \norm{q_{C^{(i)}_2}(t) - q_B(t)}^{-1-\alpha} \diff t$.  For the interval $\left[s^{(i)}, x^{(i+1)}\right]$ one can use a similar argument considering $\norm{q_{C^{(i)}_1}(t) - q_B(t)}^{-1-\alpha}$ and that these clusters separate with a velocity roughly $\norm{p_{C_1^{(i)}}\left(x^{(i)}\right)}$. \\
		For investigating $\int_{x^{(i)}}^{s^{(i)}} \norm{q_{C^{(i)}_2}(t) - q_B(t)}^{-1-\alpha} \diff t$, note that $f: t \mapsto \norm{q_{C^{(i)}_2}(t) - q_B(t)}^2$ is a convex function on $\left[x^{(i)}, s^{(i)}\right]$, as the second derivative is (up to small error terms and maybe some mass factors) given by $K_{C_2^{(i)}, B}^{I}$. As $p_B$ is roughly constant, this is approximately (up to mass factors) $\norm{p_{C_2^{(i)}}\left(x^{(i)}\right)}^2$, hence large. Thus there is a unique time $w_{\min}$ where $f$ is minimal on the interval $\left[x^{(i)}, s^{(i)}\right]$. Also, $f(v_{\min})$ is bounded by a constant depending on the minimal distance of the particles from different subsystems. From this time on we can use a propagation estimate, which yields the claim, e.g.
		\begin{align*}
			&\int_{w_{\min}}^{s^{(i)}} \norm{q_{C^{(i)}_2}(t) - q_B(t)}^{-1-\alpha} \diff t \leq \\&\leq \operatorname{Const} \int_{w_{\min}}^{s^{(i)}}\left(\norm{q_{C^{(i)}_2}(w_{\min}) - q_B(w_{\min})}^2 + \operatorname{Const} \norm{p_{C_2^{(i)}}\left(x^{(i)}\right)}^2 (t-w_{\min})^2 \right)^{-\frac{1+\alpha}{2}} \diff t \\
			& \leq \frac{\operatorname{Const}}{\norm{q_{C^{(i)}_2}(w_{\min}) - q_B(w_{\min})} \cdot \norm{p_{C_2^{(i)}}\left(x^{(i)}\right)}}.
		\end{align*}
		All other integrals can be estimated in an analogous way and using the comparison relations, we obtain the statement.
	\end{proof}

	\subsubsection{Estimates for the forces between two unbounded subsystems}
	\label{sec:QuantitativeEstimatesDoubleConeArgument}
	
	In the preceding subsection we demonstrated, how the integrated forces between an unbounded and a bounded cluster can be estimated. The main point was that there is only a small \enquote{interaction zone} where there are forces of constant size, because particles from the bounded cluster are always close to their limiting position, and that the particles separated with a large velocity. Both of these assumptions need not be the case if we consider two unbounded clusters: The interaction zone could be arbitrary large and the relative velocity of particles from different subsystems could be smaller than some quantities relevant for the orbit of one of the subsystems\footnote{To be more precise: Note that for the bounded subsystem, we could ignore the velocity of the center of mass of this subsystem, as it was only of constant size. In this case, if we consider the relative velocity between two centers of mass, there might be some cancellation, as both quantities could be comparable.}. However, for this to happen both subsystems would need to have the same asymptotic direction. Under our assumption of different asymptotic directions, we will be able to exclude this behavior the same way as in \cite[Chapter 5]{quaschner2023non}. For simplicity, we will use the following naming convention for the two unbounded systems of four particles each. The first subsystem shall be $\{1,2,3,4\}$, the second subsystem is $\{5,6,7,8\}$. It will be enough to give an argument on the size of the forces from the second subsystem acting on the first subsystem. Our general assumption is, that for a fixed small value $\beta > 0$, for the asymptotic direction of Proposition~\ref{prop:ExistenceAsymptoticDirection} the following holds:
	\begin{align*}
		\norm{\operatorname{AD}_{\{1,2,3,4\}}- \operatorname{AD}_{\{5,6,7,8\}}} &\geq \beta, \quad & \norm{\operatorname{AD}_{\{1,2,3,4\}} + \operatorname{AD}_{\{5,6,7,8\}}} &\geq \beta.
	\end{align*}
	
	The asymptotic directions do not completely define a region where the particles of one subsystem need to be close to the escape time. First of all, we have to consider directions within a neighborhood of the asymptotic directions too, as the latter are only the limit. The second problem is that the asymptotic directions are defined with respect to the common center of mass of the subsystem. This, however, stays within a ball of constant size, depending on some bound $\mu_q$ for each center of mass. In order to find a first approximating set in which one subsystem will be (in the position space), we have to take these two effects into account. We arrive at some sort of double cone with thickened top (see as well Figure~\ref{fig:SketchDoubleCone}):
	
	\begin{figure}[htbp]
		\includegraphics[width=\textwidth]{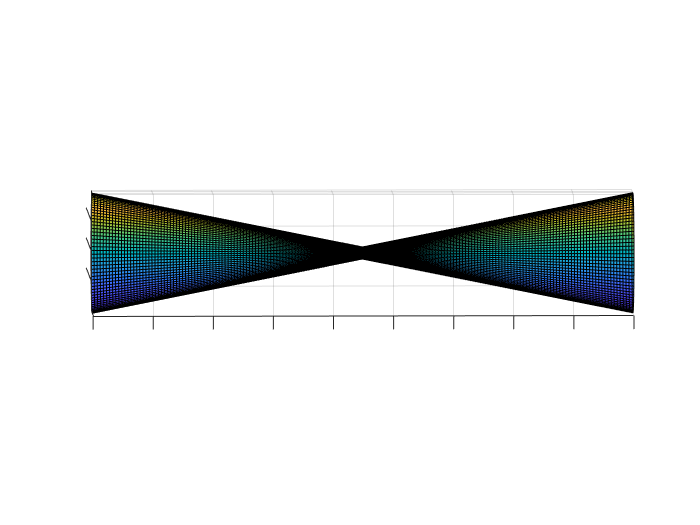}
		\caption[Sketch of a double cone]{A sketch of a double cone as in Definition~\ref{def:DoubleCones}. One can see that the tip of the cone is thickened, which corresponds to the parameter $\mu$. The reference direction $\theta$ is here the vertical direction. The parameter $\phi$ corresponds to the angle of the cone.}
		\label{fig:SketchDoubleCone}
	\end{figure}

	\begin{definition}[Double cones] \label{def:DoubleCones}\quad \\
		Assume that $\mu>0, \phi>0$ are some fixed positive numbers and $\theta \in \mathcal{S}^{d-1}$ is a fixed direction. The \textbf{$(\mu,\phi)$-double cone with direction $ \theta$} is the following subset of $\R^d$:
		\begin{align*}
			\operatorname{DC}_{\mu, \phi}(\theta)&:= \overline{B_\mu^{d}}\left(\left\{ q \in \R^d\setminus\{0\} \mid \exists \sigma \in \{-1,1\}: \norm{\theta - \sigma\hat{q}} \leq \phi \right\} \cup \{0\} \right).
		\end{align*}
		Here, for a subset $A\subseteq \R^{d}$ we define $\overline{B_{\mu}^{d}}(A)$ to be the closed $\mu$-neighborhood of the set $A$, more formally:
		\begin{align*}
			\overline{B_{\mu}^{d}}(A) = \{q \in \R^{d} \mid \operatorname{dist}(q, A) \leq \mu \}.
		\end{align*}
	\end{definition}

	\begin{remark} \label{rem:ExplanationDoubleCone}\quad 
		\begin{enumerate}[(i)]
			\item Note that the $\operatorname{DC}_{0, \phi}(\theta)$-double cone is indeed a geometric double cone. The parameter $\mu$ allows some thickening of this cone in the above defined sense.
			\item It is easy to see that the following is an alternative equivalent definition of the double cone:
			\begin{align*}
				\operatorname{DC}_{\mu, \phi}(\theta)=\bigg\{ q \in \R^d \mid &\exists \tilde{q} \in \overline{B_{\mu}^{d}}(0)\setminus\{q\}, \sigma \in \{-1,1\}: \\ & \norm{\theta - \sigma\frac{q-\tilde{q}}{\norm{q-\tilde{q}}}} \leq \phi \bigg\} \cup \left\{0\right\}.
			\end{align*}
			Here $\overline{B_{\mu}^{d}}(0)$ denotes the closed ball around $0\in \R^{d}$ with radius $\mu$. In this formulation we can reinterpret the parameter $\mu$ for our subsystem: Usually we consider the coordinates of a subsystem relative to the center of mass of this subsystem. Most of the time we cannot say, where the center of mass is, but we know it is within a ball of radius $\mu_q$ (where $\mu_q$ is a general bound for every center of mass of a subsystem). Thus if we find some suitable angle bounds for the relative coordinates, we can say that the absolute coordinates are within some double cone. This is one reason, why the double cones are important for the following considerations.
			\item We allow the sign $\sigma$ in both forms of the definition as we want to go in both directions of $\theta$, i.e. the whole one-dimensional subspace spanned by it. In some sense it would thus be more natural to take $\theta$ from the projective space and to replace the euclidean norm of the directions by the natural distance in this space. However, we do not believe that this is a simpler way of representing the results, but therefore we will often need to make assumptions on $\norm{\theta_1 \pm \theta_2}$ as in the lemmas below.
			\hfill $\diamond$
		\end{enumerate}
	\end{remark}
	
	We formulate some simple properties of double cones that will become useful later:
	
	\begin{lemma}[Nesting of double cones] \label{lem:NestingDoubleCones} \quad \\
		Let $\theta_1, \theta_2 \in \Sd$, $\phi, \psi >0$ be arbitrary with $\norm{\theta_1- \theta_2} \leq \psi$ or $\norm{\theta_1 + \theta_2} \leq \psi$. Then we have
		\begin{align*}
			\operatorname{DC}_{\mu, \phi}(\theta_2) \subseteq \operatorname{DC}_{\mu, \phi+\psi}(\theta_1).
		\end{align*}
	\end{lemma}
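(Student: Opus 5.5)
The plan is to work directly with the first (set-theoretic) form of Definition~\ref{def:DoubleCones}, through the cone it thickens. Write
\begin{align*}
\Gamma_\phi(\theta) := \left\{ q \in \R^d\setminus\{0\} \mid \exists \sigma \in \{-1,1\}: \norm{\theta - \sigma\hat{q}} \leq \phi \right\} \cup \{0\},
\end{align*}
so that $\operatorname{DC}_{\mu,\phi}(\theta) = \overline{B_\mu^{d}}(\Gamma_\phi(\theta))$. The $\mu$-neighborhood operation is monotone with respect to inclusion, since $A \subseteq A'$ implies $\operatorname{dist}(q,A') \leq \operatorname{dist}(q,A)$. It therefore suffices to prove the unthickened inclusion $\Gamma_\phi(\theta_2) \subseteq \Gamma_{\phi+\psi}(\theta_1)$.

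The point $0$ lies in both cones, so take $q \neq 0$ in $\Gamma_\phi(\theta_2)$ together with its sign $\sigma$, so that $\norm{\theta_2 - \sigma \hat{q}} \leq \phi$. We split according to which hypothesis holds.

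In the case $\norm{\theta_1-\theta_2}\leq\psi$, the triangle inequality gives
\begin{align*}
\norm{\theta_1 - \sigma\hat{q}} \leq \norm{\theta_1-\theta_2} + \norm{\theta_2-\sigma\hat{q}} \leq \psi+\phi,
\end{align*}
and the same $\sigma$ witnesses membership in $\Gamma_{\phi+\psi}(\theta_1)$.

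In the case $\norm{\theta_1+\theta_2}\leq\psi$, we use the sign $-\sigma$ instead:
\begin{align*}
\norm{\theta_1 - (-\sigma)\hat{q}} = \norm{\theta_1 + \sigma\hat q} \leq \norm{\theta_1+\theta_2} + \norm{-\theta_2+\sigma\hat{q}} \leq \psi + \phi.
\end{align*}
Here $\norm{-\theta_2+\sigma\hat q} = \norm{\theta_2-\sigma\hat q}$.

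The only point needing care is this sign flip in the second case, which is exactly why Definition~\ref{def:DoubleCones} allows both values of $\sigma$. Everything else is the triangle inequality together with monotonicity of the $\mu$-neighborhood, so I expect no real obstacle.
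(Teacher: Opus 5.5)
Your proof is correct and is essentially the paper's argument: the key step is the same triangle inequality, keeping the sign $\sigma$ when $\norm{\theta_1-\theta_2}\le\psi$ and switching to $-\sigma$ when $\norm{\theta_1+\theta_2}\le\psi$. The only difference is in how the thickening is handled. The paper runs the estimate inside the shifted-apex characterization of Remark~\ref{rem:ExplanationDoubleCone}(ii), applied to the unit vector $(a-q_a)/\norm{a-q_a}$. You instead prove the inclusion for the unthickened cones and then use monotonicity of the $\mu$-neighborhood, which avoids relying on that unproved equivalence and makes the sign flip explicit.
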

	
	\begin{proof}
		\quad \\
		Let $a \in \operatorname{DC}_{\mu, \phi}(\theta_2)$ be arbitrary. By the alternative definition there is a $q_a \in \overline{B_{\mu}^{d}(0)}$ and a $\sigma_a \in \{-1,1\}$ with
		\begin{align*}
			\norm{\theta_2 - \sigma_a\frac{a-q_a}{\norm{a-q_a}}} \leq \phi.
		\end{align*}
		But using the triangle inequality we immediately get
		\begin{align*}
			\norm{\theta_1 - \sigma_a\frac{a-q_a}{\norm{a-q_a}}} \leq
			\norm{\theta_2 - \sigma_a\frac{a-q_a}{\norm{a-q_a}}} + \norm{\theta_1- \theta_2},
		\end{align*}
		or 
		\begin{align*}
			\norm{\theta_1 + \sigma_a\frac{a-q_a}{\norm{a-q_a}}} \leq
			\norm{\theta_2 - \sigma_a\frac{a-q_a}{\norm{a-q_a}}} + \norm{\theta_1 + \theta_2}.
		\end{align*}
		One of these is smaller than $\phi+\psi$ by the assumption, hence we obtain $a \in \operatorname{DC}_{\mu, \phi+\psi}(\theta_1)$ and thus the claim.
	\end{proof}

	\begin{lemma}[Minimal distance between positions from different cones] \label{lem:MinDistDifferentCones} \quad \\
		Let $\theta_1, \theta_2 \in \Sd$, $\phi>0$ and $\mu>0$ be arbitrary with $\norm{\theta_1- \theta_2} \geq \phi$ and $\norm{\theta_1 + \theta_2} \geq \phi$. Then there are constants $\Con{\mu, \phi} > 0$ and $R_0(\mu, \phi)>0$ such that for any $a \in \operatorname{DC}_{\mu, \frac{\phi}{4}}(\theta_1)$ and $b \in \operatorname{DC}_{\mu, \frac{\phi}{4}}(\theta_2)$ with $\norm{a} \geq R_{0}(\mu, \phi)$ we have:
		\begin{align*}
			\norm{a-b} \geq \Con{\mu, \phi} \norm{a}.
		\end{align*}
	\end{lemma}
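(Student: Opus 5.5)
The plan is to reduce the claim to a statement about unit directions and then use the fact that the two double cones, seen from far away, are separated by a fixed angle. Let $a \in \operatorname{DC}_{\mu, \frac{\phi}{4}}(\theta_1)$ and $b \in \operatorname{DC}_{\mu, \frac{\phi}{4}}(\theta_2)$. By the alternative description in Remark~\ref{rem:ExplanationDoubleCone} there are $\tilde a, \tilde b \in \overline{B_\mu^d}(0)$ and signs $\sigma_a, \sigma_b$ with
\[
\norm{\theta_1 - \sigma_a \widehat{a-\tilde a}} \leq \tfrac{\phi}{4}, \qquad \norm{\theta_2 - \sigma_b \widehat{b-\tilde b}} \leq \tfrac{\phi}{4}.
\]
(If $b=0$ or $a=0$ the point lies in the ball and the cases below still apply.) Since $\norm{a-b} \geq \norm{(a-\tilde a)-(b-\tilde b)} - 2\mu$, it suffices to prove $\norm{a'-b'} \geq c\norm{a'}$ for $a' := a-\tilde a$ and $b' := b-\tilde b$. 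Here $\norm{a'} \geq \norm{a}-\mu \geq \norm{a}/2$ once $R_0 \geq 2\mu$. The additive $2\mu$ is then absorbed by choosing $R_0$ large, so that $2\mu \leq \frac{c}{4}\norm{a}$.

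The core step is the angular separation. By the triangle inequality,
\[
\norm{\hat a' - \hat b'} \geq \norm{\sigma_a\theta_1 - \sigma_b\theta_2} - \tfrac{\phi}{2}.
\]
Because $\norm{\theta_1 \pm \theta_2} \geq \phi$ for both signs, we get $\norm{\sigma_a\theta_1 - \sigma_b\theta_2} \geq \phi$, and hence $\norm{\hat a' - \hat b'} \geq \phi/2$. The same argument with $-\hat b'$ gives $\norm{\hat a' + \hat b'} \geq \phi/2$, so the angle between $a'$ and $\pm b'$ is bounded away from $0$. Consequently the distance from $a'$ to the whole line $\R b'$ is at least $\norm{a'}\sin\gamma$, where $\gamma>0$ is determined by $\phi$. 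Explicitly, if $\norm{\hat a' - \hat b'} \geq \phi/2$ and $\norm{\hat a' + \hat b'} \geq \phi/2$, then $|\langle \hat a', \hat b'\rangle| \leq 1 - \phi^2/8$, and
\[
\norm{a' - t b'}^2 \geq \norm{a'}^2\bigl(1-\langle \hat a',\hat b'\rangle^2\bigr) \quad \text{for all } t.
\]
In particular $\norm{a'-b'} \geq c_0 \norm{a'}$ with $c_0 := \sqrt{1-(1-\phi^2/8)^2} > 0$, after restricting to $\phi \leq 2$ (otherwise the hypotheses cannot both hold). The degenerate case $b' = 0$ is trivial, since then $\norm{a'-b'} = \norm{a'}$.

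Combining the pieces gives
\[
\norm{a-b} \geq c_0\norm{a'} - 2\mu \geq \tfrac{c_0}{2}\norm{a} - 2\mu \geq \tfrac{c_0}{4}\norm{a}
\]
provided $\norm{a} \geq R_0 := \max\{2\mu, 8\mu/c_0\}$. This yields $\Con{\mu,\phi} = c_0/4$. The only delicate point is keeping track of the sign ambiguity $\sigma$: one needs both hypotheses $\norm{\theta_1\pm\theta_2}\geq\phi$ to separate $a'$ from the full line through $b'$, and not merely from the ray. Everything else is bookkeeping with the thickening parameter $\mu$.
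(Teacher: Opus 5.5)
Your proof is correct and uses the same two ingredients as the paper's: an angular separation of size $\phi/2$ coming from $\norm{\theta_1\pm\theta_2}\geq\phi$ together with the $\phi/4$-cones, followed by a sine/projection bound. The one difference is that you measure angles between the apex-shifted vectors $a-\tilde a$ and $b-\tilde b$ rather than between $a$ and $b$. This makes the separation exact and lets the thickening enter only as an additive $2\mu$, so you avoid the paper's case split on $\norm{b}\le R_0/2$ and its direction-perturbation estimate $\norm{\hat a-\widehat{a-q_a}}\le 2\mu/(R_0-\mu)$, and you get explicit constants with $\Con{\mu,\phi}$ depending only on $\phi$.
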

	
	\begin{proof}
		\quad \\
		Let $a,b$ be as in the statement. First of all we can assume that $\Con{\mu, \phi}\leq \frac{1}{2}$, but we will fix its value and the value of $R_{0}(\mu, \phi)$ within the proof. \\
		If we have $\norm{b} \leq \frac{R_{0}(\mu, \phi)}{2}$, we get
		\begin{align*}
			\norm{a-b} \geq \norm{a} - \norm{b} \geq \Con{\mu, \phi} \norm{a}.
		\end{align*}
		So we can assume that $\norm{b} \geq \frac{R_{0}(\mu, \phi)}{2}$ is large as well. In this case we start with a comparison of the directions $\hat{a}$ and $\hat{b}$.
		\begin{figure}[htbp]
			\includegraphics[width=\textwidth]{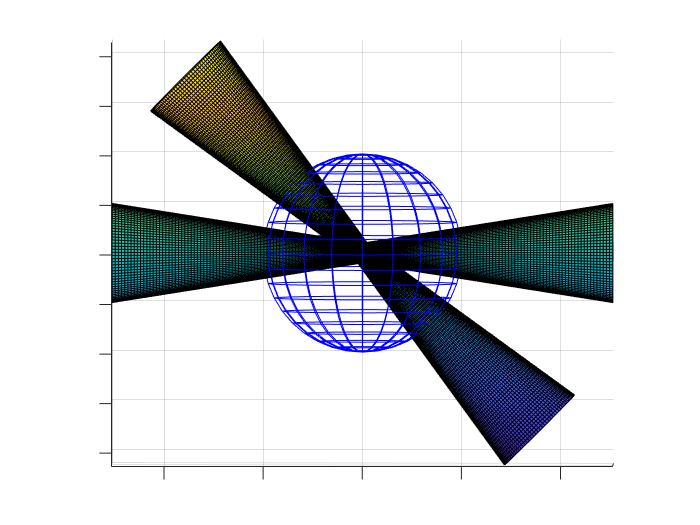}
			\caption[Angle between two double cones]{A graphical depiction of two double cones for the proof of Lemma~\ref{lem:MinDistDifferentCones}. The blue ball represents the ball of radius $R_0$ outside of which we have a minimal angle to points from the second double cone with sufficiently large norm as well.}
			\label{fig:TwoDoubleConesAndBall}
		\end{figure}
		Consider Figure~\ref{fig:TwoDoubleConesAndBall} for a graphical depiction of the idea. For that we fix $q_a, q_b \in \overline{B_{\mu}^{d}(0)}$ and $\sigma_a, \sigma_b \in \{-1,1\}$ such that
		\begin{align*}
			\norm{\theta_1 - \sigma_a\frac{a-q_a}{\norm{a-q_a}}} &\leq \frac{\phi}{4}, \quad &
			\norm{\theta_2 - \sigma_b\frac{b-q_b}{\norm{b-q_b}}} \leq \frac{\phi}{4}.
		\end{align*}
		Then we can estimate:
		\begin{align*}
			\norm{\hat{a} \pm \hat{b}} &\geq -\norm{\hat{a}-\frac{a-q_a}{\norm{a-q_a}}} - \norm{\frac{a-q_a}{\norm{a-q_a}} - \sigma_a \theta_1} + \norm{\sigma_a\theta_1 \pm \sigma_b \theta_2} \\ & \quad - \norm{\sigma_b \theta_2 - \frac{b-q_b}{\norm{b-q_b}}} - \norm{\frac{b-q_b}{\norm{b-q_b}} - \hat{b}} \geq \\
			& \geq \frac{\phi}{2} - \norm{\hat{a}-\frac{a-q_a}{\norm{a-q_a}}} - \norm{\frac{b-q_b}{\norm{b-q_b}} - \hat{b}}.
		\end{align*}
		Here we used our assumption on $\norm{\theta_1 \pm \theta_2}$ and the definitions of $q_a, q_b$. The last step consists now in considering
		\begin{align*}
			\norm{\hat{a}-\frac{a-q_a}{\norm{a-q_a}}} &\leq \frac{2 \norm{q_a}}{\norm{a-q_a}} \leq \frac{2 \mu}{R_{0}(\mu, \phi) - \mu}.
		\end{align*}
		A similar estimate holds for $\norm{\frac{b-q_b}{\norm{b-q_b}} - \hat{b}}$, as we assumed that  $\norm{b} \geq \frac{R_{0}(\mu, \phi)}{2}$. By choosing $R_{0}(\mu, \phi)$ sufficiently large, we get that
		\begin{align*}
			\norm{\hat{a} \pm \hat{b}} &\geq \frac{\phi}{4},
		\end{align*}
		i.e. there is a fixed $\phi$-dependent minimal angle between $a$ and $b$. But now we can use the law of sines to relate the distance $\norm{a-b}$ to $\norm{a}$, and thus get the statement by choosing $\Con{\mu, \phi}$ appropriately.
	\end{proof}
	
	For a subsystem $C \in \{\{1,2,3,4\}, \{5,6,7,8\} \}$ the interesting double cone will be of the form  $\operatorname{DC}_{\mu, \phi}(\operatorname{AD}_{C})$ for some suitable constants $\mu$ and $\phi$. We will show now that these cones with an arbitrary $\phi>0$ and suitable $\mu$ cannot be left as long as we are close to the escape time: 
	
	\begin{lemma}[Subsystems move in double cones] \label{lem:subsysDoubleCones} \quad \\
		Consider an unbounded subsystem $C$ of the final cluster decomposition. \\
		For arbitrary $\phi>0$, after some $t_0$ sufficiently close to the escape time and for some mass-dependent constant $\Con{\text{Cone}}$ we have for all $i \in C$ and $t \in (t_0, T^{+})$:
		\begin{align*}
			q_i(t) \in \operatorname{DC}_{\mu_q + \Con{\text{Cone}}, \phi}(\operatorname{AD}_{C}).
		\end{align*}
	\end{lemma}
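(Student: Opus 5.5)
We reduce the claim to a statement about the positions of the subclusters of $C$ relative to the center of mass $q_C$. By the remark after von Zeipel's theorem, $q_C(t)$ has a limit in $\R^d$, so close to $T^{+}$ we have $\norm{q_C(t)}\leq \mu_q$. By the alternative description of double cones in Remark~\ref{rem:ExplanationDoubleCone}, it then suffices to show the following: for every $i\in C$ and $t$ close to $T^{+}$, either $\norm{q_i(t)-q_C(t)}\leq \Con{\text{Cone}}$, or the direction of $q_i(t)-q_C(t)$ is $\phi$-close to $\pm\operatorname{AD}_C$. The first alternative is absorbed by the thickening, since we may take $\tilde q$ near $q_C$ or, if $q_i$ is itself within $\mu_q+\Con{\text{Cone}}$ of the origin, use the thickened ball directly. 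In the second alternative we take $\tilde q=q_C(t)$.

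We use the passage structure of Definition~\ref{def:NC_SingXiaType}. On each passage interval $[x^{(i)},s^{(i)}]$ the outer clusters $C_1^{(i)},C_3^{(i)}$ sit at distance $\sim\norm{q_{C_1^{(i)}}}$ from $q_C$, by \eqref{eq:Vergleichq1q3q13I}. The binary has diameter at most $\Con{sd}$. The messenger $C_2^{(i)}$ travels between them. First, Proposition~\ref{prop:ExistenceAsymptoticDirection} gives $\hat q_{C_1^{(i)}}(x^{(i)})\to \operatorname{AD}_C$ and $\hat q_{C_3^{(i)}}(x^{(i)})\to-\operatorname{AD}_C$. Next, we control the direction of the outer clusters over the whole passage. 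Their momenta are of size $\norm{p_{C_1^{(i)}}}\sim\norm{p_{C_2^{(i)}}}$, while their distance from $q_C$ grows. The change of their direction over a passage is therefore bounded by the transverse displacement divided by $\norm{q_{C_1^{(i)}}}$. By the comparisons of Corollary~\ref{cor:SomeComparisons}, this tends to zero, because the transverse momentum transfer per passage is small relative to $\norm{q}$. This gives $\norm{\hat q_{C_l^{(i)}}(t)\mp\operatorname{AD}_C}\leq\phi/2$ for $l\in\{1,3\}$ and large $i$. The messenger's position is a convex-like interpolation between the two outer clusters plus an error. Since the outer clusters lie on nearly opposite rays through $q_C$, the messenger is within bounded distance of the segment joining them. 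That segment lies in the thin double cone. Hence the messenger lies in a $\mu$-thickened cone, with the constant absorbed in $\Con{\text{Cone}}$. Finally, individual particles differ from their cluster barycenters by at most $\Con{sd}$. On the quasi-collision intervals $[s^{(i)},x^{(i+1)}]$, condition \eqref{bul:ConstDistDuringQuasicollision} keeps all participating particles within a constant of each other, so the same estimate applies.

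The main obstacle is the messenger near the center: when it passes close to $q_C$, its direction relative to $q_C$ is uncontrolled. We handle this by the thickening, using the second alternative above only when $\norm{q_{C_2^{(i)}}-q_C}$ exceeds a constant. For far positions, the deviation of the messenger from the line through the two outer clusters must be bounded uniformly in $i$. We would establish this via the integrated transverse force. The messenger's transverse momentum relative to the axis is generated only near the encounters, where it is of size $O(1)$ times the encounter time, plus the small external forces. Over the passage duration $\sim \norm{q}/\norm{p}$, this yields a transverse displacement of order $O(1)$ plus a vanishing angular error. If this uniform bound proves delicate, we fall back on the weaker statement allowing $\Con{\text{Cone}}$ to depend on $\phi$. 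That weaker form still suffices for Lemma~\ref{lem:MinDistDifferentCones}.
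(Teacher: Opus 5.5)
Your reduction is sound: boundedness of $q_C$, the alternative description of the cone, and the treatment of $[s^{(i)},x^{(i+1)}]$ via condition \eqref{bul:ConstDistDuringQuasicollision} and the center of mass all work. The central step, however, is missing.

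\textbf{What must be shown.} You need a uniform \emph{absolute} bound on the distance of every particle from one fixed line through $q_C$ whose direction is $\phi$-close to $\operatorname{AD}_C$. Angular closeness does not suffice. Suppose you only know $\norm{\hat q_{C_l^{(i)}}(t)\mp\operatorname{AD}_C}\le\phi/2$ for the outer clusters. Their transverse offsets may then be of order $\phi\norm{q_{C_1^{(i)}}}$. In that case the segment joining them crosses the hyperplane $\operatorname{AD}_C^{\perp}$ at distance of order $\phi\norm{q_{C_1^{(i)}}}\to\infty$ from $q_C$. So it is not contained in $\operatorname{DC}_{\mu,\phi}(\operatorname{AD}_C)$ for any fixed $\mu$.

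\textbf{The outer clusters.} Your argument only bounds the transverse momentum \emph{transfer} during a passage. The transverse velocity of $C_1^{(k)}$ right after its quasi-collision must be controlled separately, by the angular momentum bound $\norm{L_{C_1^{(k)}}}\le\operatorname{Const}$ (Lemma 4.20 of \cite{quaschner2025improbability}). This gives $\norm{p_{C_1^{(k)}}^{\perp}}\le\operatorname{Const}$. Hence the transverse displacement from $\operatorname{span}\{q_{C_1^{(k)}}(x^{(k)})\}$ is at most $\operatorname{Const}\cdot(s^{(k+1)}-x^{(k)})=o(1)$.

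\textbf{The messenger.} The mechanism you propose fails. The encounters happen in $[s^{(i)},x^{(i+1)}]$, where the messenger comes arbitrarily close to particles of the outer cluster. The forces there are not $O(1)$, so the transverse momentum they create is not bounded by $O(1)$ times the encounter time. You therefore have no bound on the messenger's transverse momentum at $x^{(i)}$. Falling back to a $\phi$-dependent $\Con{\text{Cone}}$ does not help, because the question is whether the offset is bounded at all.

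\textbf{The missing idea.} The paper closes the gap with a boundary-value argument that needs no a priori bound on the messenger's momentum.
Measure transverse components relative to $\operatorname{span}\{q_{C_1^{(k)}}(x^{(k)})\}$ in the subsystem's center-of-mass frame.
At $x^{(k)}$ the messenger is at distance $1$ from $C_1^{(k)}$, which lies on the line, so its offset is at most $1$.
At $s^{(k)}$ it is at distance $1$ from $C_3^{(k)}$. Also $q_{C_2^{(k)}\cup C_3^{(k)}}=-\frac{m_{C_1^{(k)}}}{m_{C_2^{(k)}\cup C_3^{(k)}}}q_{C_1^{(k)}}$ has offset $o(1)$. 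So the messenger's offset there is at most $1+o(1)$.
On $[x^{(k)},s^{(k)}]$ the forces on the messenger are bounded and the time is short, so its transverse momentum is almost constant. Its offset is therefore almost affine in $t$ and stays bounded between the endpoint values.
Together with the center-of-mass argument on the quasi-collision intervals, this gives $\operatorname{dist}(q_{\{i\}}(t),\operatorname{span}\{q_{C_1^{(k)}}(x^{(k)})\})\le\Con{\text{Cone}}$ with a $\phi$-independent constant. Lemma~\ref{lem:NestingDoubleCones} then finishes the proof.
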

	
	\begin{proof}
		\quad \\
		Let us consider $C= \{1,2,3,4\}$ and an arbitrary $\phi$. We consider passages with odd index sufficiently close to $T^{+}$ such that at the initial time $x^{(k)}$ of the $k$th passage we always have
		\begin{align*}
			\norm{\hat{q}_{C_1^{(k)}}\left(x^{(k)}\right) - \operatorname{AD}_{\{1,2,3,4\}}} \leq \frac{\phi}{4}.
		\end{align*}
		We will now consider two passages from time $x^{(k)}$ up to time $x^{(k+2)}$ and investigate, how far the particles can deviate from the line through $\hat{q}_{C_1^{(k)}}\left(x^{(k)}\right)$. \\
		First of all note that at time $x^{(k)}$ we have
		\begin{align*}
			\norm{\left(p_{C_1^{(k)}}\left(x^{(k)}\right)\right)^{\perp q_{C_1^{(k)}}\left(x^{(k)}\right)} } = \frac{\norm{L_{C_1^{(k)}}\left(x^{(k)}\right)}}{\norm{q_{C_1^{(k)}}\left(x^{(k)}\right)}} \leq \operatorname{Const}
		\end{align*}
		using \cite[Lemma 4.20, p. 58]{quaschner2025improbability}. Hence, by integrating the equations of motion, we have at any time $t \in \left[x^{(k)}, s^{(k+1)}\right]$:
		\begin{align*}
			\norm{\left(q_{C_1^{(k)}}\left(t\right)\right)^{\perp q_{C_1^{(k)}}\left(x^{(k)}\right)} } \leq \operatorname{Const} \cdot\left(s^{(k+1)} - x^{(k)}\right). 
		\end{align*}
		This is small and by a center of mass argument it suffices now to consider one other cluster. For the first passage up to time $s^{(k)}$ we take $C_2^{(k)}$. We know
		\begin{align*}
			\norm{\left(q_{C_2^{(k)}}\left(x^{(k)}\right)\right)^{\perp q_{C_1^{(k)}}\left(x^{(k)}\right)} } &\leq 1, \\
			\norm{\left(q_{C_2^{(k)}}\left(s^{(k)}\right)\right)^{\perp q_{C_1^{(k)}}\left(x^{(k)}\right)} } &\leq 1 + \norm{\left(q_{C_2^{(k)} \cup C_3^{(k)}}\left(s^{(k)}\right)\right)^{\perp q_{C_1^{(k)}}\left(x^{(k)}\right)} } = \\ & = 1 + \operatorname{Const} \norm{\left(q_{C_1^{(k)}}\left(s^{(k)}\right)\right)^{\perp q_{C_1^{(k)}}\left(x^{(k)}\right)} } \leq \\ & \leq 1 +o(1).
		\end{align*}
		But as $\left(p_{C_2^{(k)}}\left(x^{(k)}\right)\right)^{\perp q_{C_1^{(k)}}\left(x^{(k)}\right)}$ only changes little during this time interval due to the at most constant forces, we must have for all $t \in \left[x^{(k)}, s^{(k)}\right]$
		\begin{align*}
			\norm{\left(q_{C_2^{(k)}}\left(t\right)\right)^{\perp q_{C_1^{(k)}}\left(x^{(k)}\right)} } &\leq \operatorname{Const}.
		\end{align*}
		For times $t \in \left[s^{(k)}, x^{(k+1)}\right]$ the claim is clear due to the center of mass frame, as all particles from $C_2^{(k)} \cup C_3^{(k)}$ are at most a constant apart from the common center of mass. \\
		For the interval $\left[x^{(k+1)}, s^{(k+1)}\right]$ essentially the same argument works as for the interval $\left[x^{(k)}, s^{(k)}\right]$, but considering the possibly new cluster $C_2^{(k+1)}$ and the according expression $\left(q_{C_2^{(k+1)}}\right)^{\perp q_{C_1^{(k)}}\left(x^{(k)}\right)}$. \\
		Finally, during the interval $\left[s^{(k+1)}, x^{(k+2)}\right]$ we can again use the small distance from the common center of mass. \\
		Hence, we have established that for all $t \in \left[x^{(k)}, x^{(k+2)}\right]$ and $i \in \{1,2,3,4\}$ we have:
		\begin{align*}
			\operatorname{dist}\left(q_{\{i\}}(t), \operatorname{span}\left\{q_{C_1^{(k)}}\left(x^{(k)}\right)\right\}\right) \leq \Con{\text{Cone}}
		\end{align*}
		for some fixed mass-dependent constant $\Con{\text{Cone}}$ that could be calculated from the above estimates. But from this the existence of some $a_i(t) \in \R^{d}$ with $\norm{a_i(t)} \leq \mu_q + \Con{\text{Cone}}$ and
		\begin{align*}
			q_i(t) - a_i(t) \in \operatorname{span}\left\{q_{C_1^{(1)}}\left(x^{(1)}\right)\right\}
		\end{align*}
		follows. This, however, implies that
		\begin{align*}
			q_i(t) \in \operatorname{DC}_{\mu_q + \Con{\text{Cone}}, \frac{\phi}{4}}(\hat{q}_{C_1^{(1)}}\left(x^{(1)}\right)) \subseteq \operatorname{DC}_{\mu_q + \Con{\text{Cone}}, \phi}(\operatorname{AD}_{C}).
		\end{align*}
		Note that we would not even need $\frac{\phi}{4}$ for the first statement, we could even choose 0. The inclusion of the cones follows from Lemma~\ref{lem:NestingDoubleCones}, where even the cone $\operatorname{DC}_{\mu_q + \Con{\text{Cone}}, \frac{\phi}{4}}(\operatorname{AD}_{C})$ would be enough. As we considered an arbitrary double passage as long as it was close enough to the escape time, the claim follows.
	\end{proof}
	
	With these preparations we are able to prove that the integrated forces between the subsystems are sufficiently small during the passage of one subsystem:
	
	\begin{proposition}[Bound on the integrated forces between the subsystems during one passage] \label{prop:BoundIntegratedForcesOnePassage} \quad \\
		Consider two unbounded subsystems $U_1, U_2 \in \C_{\text{fin}}$ with four particles. Then there is a constant $\Con{F,\text{Cones}}$ depending on all constants of the Hamilton function and the fixed quantities describing the shape of the orbit, especially $\mu_q$ for the center of masses, $\delta$ for the minimal distance between particles from different subsystems, and $\beta$ for the angle between the asymptotic directions, such that for any passage $k$ of the subsystem $U_1$ sufficiently close to the escape time and any $i \in U_1$, $j \in U_2$ we have:
		\begin{align*}
			\int_{x^{(k)}}^{x^{(k+1)}} \norm{q_i(t)-q_j(t)}^{-(1+\alpha)} \diff t \leq \frac{\Con{F,\text{Cones}}}{\max\left\{ \norm{p_{C_1^{(k)}} \left(x^{(k)}\right)}, \norm{p_{C_3^{(k)}} \left(x^{(k)}\right)} \right\}}.
		\end{align*}
		Hence, the integrated forces between the systems are bounded in a suitable way.
	\end{proposition}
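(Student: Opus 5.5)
The plan is to reduce the estimate to the bounded-subsystem situation of Proposition~\ref{prop:EstimatesIntegratedForcesUnboundedBounded}. The only new feature is that particle $j \in U_2$ is no longer confined to a bounded set. The geometric replacement for this confinement is the double-cone separation.

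\emph{Step 1: fix the cones.} We apply Lemma~\ref{lem:subsysDoubleCones} with $\phi := \beta/4$ to both $U_1$ and $U_2$. For $t$ close to $T^{+}$ we then have
\begin{align*}
q_i(t) \in \operatorname{DC}_{\mu, \beta/4}(\operatorname{AD}_{U_1}), \quad q_j(t) \in \operatorname{DC}_{\mu, \beta/4}(\operatorname{AD}_{U_2}),
\end{align*}
with $\mu := \mu_q + \Con{\text{Cone}}$. By the standing assumption $\norm{\operatorname{AD}_{U_1} \pm \operatorname{AD}_{U_2}} \geq \beta$, Lemma~\ref{lem:MinDistDifferentCones} applies with $\phi = \beta$. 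Combining it with the minimal distance $\delta$ between particles of different final clusters (Remark~\ref{rem:AsymptoticGrafClusterDecomposition}) gives
\begin{align*}
\norm{q_i(t) - q_j(t)} \geq \operatorname{Const} \cdot \max\left\{\delta, \norm{q_i(t)}\right\}.
\end{align*}
The case $\norm{q_i} \leq R_0$ is covered by $\delta$. The essential point is that this lower bound depends only on the position of $i$, and not at all on where $j$ is.

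\emph{Step 2: reduce to an integral along the orbit of $U_1$.} It now suffices to bound
\begin{align*}
\int_{x^{(k)}}^{x^{(k+1)}} \max\left\{\delta, \norm{q_i(t)}\right\}^{-(1+\alpha)} \diff t .
\end{align*}
This is exactly the integral that appears in the proof of Proposition~\ref{prop:EstimatesIntegratedForcesUnboundedBounded} when the bounded subsystem $B$ sits near the origin. Up to the bounded drift of centers of mass, the quantity $\norm{q_i(t)}$ plays the role of $\norm{q_k - q_B}$. I would then repeat that argument with $q_B$ replaced by $0$.

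On $[x^{(k)}, s^{(k)}]$ we split according to the subcluster containing $i$.
\begin{itemize}
\item If $i$ lies in $C_1^{(k)}$ or $C_3^{(k)}$, then by \eqref{eq:Vergleichq1q3q13I} the quantity $\norm{q_i}$ is comparable to the large quantity $\norm{q_{C_1^{(k)}}}$. The integrand is therefore tiny, and it is integrated over a passage of length controlled by \eqref{eq:VergleichOrtDurchZeitImpuls23}.
\item If $i$ lies in $C_2^{(k)}$, we use convexity of $t \mapsto \norm{q_{C_2^{(k)}}(t)}^2$. Its second derivative is essentially $\norm{p_{C_2^{(k)}}}^2/m^2$, which by \eqref{eq:CompNormMomenta} is comparable to $\norm{p_{C_1^{(k)}}(x^{(k)})}^2$, up to the bounded external forces.
\end{itemize}
In the second case, there is a unique minimizing time. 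Propagation from that time gives the bound
\begin{align*}
\int \left(\delta^2 + c\,\norm{p_{C_1^{(k)}}}^2 (t-w)^2\right)^{-\frac{1+\alpha}{2}} \diff t \lesssim \frac{1}{\delta^{\alpha}\,\norm{p_{C_1^{(k)}}(x^{(k)})}},
\end{align*}
which has the required form.

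On $[s^{(k)}, x^{(k+1)}]$, the particles of $C_2^{(k)} \cup C_3^{(k)}$ stay within a constant distance of their common barycenter, by condition \eqref{bul:ConstDistDuringQuasicollision}. That barycenter is comparable in norm to $q_{C_1^{(k)}}$, which is far from the origin. Hence $\norm{q_i} \gtrsim \norm{q_{C_1^{(k)}}(x^{(k)})}$ on this interval as well.

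\emph{Step 3: handle the last interval (main obstacle).} The step I expect to be hardest is bounding the length of $[s^{(k)}, x^{(k+1)}]$, or more precisely showing that
\begin{align*}
\frac{x^{(k+1)} - s^{(k)}}{\norm{q_{C_1^{(k)}}(x^{(k)})}^{1+\alpha}} \lesssim \frac{1}{\norm{p_{C_1^{(k)}}(x^{(k)})}}.
\end{align*}
This amounts to checking that the time between passages is at most of order $\norm{q_{C_1}}/\norm{p_{C_1}}$, together with $\norm{q_{C_1}} \gtrsim 1$. I would get this from the comparison relations in Corollary~\ref{cor:SomeComparisons}, since they were derived without condition \eqref{bul:ConIntegratedForces}. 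The argument is that the messenger must cross a distance $\sim \norm{q_{C_1}-q_{C_3}}$ at speed $\sim \norm{p_{C_2}} \sim \norm{p_{C_1}}$. If this comparison is not directly available, I would fall back on the cruder bound obtained from the total time to escape together with the growth of $\norm{q_{C_1}}$.

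Finally, the maximum in the denominator is handled by $\norm{p_{C_3^{(k)}}} \lesssim \norm{p_{C_2^{(k)}}} \sim \norm{p_{C_1^{(k)}}}$. All constants depend only on $\mu_q$, $\delta$ and $\beta$, via Lemma~\ref{lem:MinDistDifferentCones}, and on the uniform comparison constants.
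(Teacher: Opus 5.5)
Your Step 1 and your convexity argument for $i \in C_2^{(k)}$ on $[x^{(k)}, s^{(k)}]$ are sound and agree with the paper. The bound $\norm{q_i - q_j} \geq \operatorname{Const}\max\{\delta, \norm{q_i}\}$ is a compact form of the paper's split into the interaction zone $\norm{q_{C_2^{(k)}}} \leq 2(R_0+\mu_q)$ and its complement.

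The gap is wherever you bound an integral along the outer clusters by the size of the integrand times the length of the time interval. Dropping the index $(k)$, take $i \in C_1 \cup C_3$ on $[x^{(k)}, s^{(k)}]$. Relation \eqref{eq:VergleichOrtDurchZeitImpuls23} gives a passage length $\sim \norm{q^{I}_{C_2,C_3}(x^{(k)})}/\norm{p^{I}_{C_2,C_3}(x^{(k)})}$, so your estimate only yields $\norm{q_{C_1}(x^{(k)})}^{-\alpha}/\norm{p^{I}_{C_2,C_3}(x^{(k)})}$. But \eqref{eq:CompNormMomenta} gives only $\norm{p^{I}_{C_2,C_3}} \lesssim \norm{p_{C_1}}$, not the reverse. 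The relative speed of the messenger and the receding cluster $C_3$ can be much smaller than $\norm{p_{C_1}}$; the paper stresses exactly this point right after Corollary~\ref{cor:SomeComparisons}. So you do not get $\operatorname{Const}/\norm{p_{C_1}(x^{(k)})}$.

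Step 3 has the same defect and also misidentifies the interval. $[s^{(k)}, x^{(k+1)}]$ is not the messenger's crossing but the quasi-collision of $C_2$ with $C_3$ (condition \eqref{bul:ConstDistDuringQuasicollision}). Corollary~\ref{cor:SomeComparisons} does not control its duration at all. Even the crossing is governed by $\norm{p^{I}_{C_2,C_3}}$ rather than $\norm{p_{C_2}}$, since $C_3$ is receding. Your fallback via the finite escape time gives no per-passage bound of order $\norm{p_{C_1}(x^{(k)})}^{-1}$.

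The missing idea is that you never need the length of these intervals. During $[x^{(k)}, x^{(k+1)}]$ the cluster $C_1^{(k)}$ stays far from the rest of its subsystem and only bounded forces act on it. So $p_{C_1^{(k)}}$ is almost constant, and by Proposition~\ref{prop:BoundsOrthogonalComponents} it is almost parallel to $q_{C_1^{(k)}}$, pointing outward. Hence $\norm{q_{C_1^{(k)}}(t)}$ grows at rate $\sim \norm{p_{C_1^{(k)}}(x^{(k)})}$, and a propagation estimate gives
\[
\int_{s^{(k)}}^{x^{(k+1)}} \norm{q_{C_1^{(k)}}(t)}^{-(1+\alpha)} \diff t \lesssim \frac{1}{\norm{p_{C_1^{(k)}}(x^{(k)})}\,\norm{q_{C_1^{(k)}}(s^{(k)})}^{\alpha}},
\]
independently of $x^{(k+1)} - s^{(k)}$. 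Combined with your pointwise bound $\norm{q_i} \gtrsim \norm{q_{C_1^{(k)}}}$ on that interval, this is how the paper closes the last interval.

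For $i \in C_1^{(k)} \cup C_3^{(k)}$ on $[x^{(k)}, s^{(k)}]$, the paper instead uses the pointwise comparison $\norm{q_{C_1^{(k)}}}, \norm{q_{C_3^{(k)}}} \gtrsim \norm{q_{C_2^{(k)}}}$. These integrands are then dominated by the $C_2^{(k)}$ integrand that you already control by convexity. The outward propagation of $q_{C_1^{(k)}}$ would work there as well.
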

	
	\begin{proof}
		\quad \\
		Using Lemma~\ref{lem:subsysDoubleCones} we can assume that in all considered passages both systems always stay within $\operatorname{DC}_{\mu_q + \Con{\text{Cone}}, \frac{\beta}{4}}(\operatorname{AD}_{U_i})$, $i=1,2$. By the assumption on the asymptotic directions we know that
		\begin{align*}
			\norm{\operatorname{AD}_{U_1}- \operatorname{AD}_{U_2}} &\geq \beta, \quad & \norm{\operatorname{AD}_{U_1} + \operatorname{AD}_{U_2}} &\geq \beta.
		\end{align*}
		So Lemma~\ref{lem:MinDistDifferentCones} is applicable and gives us some constants
		\begin{align*}
			R_0 := R_0(\mu_q + \Con{\text{Cone}}, \beta), \quad \text{and } 
			\Con{1} := \Con{\mu_q + \Con{\text{Cone}}, \beta}
		\end{align*}
		such that we can bound from below the distance between points from the different cones as long as one of them is large. Now we will prove the statement for the first subsystem $U_1=\{1,2,3,4\}$. \\
		First of all we should note that it is enough to consider particles from $C_2^{(k)}$. This is due to the fact that by the comparison relations from Corollary~\ref{cor:SomeComparisons} there are mass-dependent constants such that
		\begin{align*}
			\norm{q_{C_1^{(k)}}} \geq \operatorname{Const} \norm{q_{C_2}^{(k)}}, \quad  \norm{q_{C_3^{(k)}}} \geq \operatorname{Const} \norm{q_{C_2}^{(k)}}.
		\end{align*}
		With this estimate the analogous reasoning as below can be applied to particles from $C_1^{(k)}$ or $C_3^{(k)}$. \\
		Next we split the passage from $x^{(k)}$ to $s^{(k)}$ further. Denote with $u_{\text{in}}$ and $u_{\text{out}}$ the first resp. last time with
		\begin{align*}
			\norm{q_{C_2^{(k)}}\left(u_{\text{in}}\right)} = 2\left(R_0 + \mu_q\right), \quad \norm{q_{C_2^{(1k)}}\left(u_{\text{out}}\right)} = 2\left(R_0 + \mu_q\right),
		\end{align*} 
		if $\norm{q_{C_2^{(k)}}} \leq 2\left(R_0 + \mu_q\right)$ holds at some time in the interval. 
		\begin{figure}[htbp]
			\includegraphics[width=\textwidth]{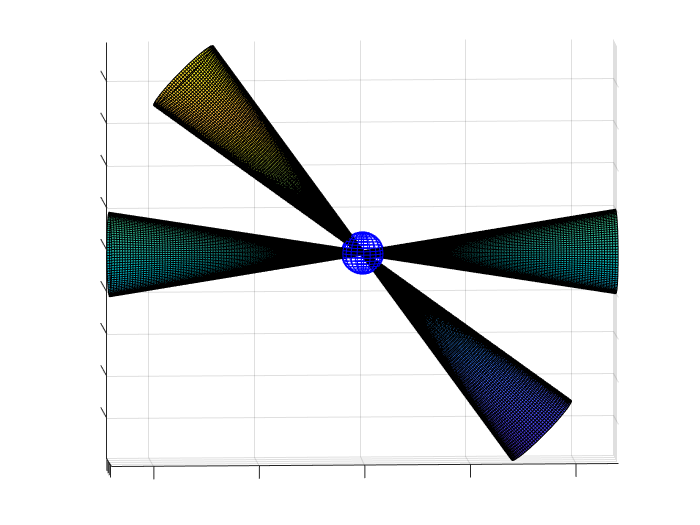}
			\caption[Interaction zone for the two subsystems]{Two double cones for the respective subsystems and the blue ball depicting the (small) interaction zone. $u_{\text{in}}$ and $u_{\text{out}}$ from Proposition~\ref{prop:BoundIntegratedForcesOnePassage} are the times when $q_{C_2^{(k)}}$ is entering this zone. Inside the zone we have constant force bounds, whereas outside of it we can use propagation estimates.}
			\label{fig:TwoDoubleConesSmallInteractionZone}
		\end{figure}
		For a schematic illustration, we refer to Figure~\ref{fig:TwoDoubleConesSmallInteractionZone}.\\
		For the uniqueness of these times we can argue as follows: First of all close to the escape time we know that $\norm{q_{C_2^{(k)}}\left(x^{(k)}\right)} > 2\left(R_0 + \mu_q\right)$ has to hold. Then one can easily see that
		\begin{align*}
			\ddot{J}_{C_2^{(k)}} \geq \operatorname{Const} \norm{p_{C_2}^{(k)}\left(x^{(k)}\right)}^2
		\end{align*}
		in this interval, hence the function is convex. Finally we can assume that $\norm{q_{C_2^{(k)}}} \leq 2\left(R_0 + \mu_q\right)$ has to hold at some time, otherwise the argument is even simpler and we would only need to consider the time of the global minimum of $\norm{q_{C_2^{(k)}}}$ for the following analysis, consider the case of a bounded subsystem in Proposition~\ref{prop:EstimatesIntegratedForcesUnboundedBounded}. \\
		The idea is as follows: Within the time interval $\left[u_{\text{in}}, u_{\text{out}}\right]$ we can only use the constant distance bound between particles in $C_2^{(k)}$ and particles from the other subsystem. Hence, we have to ensure that this time interval is short. Outside of this interval, we can relate the distance to $\operatorname{Const} \norm{q_{C_2^{(k)}}}$ using Lemma~\ref{lem:subsysDoubleCones} and then we can estimate the integral using a propagation estimate.\\
		First we consider the interval $\left[u_{\text{in}}, u_{\text{out}}\right]$. Using a simple propagation estimate (from the time of a global minimum of $J_{C_2^{(k)}}$ during this passage in both time directions), we get that
		\begin{align*}
			u_{\text{out}} -   u_{\text{in}} \leq \frac{\operatorname{Const} \cdot 4\left(R_0 + \mu_q\right)}{\norm{p_{C_2^{(k)}}\left(x^{(k)}\right)}}.
		\end{align*}
		Hence, we get for any $i \in C_2^{(k)}$ and $j \in U_2$:
		\begin{align*}
			\int_{u_{\text{in}}}^{u_{\text{out}}} \norm{q_i(t)-q_j(t)}^{-(1+\alpha)} \diff t \leq \frac{\operatorname{Const} \cdot 4\left(R_0 + \mu_q\right) \delta^{-1-\alpha}}{\norm{p_{C_2}\left(x^{(1)}\right)}}.
		\end{align*}
		Next we consider the time interval $\left[x^{(1)}, u_{\text{in}}\right]$, the interval $\left[u_{\text{out}}, s^{(1)}\right]$ can be treated analogously. First of all we notice that for these times we have
		\begin{align*}
			\norm{q_i(t)} = \norm{q_{\{i\}}(t) + q_{\{1,2,3,4\}}(t)} \geq \norm{q_{C_2^{(k)}}(t)} - \norm{q_{\{1,2,3,4\}}(t)} + o(1) > R_0.
		\end{align*}
		Here one has to recall that our cluster coordinates are with respect to the center of mass of the subsystem $U_1$ respectively $U_2$, whereas the single coordinates $q_i$ are with respect to the origin in $\R^{d}$. But thus we can estimate 
		\begin{align*}
			\norm{q_i(t)-q_j(t)} \geq \Con{1} \norm{q_i(t)}
		\end{align*}
		and use this for estimating the integral. Additionally we get
		\begin{align*}
			\norm{q_i(t)} &\geq \frac{\norm{q_{C_2^{(1)}}(t)}}{2},
			\intertext{hence}
			\int_{x^{(k)}}^{u_{\text{in}}} \norm{q_i(t)-q_j(t)}^{-(1+\alpha)} \diff t &\leq \operatorname{Const}
			\int_{x^{(k)}}^{u_{\text{in}}} \norm{q_{C_2^{(k)}}(t)}^{-(1+\alpha)}.
			\intertext{Now we can use the propagation estimate in negative time direction and obtain}
			&\leq \frac{\operatorname{Const}}{\norm{p_{C_2^{(k)}}\left(x^{(k)}\right)} \norm{q_{C_2^{(k)}}\left(u_{\text{in}}\right)}^{-\alpha}} \leq \\& \leq \frac{\operatorname{Const} 2^\alpha\left(R_0+\mu_q\right)^{\alpha}}{\norm{p_{C_2^{(k)}}\left(x^{(k)}\right)}}.
			\intertext{For the interval $\left[u_{\text{out}}, s^{(k)}\right]$ we get a similar estimate:}
			\int_{u_{\text{out}}}^{s^{(k)}} \norm{q_i(t)-q_j(t)}^{-(1+\alpha)} \diff t &\leq \frac{\operatorname{Const} 2^\alpha\left(R_0+\mu_q\right)^{\alpha}}{\norm{p_{C_2^{(k)}}\left(x^{(k)}\right)}}.
		\end{align*}
		Finally it remains to consider the time interval $\left[s^{(k)}, x^{(k+1)}\right]$. Here we will use a comparison with $\norm{q_{C_1^{(k)}}}$. We can use Lemma~\ref{lem:subsysDoubleCones} as all norms of $q_i$ are large, and then we can compare them by a constant with $\norm{q_{C_1^{(k)}}}$:
		\begin{align*}
			\int_{s^{(k)}}^{x^{(k+1)}} \norm{q_i(t)-q_j(t)}^{-(1+\alpha)} \diff t &\leq \operatorname{Const} \int_{s^{(k)}}^{x^{(k+1)}} \norm{q_{C_1^{(k)}}(t)}^{-(1+\alpha)} \diff t \leq \\ &\leq \frac{\operatorname{Const}}{\norm{p_{C_1^{(k)}}\left(x^{(k)}\right)} \norm{q_{C_1^{(k)}}\left(s^{(k)}\right)}^{\alpha}}.
		\end{align*}
		Here we have used again a propagation estimate, this time for $q_{C_1^{(k)}}$. As $\norm{q_{C_1^{(k)}}\left(s^{(k)}\right)}^{\alpha}$ is even large, this is a sufficiently small term. \\
		The statement of the proposition now follows from these estimates as we can compare the different momenta by mass-dependent constants, compare Corollary~\ref{cor:SomeComparisons}.
	\end{proof}

	\subsubsection{Conclusions from the bound on the external forces}
	\label{subsubsec:ConclusionsBoundExternalForces}
	
	From Proposition~\ref{prop:BoundIntegratedForcesOnePassage} we can conclude that the conditions on the integrated forces are satisfied for each unbounded subsystem with four particles. Therefore, we can apply \cite[Proposition 4.30, Lemma 4.32]{quaschner2025improbability} to obtain the following result:
	
	\begin{proposition}[{Bounds for the orthogonal components in the interval $\left[x^{(i)}, s^{(i)}\right]$}] \label{prop:BoundsOrthogonalComponents}
		Consider the $i$th passage of an unbounded subsystem with sufficiently large odd $i$. Then we have for an arbitrary time $t \in \left[x^{(i)},s^{(i)}\right]$ with $\L$ being some (global) bound for the total angular momentum:
		\begin{align*}
			\norm{\left(q_{C_1^{(i)}}(t)\right)^{\perp\hat{p}_{C_1^{(i)}}\left(t\right)}} &\leq \frac{\operatorname{Const}(\L)}{\norm{p_{C_1^{(i)}}\left(t\right)}},\\ 
			\norm{\left(p_{C_2^{(i)},C_3^{(i)}}^{I}(t)\right)^{\perp \hat{p}_{C_1^{(i)}}(t)}} &\leq \frac{\operatorname{Const}(\L)}{\norm{q_{C_2^{(i)}, C_3^{(i)}}^{I}(t)}}, \\ 
			\norm{\left(q_{C_2^{(i)},C_3^{(i)}}^{I}(t)\right)^{\perp\hat{p}_{C_1^{(i)}}\left(t\right)}} &\leq \frac{\operatorname{Const}(\L)}{\norm{p_{C_2^{(i)},C_3^{(i)}}^{I}\left(t\right)}}.
		\end{align*}
		For $t = x^{(i)}$ (and even on a time interval up to some time $s_{\min} \geq x^{(i)}$) we also have
		\begin{align*}
			\norm{\left(p_{C_3^{(i)}}(t)\right)^{\perp\hat{q}_{C_3^{(i)}}\left(t\right)}} &\leq \frac{\operatorname{Const}(\L)}{\norm{q_{C_3^{(i)}}\left(t\right)}},\\ 
			\norm{\left(p_{C_1^{(i)},C_2^{(i)}}^{I}(t)\right)^{\perp \hat{q}_{C_3^{(i)}}(t)}} &\leq \frac{\operatorname{Const}(\L)}{\norm{q_{C_1^{(i)}}(t)}}, \\ 
			\norm{\left(q_{C_1^{(i)},C_2^{(i)}}^{I}(t)\right)^{\perp\hat{q}_{C_3^{(i)}}\left(t\right)}} &\leq \frac{\operatorname{Const}(\L)}{\norm{p_{C_1^{(i)}, C_2^{(i)}}^{I}\left(t\right)}}.
		\end{align*}
	\end{proposition}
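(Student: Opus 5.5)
The statement is essentially a transfer result. Proposition~\ref{prop:BoundIntegratedForcesOnePassage} and Proposition~\ref{prop:EstimatesIntegratedForcesUnboundedBounded} verify condition~\eqref{bul:ConIntegratedForces} of Definition~\ref{def:NC_SingXiaType}. The other conditions were discussed after that definition. So the plan is to check that each unbounded subsystem $U_i$, with all other particles treated as external sources, satisfies the full hypotheses of \cite[Proposition 4.30, Lemma 4.32]{quaschner2025improbability}. The claimed bounds are then read off from those results.

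First I will fix a cluster $U \in \C_{\text{fin}}$ with $|U|=4$ and write its equations of motion in the form \eqref{eq:EquationsMotionsWithForces}. The external forces have the form \eqref{eq:ForcesAreCenterlike}, with $W_l(x) = -\alpha m_l x/\norm{x}^{2+\alpha}$ and $c_l = q_l$ for $l \notin U$. By Remark~\ref{rem:AsymptoticGrafClusterDecomposition}(d), after time $s_1$ particles from different final clusters keep a distance at least $\delta(\delta_{\min})$. This gives condition~\eqref{bul:FurtherForceConditions_GenAssumptionsPerturbedPassage} and the required bounds on $W_l$ and $DW_l$ away from $B_\delta(0)$.

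Next I will use that $\norm{F_k} \le \operatorname{Const}\sum_{j\notin U}\norm{q_k-q_j}^{-1-\alpha}$. Summing Proposition~\ref{prop:EstimatesIntegratedForcesUnboundedBounded} over the bounded clusters and Proposition~\ref{prop:BoundIntegratedForcesOnePassage} over the other unbounded clusters gives \eqref{eq:ForceIntegratedSmall}. The constant is uniform over passages sufficiently close to $T^{+}$, because there are only finitely many clusters and the assumption $AD_{U_i}\neq\pm AD_{U_j}$ supplies a uniform $\beta>0$. Even-indexed passages follow by exchanging the roles of $C_1$ and $C_3$. The remaining conditions hold by the classical results, as sketched after Definition~\ref{def:NC_SingXiaType}. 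In particular, Painlevé for subsystems gives \eqref{bul:SmallDiameterClusters} and \eqref{bul:ConNegEnergyClusters}, and von Zeipel gives \eqref{bul:NC_Sing_Condition}. Dropping finitely many initial passages only shifts the index $i$.

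It remains to identify $\L$. The internal angular momentum of $U$ changes by the integrated torques $\sum_{k\in U} q_{\{k\}}\wedge F_k$. I expect this to be the main obstacle, because $\norm{q_{\{k\}}}$ diverges, so bounded forces alone do not bound the torque. I would bound the change over each passage by $\operatorname{Const}\,\norm{q_{C_1^{(i)}}}\int\norm{F_k}\diff t \lesssim \norm{q_{C_1^{(i)}}}/\norm{p_{C_1^{(i)}}}$, using Corollary~\ref{cor:SomeComparisons}. Here $\norm{q_{C_1^{(i)}}}/\norm{p_{C_1^{(i)}}}$ is comparable to the passage duration, and these durations sum to at most $T^{+}$. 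The angular momentum therefore stays globally bounded by some $\L$, provided the argument in \cite{quaschner2025improbability} already handles this estimate; otherwise I would insert it explicitly at this point. With all hypotheses in place, \cite[Proposition 4.30, Lemma 4.32]{quaschner2025improbability} give the orthogonal-component bounds on $[x^{(i)},s^{(i)}]$ and, near $x^{(i)}$, up to some $s_{\min}$.
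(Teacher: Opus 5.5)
Your proposal is correct and follows the paper's route: the paper likewise treats all particles outside the four-particle cluster as external centres, regards the hypotheses of Definition~\ref{def:NC_SingXiaType} as already settled apart from the integrated-force condition~\eqref{bul:ConIntegratedForces}, gets that condition from Propositions~\ref{prop:EstimatesIntegratedForcesUnboundedBounded} and~\ref{prop:BoundIntegratedForcesOnePassage}, and then invokes \cite[Proposition 4.30, Lemma 4.32]{quaschner2025improbability}. Your extra paragraph on a global bound $\L$ goes beyond what the paper writes (it takes $\L$ as given) and does no harm; note only that you have, and need, just passage duration $\gtrsim \norm{q_{C_1^{(i)}}}/\norm{p_{C_1^{(i)}}}$, not genuine comparability.
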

	
	However, this bound is valid only on a subinterval of the passage. To obtain a set of finite volume for the whole passage, we need to investigate the motion before the time $x^{(i)}$ and after the time $s^{(i)}$.
	
	\subsection{Investigation of the motion on an extended interval}
	\label{subsec:InvestigationMotionExtendedInterval}
	
	By now, we can only control the motion of an unbounded subsystem with four particles on the interval $\left[x^{(i)}, s^{(i)}\right]$. To obtain a set of finite volume, in which the subsystem is at any time, we need to investigate also the time intervals $\left[s^{(i-1)}, x^{(i)}\right]$ and $\left[s^{(i)}, x^{(i+1)}\right]$. We will need some further reference times for this, compare \cite[Definition 4.3]{quaschner2023non}:
	
	\begin{definition}[Some further reference times] \label{def:FurtherReferenceTimes} \quad \\
		We consider the $k$th passage of an unbounded subsystem with four particles with odd index $k$. For future uses we define some further reference times and give a pictorial view of these times for the interesting passage in $\left[s^{(k-1)}, x^{(k+1)}\right]$ in Figure~\ref{fig:SkizzeZeiten0_t0_s}. Then we define the time $\tnez$ as the maximal time before $x^{(k)}$ with
		\begin{align*}
			\norm{q_{C_1^{(k)}, C_2^{(k)}}^{I} \left(\tnez\right)} = \norm{p_{C_1^{(k)}, C_2^{(k)}}^{I} \left(x^{(k)}\right)}^{-\gamma},
		\end{align*}
		where $1<\gamma < \frac{2}{\alpha}$ is a fixed constant (close to $\frac{2}{\alpha}$). \\ Similarly, we define $\tnzd$ as the minimal time after $s^{(k)}$ with
		\begin{align*}
			\norm{q_{C_2^{(k)}, C_3^{(k)}}^{I} \left(\tnzd\right)} = \norm{p_{C_2^{(k)}, C_3^{(k)}}^{I} \left(x^{(k)}\right)}^{-\gamma}.
		\end{align*}
		We also need one intermediate time $u_{ch}^{(k)}$, compare also \cite[Definition 4.24]{quaschner2025improbability}
		\begin{align*}
			\norm{q_{C_1^{(k)}, C_2^{(k)}}^{I} \left(u_{ch}^{(k)}\right)} &= \left(1+\frac{m_{C_3^{(k)}}}{m_{C_1^{(k)}}}\right) \norm{q_{C_2^{(k)}, C_3^{(k)}}^{I} \left(u_{ch}^{(k)}\right)}.
		\end{align*}
		For passages with even index $k$, one has to swap the roles of $C_1^{(k)}$ and $C_3^{(k)}$ in the definition as usual.
	\end{definition}
	
	\begin{figure}[htbp]
		\includegraphics[width=\textwidth]{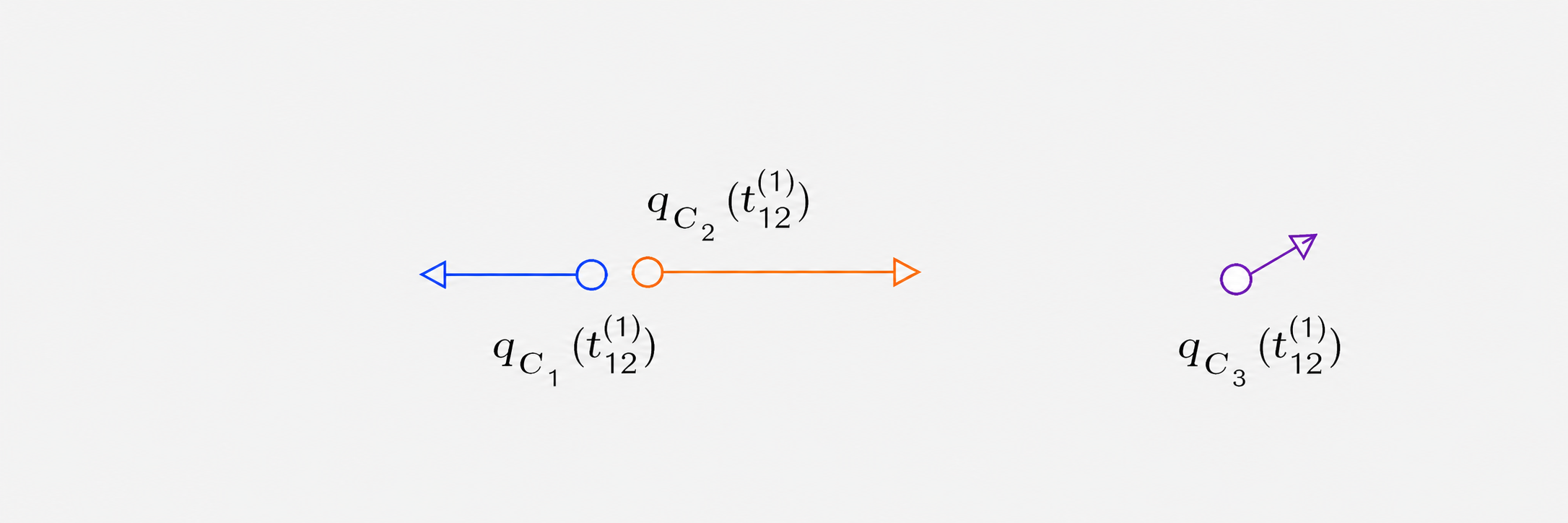}
		\includegraphics[width=\textwidth]{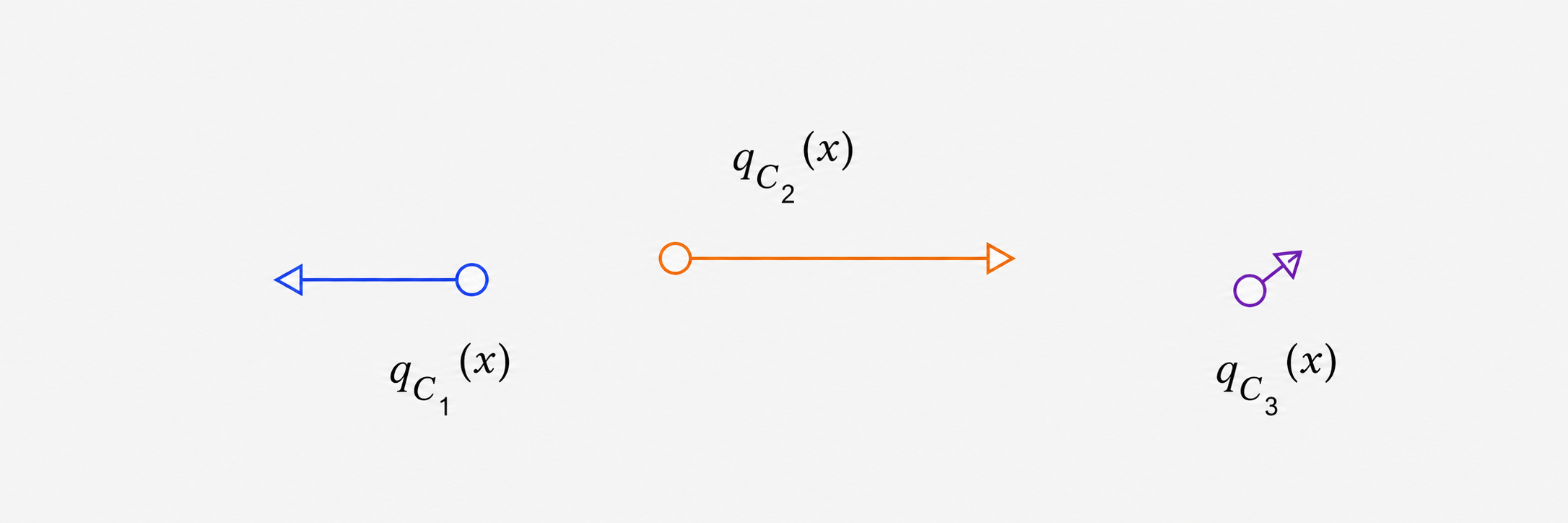}
		\includegraphics[width=\textwidth]{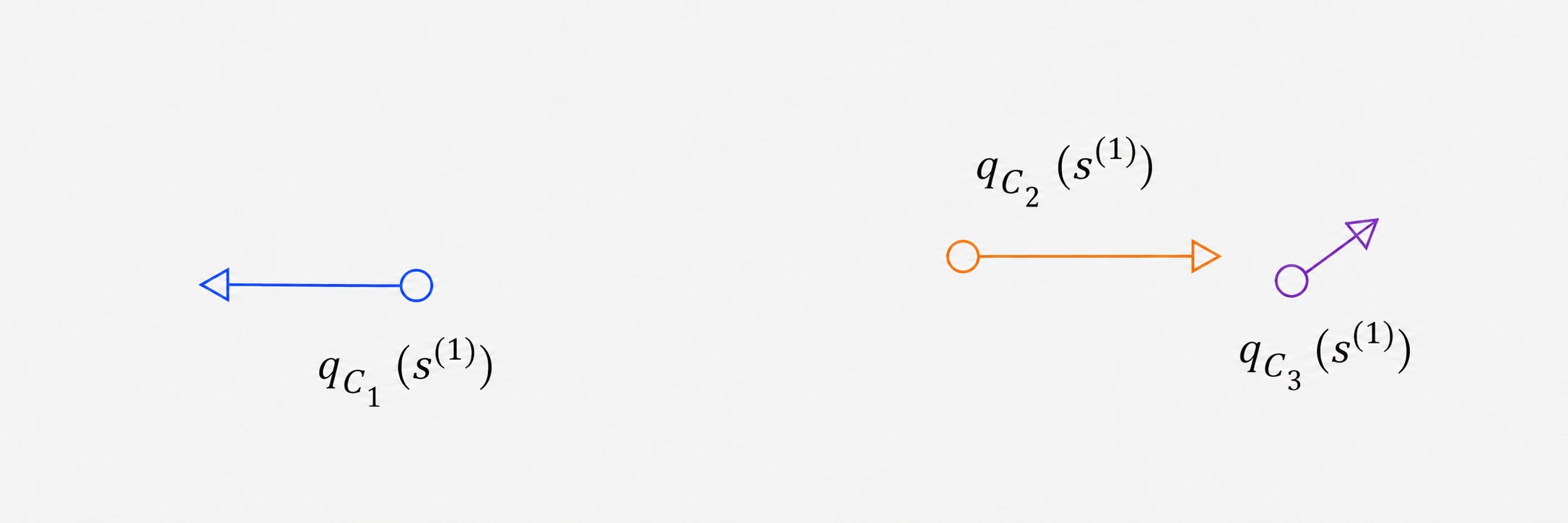}
		\includegraphics[width=\textwidth]{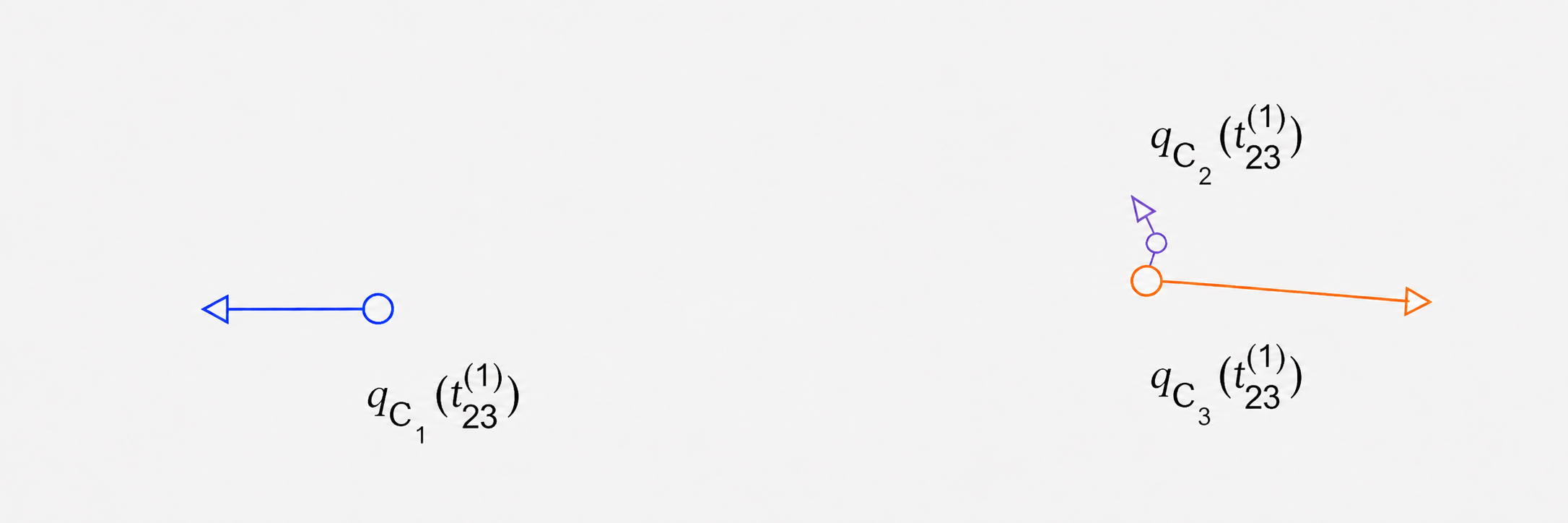}
		\caption[Sketch of the times during one passage: $\tnez$, $x^{(k)}$, $s^{(k)}$, $\tnzd$.]{The small pictures represent the times  $\tnez$, $x^{(k)}$, $s^{(k)}$, $\tnzd$ from the above definition for the first passage. The dots represent the position of the corresponding cluster barycenters and the arrows show the corresponding momenta.}
		\label{fig:SkizzeZeiten0_t0_s}
	\end{figure}

	These reference times are well-defined, as is proven in \cite[Propositions 4.6, 4.8]{quaschner2025improbability} using some propagation estimates. With the help of these propagation estimates, one can control the change of the momenta in a sufficient way. Outside of the interval $\left[\tnez, \tnzd \right]$ there are always three bodies close to each other. This yields a complicated near-collision dynamics of three bodies that is not well-understood by now. However, it is enough that these particles are close together to define suitable bounds for the motion.
	\\
	We will first consider the motion on the interval $\left[\tnez, x^{(k)}\right]$ and $\left[s^{(k)}, \tnzd\right]$. Afterwards, we will give some bounds on the complicated regions $\left[t_{12}^{(k-1)}, \tnez\right]$ and $\left[\tnzd, t_{23}^{(k+1)}\right]$.

	\subsubsection{Refined Investigation of the Motion before Time $x^{(k)}$}
	\label{sec:RigInvSmallTimesVierTeilchen}
	
	To investigate the motion in the interval $\left[\tnez, x^{(1)}\right]$,  we fix $\hat{q}_{C_3^{(k)}}\left(x^{(k)}\right)$ as a common reference direction. This seems plausible, since cluster $C_3$ is far away; hence only small forces act to change the momentum, and from bounds on the angular momentum the directions of position and momentum are almost aligned. The following lemma proves that the direction of the position does not change much:
	\begin{lemma}[{Change of $\hat{q}_{C_3}$ in the interval $\left[\tnez,x^{(k)}\right]$}] \label{lem:ChangeDirqC3tnezx} \quad \\
		For a passage with odd index $k$ we have:
		\begin{align*}
			\int_{\tnez}^{x^{(k)}} \norm{\dot{\hat{q}}_{C_3^{(k)}}(u)} \diff u \leq \frac{\operatorname{Const} \cdot \L}{\norm{q_{C_3^{(k)}}\left(x^{(k)}\right)}^2 \norm{p_{C_1^{(k)},C_2^{(k)}}^{I}\left(x^{(k)}\right)}}.
		\end{align*}
	\end{lemma}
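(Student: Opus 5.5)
The starting point is the elementary identity for the derivative of a unit vector,
\begin{align*}
	\dot{\hat{q}}_{C_3} = \frac{\left(\dot{q}_{C_3}\right)^{\perp \hat{q}_{C_3}}}{\norm{q_{C_3}}} = \frac{\left(p_{C_3}\right)^{\perp \hat{q}_{C_3}}}{m_{C_3}\norm{q_{C_3}}},
\end{align*}
where $q_{C_3}=q_{C_3^{(k)}}$ is taken relative to the center of mass of the subsystem. Since $\norm{q_{C_3} \wedge p_{C_3}} = \norm{q_{C_3}}\cdot\norm{(p_{C_3})^{\perp \hat{q}_{C_3}}}$, this becomes
\begin{align*}
	\norm{\dot{\hat{q}}_{C_3}} = \frac{\norm{L_{C_3}}}{m_{C_3}\norm{q_{C_3}}^2}.
\end{align*}
The integrand is therefore controlled by the external angular momentum of $C_3^{(k)}$ and by $\norm{q_{C_3^{(k)}}}^{-2}$. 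The proof then reduces to three points: a uniform bound on $\norm{L_{C_3^{(k)}}}$ on $\left[\tnez, x^{(k)}\right]$, a lower bound on $\norm{q_{C_3^{(k)}}}$ on this interval, and a bound on the length of the interval.

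For the angular momentum, I would use the decomposition of Lemma~\ref{lem:SplitSystemQuantitiesClusterDecomposition}, or equivalently Jacobi coordinates via Lemma~\ref{lem:PropJacobiCoordinates} with $C_3$ split off first, writing $L_{C_3}$ through $L^{I}_{C_1\cup C_2, C_3}$. This Jacobi angular momentum changes only through the external forces $F_i$, because internal forces of $U$ are central in the relevant pairs, and through the torque exerted by $C_1\cup C_2$ on $C_3$ beyond the monopole approximation. The external forces are bounded by condition \eqref{bul:FurtherForceConditions_GenAssumptionsPerturbedPassage}. The multipole torque is of order $\norm{q_{C_1,C_2}^I}\,\norm{q_{C_3}}^{-1-\alpha}$, which is small. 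Integrated over a short interval, both contributions stay small. Combined with the global bound $\L$ and \cite[Lemma 4.20]{quaschner2025improbability} at time $x^{(k)}$, this gives $\norm{L_{C_3}}\le \operatorname{Const}\cdot\L$ on the whole interval.

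For the lower bound on $\norm{q_{C_3}}$, note that on $\left[\tnez,x^{(k)}\right]$ the cluster $C_3$ is far away. Its momentum changes only by integrated forces of size $\norm{q_{C_3}}^{-1-\alpha}$ times a short time, so $\norm{q_{C_3}(u)}\sim\norm{q_{C_3}(x^{(k)})}$. For the length of the interval, the messenger $C_2$ moves relative to $C_1$ with relative momentum of order $\norm{p^{I}_{C_1,C_2}(x^{(k)})}$. By the propagation estimates of \cite[Propositions 4.6, 4.8]{quaschner2025improbability}, which also make $\tnez$ well-defined, this relative momentum stays comparable on the interval. The messenger must cover the distance $1-\norm{p^I_{C_1,C_2}(x^{(k)})}^{-\gamma}\le 1$, so the duration is $\lesssim \norm{p^{I}_{C_1,C_2}(x^{(k)})}^{-1}$. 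Multiplying the constant integrand bound $\operatorname{Const}\cdot\L/\norm{q_{C_3}(x^{(k)})}^2$ by this duration yields the claim.

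The main obstacle is justifying that the relative momentum stays comparable on $\left[\tnez,x^{(k)}\right]$. Near $\tnez$ the clusters $C_1,C_2$ are very close, so $p^{I}_{C_1,C_2}$ is in fact larger than at $x^{(k)}$, not smaller. This helps the time bound, but it needs an energy argument. I would write the energy balance for the pair, $\norm{p^I}^2/(2m^I) - Z/\norm{q^I}^\alpha \approx$ its value at $x^{(k)}$, and note that the choice $\gamma<2/\alpha$ keeps the potential term dominated appropriately. This shows that the radial velocity never drops below $\operatorname{Const}\norm{p^I_{C_1,C_2}(x^{(k)})}$, giving the duration bound.
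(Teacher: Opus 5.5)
Your argument follows the paper's proof. It uses the same identity $\norm{\dot{\hat q}_{C_3}}=\norm{L_{C_3}}/(m_{C_3}\norm{q_{C_3}}^2)$, the same bound $\norm{L_{C_3}}\le\operatorname{Const}\cdot\L$ from integrating the small torques, the same comparability of $\norm{q_{C_3}}$ with its value at $x^{(k)}$, and the same duration bound $x^{(k)}-\tnez\lesssim\norm{p^{I}_{C_1,C_2}(x^{(k)})}^{-1}$. The one step that does not work as written is your derivation of that duration bound.

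Knowing that the relative momentum stays comparable and that the distance grows from $\norm{p^{I}_{C_1,C_2}(x^{(k)})}^{-\gamma}$ to $1$ does not bound the elapsed time, because a large speed may be mostly tangential. Your energy balance for the pair controls only the speed $\norm{p^{I}_{C_1,C_2}}$, not the radial component $\scalprod{\hat q^{I}_{C_1,C_2}}{p^{I}_{C_1,C_2}}$. So the claim \enquote{the radial velocity never drops below $\operatorname{Const}\norm{p^I_{C_1,C_2}(x^{(k)})}$} does not follow, and it need not even hold. If $\tnez$ lies just after a pericenter of the pair at a distance slightly below $\norm{p^{I}_{C_1,C_2}(x^{(k)})}^{-\gamma}$, the relative motion at $\tnez$ is almost tangential.

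Keep the energy argument, but use it only for the speed. Because $\gamma\alpha<2$, it gives $\norm{p^{I}_{C_1,C_2}(t)}\sim P:=\norm{p^{I}_{C_1,C_2}(x^{(k)})}$ on the interval. Then take the duration from convexity, which is what the paper's \enquote{propagation estimate} refers to. By maximality of $\tnez$ we have $\norm{q^{I}_{C_1,C_2}}\ge P^{-\gamma}$ on $[\tnez,x^{(k)}]$. Hence in $\ddot J^{I}_{C_1,C_2}=\norm{p^{I}_{C_1,C_2}}^2/m^{I}_{C_1,C_2}+\scalprod{q^{I}_{C_1,C_2}}{\dot p^{I}_{C_1,C_2}}$ the pair potential contributes only $O(P^{\gamma\alpha})\ll P^2$. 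The tidal, $C_3$ and external contributions are also $o(P^2)$, so $\ddot J^{I}_{C_1,C_2}\ge cP^2$. Since $\tnez$ minimizes $\norm{q^{I}_{C_1,C_2}}$ on the interval, $\dot J^{I}_{C_1,C_2}(\tnez)\ge 0$. Integrating twice gives
\begin{align*}
\frac{m^{I}_{C_1,C_2}}{2}\left(1-P^{-2\gamma}\right)=J^{I}_{C_1,C_2}\left(x^{(k)}\right)-J^{I}_{C_1,C_2}\left(\tnez\right)\ge\frac{c}{2}P^2\left(x^{(k)}-\tnez\right)^2,
\end{align*}
that is, $x^{(k)}-\tnez\le\operatorname{Const}/P$, with no lower bound on the radial velocity needed.

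The comparability $\norm{q_{C_3}(u)}\sim\norm{q_{C_3}(x^{(k)})}$ needs a similar small repair. Near-constancy of $p_{C_3}$ alone does not give it; you need the displacement $\norm{p_{C_3}}(x^{(k)}-\tnez)/m_{C_3}$ to be $O(1)$. This follows from the duration bound together with $\norm{p_{C_3}}\lesssim\norm{p^{I}_{C_1,C_2}}$ at $x^{(k)}$ from Corollary~\ref{cor:SomeComparisons}. With these two changes your proof coincides with the paper's.
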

	
	\begin{proof}
		\quad \\
		First, note that that in the interval $\left[\tnez, \left(x^{(k)}\right)\right]$
		\begin{align*}
			\norm{L_{C_3^{(k)}}} \leq \operatorname{Const} \cdot \L
		\end{align*}
		as one can simply integrate the change of the angular momentum with our bounds on the forces.
		Additionally, from the propagation estimate it becomes apparent that
		\begin{align*}
			x^{(k)}- \tnez \leq \frac{\operatorname{Const}}{\norm{p_{C_1^{(k)},C_2^{(k)}}^{I}\left(x^{(k)}\right)}}.
		\end{align*}
		Then all we need is to integrate the change of the direction and get:
		\begin{align*}
			\int_{\tnez}^{x^{(k)}} \norm{\dot{\hat{q}}_{C_3^{(k)}}(u)} \diff u &= 	\int_{\tnez}^{x^{(k)}} \frac{\norm{L_{C_3^{(k)}}(u)}}{\norm{q_{C_3^{(k)}}(u)}^2} \diff u \leq \frac{\operatorname{Const} \cdot \L \cdot \left(x^{(k)}-\tnez\right)}{\norm{q_{C_3^{(k)}}\left(x^{(k)}\right)}^2} \leq \\
			& \leq \frac{\operatorname{Const} \cdot \L}{\norm{q_{C_3^{(k)}}\left(x^{(k)}\right)}^2 \norm{p_{C_1^{(k)},C_2^{(k)}}^{I}\left(x^{(k)}\right)}}.
		\end{align*}
		Here we used that the change of $\norm{q_{C_3^{(k)}}}$ in this interval is small, i.e. we can compare it with the value at time $x^{(k)}$.
	\end{proof}
	
	We can now analyze the component of the motion orthogonal to $\hat{q}_{C_3^{(k)}}\left(x^{(k)}\right)$. Since we are considering only parts of a single passage, we drop the superscript $k$ from now on. We are only interested in the internal coordinates $q_{C_1,C_2}^{I}$ and $p_{C_1,C_2}^{I}$, but one could make analogous statements on $q_{C_1}, q_{C_2}$. First, we recall the equations of motion for these quantities in a notation suppressing error terms using the $\mathcal{O}$-notation:
	\begin{align}
		&\left(\dot{q}_{C_1, C_2}^{I}\right)^{\perp q_{C_3}(x)} = \frac{\left(p_{C_1, C_2}^{I}\right)^{\perp q_{C_3}(x)}}{m_{C_1, C_2}^{I}}, \notag \\
		\begin{split}
			&\left(\dot{p}_{C_1, C_2}^{I}\right)^{\perp q_{C_3}(x)} = \alpha \left(\sum_{\substack{k \in C_2, \\ j \in C_1}} Z_{k,j}\right) \frac{\left(q_{C_1, C_2}^{I}\right)^{\perp q_{C_3}(x)}}{\norm{q_{C_1,C_2}^{I}}^{2+\alpha}} + \\ & + \mathcal{O}\left(\operatorname{Const} \frac{\norm{q_{D}^{I}}}{\norm{q_{C_1,C_2}^{I}}^{2+\alpha}}\right) + \mathcal{O}\left( \operatorname{Const}\right).
		\end{split} \label{eq:Ablpc1c2IiVierTeilchen}
	\end{align}
	The first error term is compensating the influence from the nontrivial cluster, the second error term is compensating the at most constant external forces and the influence of $C_3$. \\
	Next, we will introduce an orthogonal coordinate frame in $\operatorname{span}\{q_{C_3}(x)\}^{\perp}$ and denote the components of a vector in this space with e.g. $\left(p_{C_1, C_2}^{I}\right)^{\perp q_{C_3}(x)}_{i}$ for $i=1, \ldots, d-1$. Now we can prove the following lemma:
	\begin{lemma}[Interval with an optimal bound for the deviation of the orthogonal component of the position $q_{C_1,C_2}^{I}$]
		\label{lem:OptBoundqC1C2OrthArbCharg}\quad \\
		Consider the $k$-th passage with $k$ odd and sufficiently close to the escape time. \\
		Let $\Con{1} \geq 1$ be a sufficiently large constant such that for any $i=1, \ldots, d-1$ we have
		\begin{align*}
			\left|\left(q_{C_1,C_2}^{I}(x)\right)^{\perp \hat{q}_{C_3}(x)}_i\right| \leq \frac{\Con{1}}{\norm{p_{C_1,C_2}^{I}(x)}}.
		\end{align*}
		Then we have for any time $t \in \left[\tnezo, x \right]$:
		\begin{align}
			\left|\left(q_{C_1,C_2}^{I}(t)\right)^{\perp \hat{q}_{C_3}(x)}_i\right| \leq \frac{2 \Con{1}}{\norm{p_{C_1,C_2}^{I}(x)}}. \label{eq:OptBoundqC1C2Orthattimet}
		\end{align}
	\end{lemma}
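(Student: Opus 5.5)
We argue by a continuity (bootstrap) argument backwards in time from $x$ to $\tnezo$. Write $y_i(t) := \left(q_{C_1,C_2}^{I}(t)\right)^{\perp \hat{q}_{C_3}(x)}_i$ and $w_i(t) := \left(p_{C_1,C_2}^{I}(t)\right)^{\perp \hat{q}_{C_3}(x)}_i$, and set $P := \norm{p_{C_1,C_2}^{I}(x)}$. Let $\tau := \inf\{ s \in [\tnezo, x] \mid |y_i(t)| \leq 2\Con{1}/P \text{ for all } t \in [s,x]\}$. We assume $\tau > \tnezo$ and aim for a contradiction, by showing that on $[\tau, x]$ the stronger bound $|y_i| \leq \tfrac{3}{2}\Con{1}/P$ holds.

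\textbf{Momentum control.} Propagation estimates (as used for the well-definedness of $\tnezo$, \cite[Propositions 4.6, 4.8]{quaschner2025improbability}) give on $[\tnezo,x]$ that $\norm{p_{C_1,C_2}^{I}(t)} \sim P$. They also show that $\norm{q_{C_1,C_2}^{I}(t)}$ is comparable to $\norm{q_{C_1,C_2}^{I}(x)} - c P (x-t)$, so it decreases essentially linearly as $t$ goes backwards, from order $1$ down to $P^{-\gamma}$, and $x-\tnezo \lesssim 1/P$. At $t=x$, Proposition~\ref{prop:BoundsOrthogonalComponents} gives $|w_i(x)| \leq \operatorname{Const}(\L)/\norm{q_{C_1}(x)}$, which is tiny. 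We integrate \eqref{eq:Ablpc1c2IiVierTeilchen} from $t$ to $x$ and use the bootstrap bound $|y_i| \leq 2\Con{1}/P$ in the leading term. This leading term contributes
\[
\int_t^x \frac{2\Con{1}}{P \norm{q_{C_1,C_2}^{I}(u)}^{2+\alpha}} \diff u \lesssim \frac{\Con{1}}{P^2 \norm{q_{C_1,C_2}^{I}(t)}^{1+\alpha}}.
\]
The error terms contribute as follows:
\begin{itemize}
\item The term $\mathcal{O}(\norm{q_D^I}/\norm{q^I_{C_1,C_2}}^{2+\alpha})$ gives at most $\operatorname{Const}\,\Con{sd}/(P\norm{q^I_{C_1,C_2}(t)}^{1+\alpha})$.
\item The constant forces give $\operatorname{Const}/P$.
\end{itemize}
The change of the reference direction is negligible by Lemma~\ref{lem:ChangeDirqC3tnezx}.

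\textbf{Position control.} Next we integrate $\dot y_i = w_i/m^I_{C_1,C_2}$ from $t$ to $x$. We change variables to $r = \norm{q^I_{C_1,C_2}}$, using $\diff u \sim \diff r/P$. The main term becomes
\[
\int \frac{\Con{1}}{P^3 r^{1+\alpha}} \diff r \lesssim \frac{\Con{1}}{P^3 r(t)^{\alpha}} \leq \frac{\Con{1}}{P^{3-\gamma\alpha}},
\]
since $r \geq P^{-\gamma}$. Because $\gamma < 2/\alpha$ we have $3-\gamma\alpha > 1$, so this is $o(\Con{1}/P)$ for large $P$, that is, close to the escape time. The $\Con{sd}$-error term gives $\operatorname{Const}\,\Con{sd}\, P^{-(3-\gamma\alpha)}\cdot P$. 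This requires care: the error has to be measured relative to $\norm{q_D^I}$ only through its orthogonal component, or alternatively the bound $\norm{q_D^I}\lesssim$ small must be used together with $\Con{1}$ chosen large. In either case it is bounded by $\tfrac{1}{4}\Con{1}/P$ once $\Con{1}$ is chosen sufficiently large. The constant-force term gives $\operatorname{Const}/P^2$, and $|w_i(x)|(x-t)$ is negligible.

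\textbf{Conclusion and main obstacle.} Summing, $|y_i(t)| \leq \Con{1}/P + \tfrac{1}{2}\Con{1}/P$ on $[\tau,x]$. By continuity of $y_i$ this contradicts the definition of $\tau$, which proves \eqref{eq:OptBoundqC1C2Orthattimet}. The main obstacle is the near-collision end, where $r$ approaches $P^{-\gamma}$ and the singular term $r^{-2-\alpha}$ is largest. The argument only closes because of the exponent restriction $\gamma<2/\alpha$. In addition, the perturbation from the nontrivial cluster $D$ must be shown to be absorbable by choosing $\Con{1}$ large, which is why $\Con{1}$ is taken sufficiently large in the statement.
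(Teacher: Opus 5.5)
Your backward-in-time bootstrap is a viable route, but the step handling the error from the nontrivial cluster $D$ fails as written. With only $\norm{q_D^{I}}\le\Con{sd}$, integrating $\mathcal{O}(\norm{q_D^{I}}/r^{2+\alpha})$ twice contributes $\lesssim \Con{sd}/(P^2 r(t)^{\alpha})\le \Con{sd}P^{\gamma\alpha-2}$ to $|y_i(t)|$. That is $\frac{\Con{sd}}{\Con{1}}P^{\gamma\alpha-1}$ times your target $\Con{1}/P$. For the admissible $\gamma$ (close to $2/\alpha$, so $\gamma\alpha>1$) this ratio diverges as $P\to\infty$. Since $\Con{1}$ must be uniform over all passages, no choice of $\Con{1}$ absorbs this term. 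Your alternative of using only the orthogonal component of $q_D^{I}$ does not help either, because nothing makes that component smaller than $\norm{q_D^{I}}$. Note also that $\Con{1}$ is taken large only so that the hypothesis at $x$ holds; it does not absorb the $D$-term.

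The missing ingredient is the energy bound the paper invokes. On $[\tnezo,x]$ the kinetic energy is $\gtrsim P^2$. The subsystem energy is bounded up to a logarithm. The intercluster interactions are only $\lesssim r^{-\alpha}\le P^{\gamma\alpha}\ll P^2$. So the binary must carry potential energy $\gtrsim P^2$, which gives $\norm{q_D^{I}}\lesssim P^{-2/\alpha}\le \Con{1}/P$. With this bound, the $D$-error is dominated by your bootstrap bound $|y_i|\le 2\Con{1}/P$ in the main term. It then contributes only $\lesssim \Con{1}P^{\gamma\alpha-3}=o(\Con{1}/P)$, and your argument closes.

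Once repaired, your proof differs genuinely from the paper's. You integrate backwards from $x$, so you need the orthogonal momentum there, $|w_i(x)|\lesssim \operatorname{Const}(\L)/\norm{q_{C_1}(x)}$ from Proposition~\ref{prop:BoundsOrthogonalComponents}. Your smallness comes from $r\ge P^{-\gamma}$ together with $\gamma<2/\alpha$.

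The paper argues by contradiction instead. It uses $|y_i(\tnezo)|\le P^{-\gamma}$ to get an interior maximum $t_{\max}$ of $|y_i|$, where $w_i(t_{\max})=0$. It then takes the last time $t_{\mathrm{half}}<t_{\max}$ at half that height and writes $|y_i(t_{\max})|/2$ as a double integral of $\dot w_i$. This requires no orthogonal-momentum data at all. The smallness comes from $r\ge |y_i|\ge \Con{1}/P$ on $[t_{\mathrm{half}},t_{\max}]$, giving the factor $1/(P^2 r^{\alpha})\lesssim P^{\alpha-2}$ independently of $\gamma$. The $D$-term is absorbed into the main term directly, since $\norm{q_D^{I}}\le P^{-2/\alpha}\le |y_i|$ there.
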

	
	\begin{proof}
		\quad \\
		Note that at time $\tnezo$ we have $\left|\left(q_{C_1,C_2}^{I}(\tnezo)\right)^{\perp \hat{q}_{C_3}(x)}_i\right| \leq \norm{p_{C_1,C_2}^{I}(x)}^{-\gamma}$, so if there is a time in the interval $\left[\tnezo, x \right]$ where the bound is broken, there is also a time of a maximum $t_{\max}$ and some time $t_{\operatorname{half}} < t_{\max}$ with the following properties:
		\begin{align*}
			\left|\left(q_{C_1,C_2}^{I}(t_{\max})\right)^{\perp \hat{q}_{C_3}(x)}_i\right| &> \frac{2 \Con{1}}{\norm{p_{C_1,C_2}^{I}(x)}}, \\ 
			\left|\left(q_{C_1,C_2}^{I}(t_{\operatorname{half}})\right)^{\perp \hat{q}_{C_3}(x)}_i\right| &= \frac{\left|\left(q_{C_1,C_2}^{I}(t_{\max})\right)^{\perp \hat{q}_{C_3}(x)}_i\right|}{2}, \\
			\left(p_{C_1,C_2}^{I}(t_{\max})\right)^{\perp \hat{q}_{C_3}(x)}_i &=0.
		\end{align*}
		Additionally, we assume that $t_{\operatorname{half}}$ is the largest time satisfying this condition, such that for any $t \in \left[t_{\operatorname{half}}, t_{\max}\right]$ we have
		\begin{align*}
			\frac{\left|\left(q_{C_1,C_2}^{I}(t_{\max})\right)^{\perp \hat{q}_{C_3}(x)}_i\right|}{2} \leq \left|\left(q_{C_1,C_2}^{I}(t)\right)^{\perp \hat{q}_{C_3}(x)}_i\right| \leq 	\left|\left(q_{C_1,C_2}^{I}(t_{\max})\right)^{\perp \hat{q}_{C_3}(x)}_i\right|.
		\end{align*}
		But then we can estimate:
		\begin{align*}
			\frac{\left|\left(q_{C_1,C_2}^{I}(t_{\max})\right)^{\perp \hat{q}_{C_3}(x)}_i\right|}{2} &= \left|\int_{t_{\operatorname{half}}}^{t_{\max}}\frac{\left(p_{C_1,C_2}^{I}(t)\right)^{\perp \hat{q}_{C_3}(x)}_i}{m_{C_1, C_2}^{I}} \diff t\right| = \\
			& = \left|\int_{t_{\operatorname{half}}}^{t_{\max}} \int_{t}^{t_{\max}}\frac{\left(\dot{p}_{C_1,C_2}^{I}(s)\right)^{\perp \hat{q}_{C_3}(x)}_i}{m_{C_1, C_2}^{I}} \diff s \diff t\right| \leq \\
			& \leq \operatorname{Const} \cdot \int_{t_{\operatorname{half}}}^{t_{\max}} \int_{t}^{t_{\max}} \frac{\left|\left(q_{C_1,C_2}^{I}(s)\right)^{\perp \hat{q}_{C_3}(x)}_i\right|}{\norm{q_{C_1,C_2}^{I}(s)}^{2+\alpha}} \diff s \diff t  + \\ & \qquad +  \operatorname{Const} \cdot (t_{\max} - t_{\operatorname{half}})^2. 
		\end{align*}
		Note that we compensated the error term coming from the size of the nontrivial cluster by the first term, as the distance of the nontrivial cluster is always smaller than $\norm{p_{C_1, C_2}^{I}(x)}^{-2/\alpha}$, whereas $\left|\left(q_{C_1,C_2}^{I}(s)\right)^{\perp \hat{q}_{C_3}(x)}_i\right| \geq \frac{1}{\norm{p_{C_1, C_2}^{I}(x)}}$, i.e. is always larger. Additionally, note that $\operatorname{Const} \cdot (t_{\max} - t_{\operatorname{half}})^2 \leq \operatorname{Const} \cdot \norm{p_{C_1, C_2}^{I}(x)}^{-2}$ is way smaller than the left-hand side. We will investigate now the size of the integral and prove that this is also way smaller than the left-hand side. This will yield the desired contradiction and thus prove the claim. For this we use the propagation estimate twice:
		\begin{align*}
			&\int_{t_{\operatorname{half}}}^{t_{\max}} \int_{t}^{t_{\max}} \frac{\left|\left(q_{C_1,C_2}^{I}(s)\right)^{\perp \hat{q}_{C_3}(x)}_i\right|}{\norm{q_{C_1,C_2}^{I}(s)}^{2+\alpha}} \diff s \diff t  \leq \\ & \leq \operatorname{Const} \left|\left(q_{C_1,C_2}^{I}(t_{\max})\right)^{\perp \hat{q}_{C_3}(s)}_i\right| \int_{t_{\operatorname{half}}}^{t_{\max}} \int_{t}^{t_{\max}} \frac{1}{\norm{q_{C_1,C_2}^{I}(s)}^{2+\alpha}} \diff s \diff t \leq \\
			& \leq \operatorname{Const} \left|\left(q_{C_1,C_2}^{I}(t_{\max})\right)^{\perp \hat{q}_{C_3}(x)}_i\right| \int_{t_{\operatorname{half}}}^{t_{\max}}  \frac{1}{\norm{p_{C_1,C_2}^{I}(x)}\norm{q_{C_1,C_2}^{I}(t)}^{1+\alpha}} \diff t \leq \\
			& \leq \operatorname{Const} \left|\left(q_{C_1,C_2}^{I}(t_{\max})\right)^{\perp \hat{q}_{C_3}(x)}_i\right| \frac{1}{\norm{p_{C_1,C_2}^{I}(x)}^2\norm{q_{C_1,C_2}^{I}(t_{\operatorname{half}})}^{\alpha}}.
		\end{align*}
		But as $\norm{q_{C_1,C_2}^{I}(t_{\operatorname{half}})} \geq \frac{1}{\norm{p_{C_1,C_2}^{I}(x)}}$, this is way smaller than $\left|\left(q_{C_1,C_2}^{I}(t_{\max})\right)^{\perp \hat{q}_{C_3}(x)}_i\right|$. So we have the desired contradiction and the claim is proven.
	\end{proof}
	
	Using the bound on $\norm{\left(q_{C_1,C_2}^{I}\right)^{\perp \hat{q}_{C_3}(x)}}$, we can get by integration with a propagation estimate bounds for $\norm{\left(p_{C_1,C_2}^{I}\right)^{\perp \hat{q}_{C_3}(x)}}$. As the change of $\hat{q}_{C_3}$ was bounded in Lemma~\ref{lem:ChangeDirqC3tnezx}, we can allow arbitrary reference times too and have proven the following proposition:
	
	\begin{proposition}[{Deviation from the straight line before time $x^{(k)}$}] \label{prop:DeviationStraightLineTnezx} \quad \\
		Consider the $k$-th passage with $k$ odd and sufficiently close to the escape time. \\
		Then the following bounds are satisfied at arbitrary times $t \in \left[\tnez,x^{(k)}\right]$:
		\begin{align*}
			&\norm{\left(q_{C_1^{(k)},C_2^{(k)}}^{I}(t)\right)^{\perp \hat{q}_{C_3^{(k)}}(t)}} \leq \frac{\operatorname{Const}(\L)}{\norm{p_{C_1^{(k)},C_2^{(k)}}^{I}\left(t\right)}}, \\
			&\norm{\left({p_{C_1^{(k)},C_2^{(k)}}^{I}(t)}\right)^{\perp \hat{q}_{C_3^{(k)}}(t)}} \leq  \max \left\{ \operatorname{BI}(t), \frac{\operatorname{Const} \L}{\norm{q_{C_3^{(k)}}\left(x^{(k)}\right)}} \right\} \quad \text{ with } \\
			\operatorname{BI}(t) &:= \min \left\{\frac{\operatorname{Const}(\L) }{\norm{{{p_{C_1^{(k)},C_2^{(k)}}^{I}}}\left(t\right)} \norm{{q_{C_1^{(k)},C_2^{(k)}}^{I}}(t)}^{\alpha}},\frac{\operatorname{Const} (\L)}{\norm{{{p_{C_1^{(k)},C_2^{(k)}}^{I}}}\left(t\right)}^2 \norm{{q_{C_1^{(k)},C_2^{(k)}}^{I}}(t)}^{\alpha+1}} \right\}.
		\end{align*}
		In the bound for the integration, $\operatorname{BI}(t)$, the first term is smaller for times close to $\tnez$, whereas the second bound is smaller for times closer to $x^{(k)}$.
	\end{proposition}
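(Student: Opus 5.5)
The plan is to derive Proposition~\ref{prop:DeviationStraightLineTnezx} from Lemma~\ref{lem:OptBoundqC1C2OrthArbCharg} and Lemma~\ref{lem:ChangeDirqC3tnezx}, working first with the fixed reference direction $\hat{q}_{C_3}(x)$ and only at the end switching to the moving direction $\hat{q}_{C_3}(t)$.

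\textbf{Position bound.} At time $x$, Proposition~\ref{prop:BoundsOrthogonalComponents} gives
\begin{align*}
\norm{\left(q_{C_1,C_2}^{I}(x)\right)^{\perp \hat{q}_{C_3}(x)}} \leq \frac{\operatorname{Const}(\L)}{\norm{p_{C_1,C_2}^{I}(x)}}.
\end{align*}
This supplies the constant $\Con{1}=\operatorname{Const}(\L)$ needed in Lemma~\ref{lem:OptBoundqC1C2OrthArbCharg}. That lemma then yields the same bound, up to a factor $2$, for every $t\in[\tnezo,x]$ and each of the $d-1$ orthogonal components. Summing the components costs only a dimensional constant.

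To replace $\norm{p_{C_1,C_2}^{I}(x)}$ by $\norm{p_{C_1,C_2}^{I}(t)}$, I will use the propagation estimates behind the definition of $\tnezo$. On $[\tnezo,x]$ we have $\norm{q^{I}_{C_1,C_2}}\geq \norm{p^{I}_{C_1,C_2}(x)}^{-\gamma}$ with $\gamma<2/\alpha$, so the energy exchange with $C_3$ and the external forces is small. Hence $\norm{p^{I}_{C_1,C_2}(t)}\sim\norm{p^{I}_{C_1,C_2}(x)}$ throughout this interval.

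\textbf{Momentum bound.} I write
\begin{align*}
\left(p^{I}_{C_1,C_2}(t)\right)^{\perp\hat q_{C_3}(x)} = \left(p^{I}_{C_1,C_2}(x)\right)^{\perp\hat q_{C_3}(x)} - \int_t^x \left(\dot p^{I}_{C_1,C_2}\right)^{\perp\hat q_{C_3}(x)}\,\diff s
\end{align*}
and estimate the three contributions to the integrand coming from \eqref{eq:Ablpc1c2IiVierTeilchen}.
\begin{itemize}
\item \emph{Boundary term.} Proposition~\ref{prop:BoundsOrthogonalComponents}, applied at $x$ in the $\hat q_{C_3}$ frame, gives $\operatorname{Const}(\L)/\norm{q_{C_1}(x)}\sim \operatorname{Const}\,\L/\norm{q_{C_3}(x)}$ by Corollary~\ref{cor:SomeComparisons}. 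This is the second entry of the maximum.
\item \emph{Main term.} It is bounded by $\operatorname{Const}\, \norm{q^\perp}\,\norm{q^{I}}^{-2-\alpha}$. Inserting the position bound and the propagation estimate $\norm{q^{I}(s)}\gtrsim \norm{q^{I}(t)} + \norm{p^{I}}\,(x-s)$ gives
\begin{align*}
\frac{\operatorname{Const}(\L)}{\norm{p}}\int_t^x\norm{q^{I}(s)}^{-2-\alpha}\,\diff s \lesssim \frac{\operatorname{Const}(\L)}{\norm{p}^2\,\norm{q^{I}(t)}^{1+\alpha}}.
\end{align*}
This is the second entry of $\operatorname{BI}$.
\item \emph{Cluster-size and constant-force terms.} These are dominated as in the proof of Lemma~\ref{lem:OptBoundqC1C2OrthArbCharg}: they contribute at most $\operatorname{Const}/\norm{p}$, which is absorbed.
\end{itemize}
The first entry of $\operatorname{BI}$, $\operatorname{Const}(\L)/(\norm{p}\,\norm{q^{I}}^{\alpha})$, is sharper near $\tnezo$. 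I will obtain it by integrating from $\tnezo$ instead of from $x$. At $\tnezo$ the orthogonal momentum is controlled via the angular momentum bound $\norm{L^{I}_{C_1,C_2}}\le \operatorname{Const}\,\L$. The orthogonal component is then handled through $\norm{p^\perp}\lesssim \norm{L^{I}}/\norm{q^{I}} + \norm{p}\,\norm{q^\perp}/\norm{q^{I}}$, using the already established $q^\perp$ bound. Taking the minimum of the two routes gives $\operatorname{BI}$.

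\textbf{Change of frame.} Finally I pass from $\hat q_{C_3}(x)$ to $\hat q_{C_3}(t)$. By Lemma~\ref{lem:ChangeDirqC3tnezx} the angle between the two directions is at most $\operatorname{Const}\,\L/(\norm{q_{C_3}(x)}^2\norm{p})$. Multiplying by $\norm{q^{I}}\lesssim 1$, respectively by $\norm{p}$, gives errors of size $\operatorname{Const}\,\L/\norm{p}$ and $\operatorname{Const}\,\L/\norm{q_{C_3}(x)}^2$. Both fit within the stated bounds.

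The main obstacle will be the first entry of $\operatorname{BI}$ near $\tnezo$. There $\norm{q^{I}}$ is as small as $\norm{p}^{-\gamma}$, so the cluster-size error term $\norm{q_D^{I}}\,\norm{q^{I}}^{-2-\alpha}$ becomes large. It will need the smallness of $q_D^{I}$ relative to $\norm{p}^{-2/\alpha}$, for which the choice $\gamma<2/\alpha$ is essential.
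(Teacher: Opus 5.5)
\textbf{The gap: the first entry of $\operatorname{BI}$.} Your overall structure matches the paper's. You seed Lemma~\ref{lem:OptBoundqC1C2OrthArbCharg} with Proposition~\ref{prop:BoundsOrthogonalComponents} at $x$, integrate \eqref{eq:Ablpc1c2IiVierTeilchen} backward from $x$ with a propagation estimate, and switch frames with Lemma~\ref{lem:ChangeDirqC3tnezx}. However, your route to the first entry of $\operatorname{BI}$ fails.

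Write $q=q^{I}_{C_1,C_2}$ and $p=p^{I}_{C_1,C_2}$. The decomposition $\norm{p^\perp}\lesssim \norm{L^{I}}/\norm{q}+\norm{p}\norm{q^\perp}/\norm{q}$, combined with your position bound, only gives $\norm{p^\perp(t)}\lesssim \operatorname{Const}(\L)/\norm{q(t)}$. This exceeds the target $\operatorname{Const}(\L)/(\norm{p}\norm{q}^{\alpha})$ by the factor $\norm{p}\norm{q}^{\alpha-1}$. On $[\tnezo,x]$ we have $\norm{p}^{-\gamma}\le\norm{q}\le 1$ and $\gamma<2/\alpha$, so this factor is at least $\norm{p}^{\min\{1,\,1-\gamma(\alpha-1)\}}$, a positive power of $\norm{p}$ for every $\alpha\in(0,2)$. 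It is even worse than the second entry, since $\norm{p}^2\norm{q}^{\alpha}\ge\norm{p}^{2-\gamma\alpha}\gg1$.

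In particular, your initial value at $\tnezo$ is of order $\norm{p}^{\gamma}$, while the claimed bound there is of order $\norm{p}^{\gamma\alpha-1}$. Integrating forward from $\tnezo$ can only add to that initial value. So the ``minimum of two routes'' just returns the second entry, and the first entry is never obtained.

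\textbf{The fix.} Do the same backward integration from $x$, but in the main term bound the numerator by the trivial estimate $\norm{q^\perp(s)}\le\norm{q(s)}$:
\begin{align*}
\int_t^x \frac{\norm{q(s)}}{\norm{q(s)}^{2+\alpha}}\diff s \lesssim \int_t^x \frac{\diff s}{\left(\norm{q(t)}+\norm{p}(s-t)\right)^{1+\alpha}} \lesssim \frac{1}{\norm{p}\,\norm{q(t)}^{\alpha}} .
\end{align*}
Using $\norm{q^\perp}\le\min\{\norm{q},\operatorname{Const}(\L)/\norm{p}\}$ in a single integral gives $\operatorname{BI}(t)$ directly, with the common boundary term $\operatorname{Const}\L/\norm{q_{C_3}(x)}$ from time $x$. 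This is why the statement calls it ``the bound for the integration''. The first entry wins exactly where $\norm{q}<1/\norm{p}$, which is near $\tnezo$.

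\textbf{Smaller points.} The propagation estimate must read $\norm{q(s)}\gtrsim\norm{q(t)}+\norm{p}(s-t)$, because the pair is separating on $[t,x]$. As you wrote it, with $(x-s)$, it is false at $s=t$.

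The cluster-size term does not contribute $\operatorname{Const}/\norm{p}$. Integrated backward it gives $\norm{q_D^{I}}/(\norm{p}\norm{q(t)}^{1+\alpha})$, which near $\tnezo$ is a positive power of $\norm{p}$. It is absorbed into $\operatorname{BI}$ rather than into a constant. With $\norm{q_D^{I}}\lesssim\norm{p}^{-2/\alpha}$, as in the proof of Lemma~\ref{lem:OptBoundqC1C2OrthArbCharg}, it is at most the second entry times $\norm{p}^{1-2/\alpha}$ and at most the first entry times $\norm{p}^{\gamma-2/\alpha}$. The latter is precisely where $\gamma<2/\alpha$ is used. Your last paragraph identifies the right mechanism, but it belongs inside this backward integral.
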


	\subsubsection{Investigation of the Motion after Time $s^{(k)}$}
	\label{subsec:OrthKompNachSminVierTeilchen}
	
	This section is similar to Section~\ref{sec:RigInvSmallTimesVierTeilchen}. Therefore, we state only the main results, since the proofs are analogous to those in the previous section. For the reference direction we choose $\hat{p}_{C_1}$ in accordance with Proposition~\ref{prop:BoundsOrthogonalComponents}. Due to the small forces, this will not change much:
	
	\begin{lemma}[{Change of $\hat{p}_{C_1}$ in the interval $[s^{(k)}, \tnzd]$}] \quad \\
		Consider the $k$-th passage with $k$ odd and sufficiently close to the escape time. Then we have
		\begin{align*}
			\int_{s^{(k)}}^{\tnzd} \norm{\dot{\hat{p}}_{C_1^{(k)}}(u)} \diff u \leq \frac{\operatorname{Const}}{\norm{p_{C_1^{(k)}}(x^{(k)})}^2}.
		\end{align*}
	\end{lemma}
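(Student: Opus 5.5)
We bound the derivative of the unit vector directly. For any nonzero $p$ we have
\begin{align*}
	\dot{\hat{p}} = \frac{\left(\dot{p}\right)^{\perp \hat{p}}}{\norm{p}},
\end{align*}
hence $\norm{\dot{\hat{p}}_{C_1}} \leq \norm{\dot{p}_{C_1}}/\norm{p_{C_1}}$. The plan is to control $\norm{\dot p_{C_1}}$ on $[s^{(k)}, \tnzd]$, bound $\norm{p_{C_1}}$ from below there, and integrate over an interval whose length we control.

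First, we bound the forces on $C_1$ in $[s^{(k)}, \tnzd]$. After $s^{(k)}$ the messenger $C_2$ has left the vicinity of $C_1$ and moves toward $C_3$. By Corollary~\ref{cor:SomeComparisons} and \eqref{eq:Vergleichq1q3q13I}, the distances $\norm{q_{C_1}-q_{C_2}}$ and $\norm{q_{C_1}-q_{C_3}}$ are comparable to $\norm{q_{C_1}}$, which is large. The internal forces on $C_1$ are therefore $\mathcal{O}(\operatorname{Const}\norm{q_{C_1}}^{-1-\alpha})$. For the external forces from other final clusters we use the integrated bounds of Propositions~\ref{prop:EstimatesIntegratedForcesUnboundedBounded} and \ref{prop:BoundIntegratedForcesOnePassage}. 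On $[s^{(k)},\tnzd]\subseteq[x^{(k)},x^{(k+1)}]$ these give $\int \norm{F}\diff t \leq \Con{FI}/\norm{p_{C_1}(x^{(k)})}$. If a pointwise bound is preferred instead, the external forces are simply $\mathcal{O}(\operatorname{Const})$.

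Second, we bound the length of the interval and the size of $\norm{p_{C_1}}$. By definition of $\tnzd$ and the propagation estimates of \cite[Propositions 4.6, 4.8]{quaschner2025improbability}, as in the proof of Lemma~\ref{lem:ChangeDirqC3tnezx}, the separation $q^{I}_{C_2,C_3}$ shrinks from $1$ to $\norm{p^{I}_{C_2,C_3}(x^{(k)})}^{-\gamma}$ at speed $\sim\norm{p_{C_2}}\sim\norm{p_{C_1}(x^{(k)})}$. This gives $\tnzd-s^{(k)}\leq \operatorname{Const}/\norm{p_{C_1}(x^{(k)})}$. Since the integrated forces are small, $\norm{p_{C_1}(u)}\geq \tfrac12\norm{p_{C_1}(x^{(k)})}$ on the interval.

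Third, we combine the two steps:
\begin{align*}
	\int_{s^{(k)}}^{\tnzd}\norm{\dot{\hat p}_{C_1}}\diff u \leq \frac{2}{\norm{p_{C_1}(x^{(k)})}}\int_{s^{(k)}}^{\tnzd}\norm{\dot p_{C_1}}\diff u.
\end{align*}
The internal part contributes at most $\operatorname{Const}\,\norm{q_{C_1}}^{-1-\alpha}/\norm{p_{C_1}(x^{(k)})}$, which is far smaller than $\norm{p_{C_1}(x^{(k)})}^{-1}$. The external part contributes at most $\operatorname{Const}/\norm{p_{C_1}(x^{(k)})}$, whether we use the pointwise bound times the interval length or the integrated bound. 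Multiplying by the prefactor $2/\norm{p_{C_1}(x^{(k)})}$ gives the claimed $\operatorname{Const}/\norm{p_{C_1}(x^{(k)})}^2$.

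The main obstacle is the second step, namely the duration of $[s^{(k)},\tnzd]$. The approach between $C_2$ and $C_3$ occurs at relative speed $\norm{p^{I}_{C_2,C_3}}$, which may be much smaller than $\norm{p_{C_1}}$, and then the pointwise constant force bound no longer suffices. In that case we rely on the integrated force bound of Proposition~\ref{prop:BoundIntegratedForcesOnePassage}, which holds over the entire interval $[x^{(k)},x^{(k+1)}]$ and is independent of how long the interval is.
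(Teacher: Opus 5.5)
There is a genuine gap, in how you handle the \emph{internal} force that $C_2\cup C_3$ exerts on $C_1$. The overall structure is right: bound $\norm{\dot{\hat p}_{C_1}}\leq\norm{\dot p_{C_1}}/\norm{p_{C_1}}$, keep $\norm{p_{C_1}}$ comparable to $\norm{p_{C_1^{(k)}}(x^{(k)})}$, and use the integrated external-force bound on $[x^{(k)},x^{(k+1)}]\supseteq[s^{(k)},\tnzd]$. The external part is fine.

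You bound the internal force by its pointwise size $\norm{q_{C_1}}^{-1-\alpha}$ times the length of $[s^{(k)},\tnzd]$. For that length you use $\tnzd-s^{(k)}\leq\operatorname{Const}/\norm{p_{C_1}(x^{(k)})}$, and this is false in general. On this interval $\norm{q^{I}_{C_2,C_3}}$ shrinks at the relative speed $\sim\norm{p^{I}_{C_2,C_3}}$, not at $\sim\norm{p_{C_2}}$. Corollary~\ref{cor:SomeComparisons} only gives $\norm{p^{I}_{C_2,C_3}}\lesssim\norm{p_{C_1}}$, and the paper stresses that it may be much smaller. The honest length bound is therefore only $\operatorname{Const}/\norm{p^{I}_{C_2,C_3}(x^{(k)})}$.

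You notice this in your last paragraph, but the remedy you propose does not reach the internal term. Propositions~\ref{prop:EstimatesIntegratedForcesUnboundedBounded} and \ref{prop:BoundIntegratedForcesOnePassage} concern forces from \emph{other} final clusters only. For the internal force your argument then gives just $\norm{q_{C_1}}^{-1-\alpha}/\norm{p^{I}_{C_2,C_3}(x^{(k)})}$, and you have not shown this is $\lesssim 1/\norm{p_{C_1}(x^{(k)})}$.

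The fix is to drop the interval-length bound entirely and use a propagation estimate for $C_1$, as in the last step of the proof of Proposition~\ref{prop:BoundIntegratedForcesOnePassage}.
\begin{itemize}
\item In the center-of-mass frame of $U$ one has exactly $p^{I}_{C_1,C_2\cup C_3}=p_{C_1}$.
\item This momentum is almost parallel to $q_{C_1}$ and points outward (Proposition~\ref{prop:BoundsOrthogonalComponents}).
\item So $\norm{q_{C_1}}$ grows at rate $\gtrsim\norm{p_{C_1}(x^{(k)})}$ on $[s^{(k)},x^{(k+1)}]$, while $C_2$ and $C_3$ stay within bounded distance of each other (condition~\eqref{bul:ConstDistDuringQuasicollision}).
\end{itemize}
Hence
\begin{align*}
\int_{s^{(k)}}^{\tnzd}\norm{q_{C_1}(u)}^{-1-\alpha}\diff u \leq \frac{\operatorname{Const}}{\norm{p_{C_1}(x^{(k)})}\,\norm{q_{C_1}(s^{(k)})}^{\alpha}},
\end{align*}
however long the interval is. This is even smaller than the $\operatorname{Const}/\norm{p_{C_1}(x^{(k)})}$ you need.

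Your lower bound $\norm{p_{C_1}(u)}\geq\frac12\norm{p_{C_1}(x^{(k)})}$ has the same issue. You attributed it to ``small integrated forces'', but the internal forces on $[x^{(k)},u]$ are again not controlled by interval length. They are controlled by propagation estimates: the separation of $C_1$ and $C_2$ from distance $1$ at speed $\sim\norm{p^{I}_{C_1,C_2}}$, and the outward growth of $\norm{q_{C_1}}$. With these two replacements, your Step~3 goes through and yields $\operatorname{Const}/\norm{p_{C_1}(x^{(k)})}^2$.
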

	
	Then one can consider $\left(q_{C_2^{(k)}, C_3^{(k)}}^{I}\right)^{\perp \hat{p}_{C_1^{(k)}}\left(s^{(k)}\right)}$. We can use the same trick and consider only one direction in an orthonormal frame of $\operatorname{span}\{p_{C_1^{(k)}}\left(s^{(k)}\right)\}^{\perp}$ and deduce that there can be no internal maximum of size larger than $\frac{\operatorname{Const}}{\norm{p_{C_2,C_3}^{I}\left(x^{(k)}\right)}}$ (compare Lemma~\ref{lem:OptBoundqC1C2OrthArbCharg}). Using this bound, we can integrate the change of the orthogonal momentum. Combining all of this, we have proven the following proposition:
	
	\begin{proposition}[{Deviation from the straight line after $s^{(k)}$}] \label{prop:DeviationStraightLineSminTnzd} \quad \\
		Consider the $k$-th passage with $k$ odd and sufficiently close to the escape time. \\
		Then the following bounds are satisfied at arbitrary times $t \in \left[s^{(k)}, \tnzd\right]$:
		\begin{align*}
			&\norm{\left(q_{C_2^{(k)},C_3^{(k)}}^{I}(t)\right)^{\perp \hat{p}_{C_1^{(k)}}(t)}} \leq \frac{\operatorname{Const} \cdot \L}{\norm{p_{C_2^{(k)},C_3^{(k)}}^{I}(t)}}, \\
			& \norm{({p_{C_2^{(k)},C_3^{(k)}}^{I}(t)})^{\perp \hat{p}_{C_1^{(k)}}(t)}} \leq \max \left\{ \operatorname{BI}_2(t), \operatorname{Const}(\L)  \right\} \quad \text{ with } \\
			\operatorname{BI}_2(t) &:= \min\left\{\frac{\operatorname{Const}(\L) }{\norm{p_{C_2^{(k)},C_3^{(k)}}^{I}(t)}^{2} \norm{{q_{C_2^{(k)},C_3^{(k)}}^{I}}(t)}^{\alpha+1}}, \frac{\operatorname{Const}(\L) }{\norm{{{p_{C_2^{(k)},C_3^{(k)}}^{I}}}(t)} \norm{{q_{C_2^{(k)},C_3^{(k)}}^{I}}(t)}^{\alpha}}  \right\}.
		\end{align*}
	\end{proposition}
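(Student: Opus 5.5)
The plan mirrors the proof of Proposition~\ref{prop:DeviationStraightLineTnezx}, with time running forward from $s^{(k)}$. The roles change as follows: the reference direction $\hat{q}_{C_3^{(k)}}(x^{(k)})$ is replaced by $\hat{p}_{C_1^{(k)}}(s^{(k)})$, and the pair $(C_1^{(k)},C_2^{(k)})$ is replaced by $(C_2^{(k)},C_3^{(k)})$.

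\textbf{Step 1: initial data.} At time $s^{(k)}$, Proposition~\ref{prop:BoundsOrthogonalComponents} applies with $t=s^{(k)}$ and gives
\begin{align*}
\norm{\left(q_{C_2^{(k)},C_3^{(k)}}^{I}(s^{(k)})\right)^{\perp \hat{p}_{C_1^{(k)}}(s^{(k)})}} \leq \frac{\operatorname{Const}(\L)}{\norm{p_{C_2^{(k)},C_3^{(k)}}^{I}(s^{(k)})}},
\end{align*}
together with the corresponding bound $\operatorname{Const}(\L)/\norm{q^{I}_{C_2^{(k)},C_3^{(k)}}(s^{(k)})}$ for the orthogonal momentum. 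Since $\norm{q^{I}_{C_2^{(k)},C_3^{(k)}}(s^{(k)})}=1$, the latter is $\operatorname{Const}(\L)$. This is the origin of the $\operatorname{Const}(\L)$ entry in the maximum. Corollary~\ref{cor:SomeComparisons} together with the propagation estimates on $[s^{(k)},\tnzd]$ lets us replace $\norm{p^{I}_{C_2,C_3}(s^{(k)})}$ by $\norm{p^{I}_{C_2,C_3}(t)}$ up to constants. This holds because $\norm{p^{I}_{C_2,C_3}}$ only grows by a bounded factor as $\norm{q^{I}_{C_2,C_3}}$ shrinks towards $\norm{p^{I}_{C_2,C_3}(x^{(k)})}^{-\gamma}$ with $\gamma<2/\alpha$. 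I have to check this last point carefully.

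\textbf{Step 2: freezing the reference direction.} Integrate $\dot p_{C_1^{(k)}}$ using the bounded forces. Because $\norm{p_{C_1^{(k)}}}$ is large, $\norm{\dot{\hat p}_{C_1^{(k)}}}\lesssim \norm{p_{C_1^{(k)}}}^{-1}$. The time $\tnzd-s^{(k)}$ is at most $\operatorname{Const}/\norm{p_{C_2,C_3}^{I}(x^{(k)})}$, and by \eqref{eq:CompNormMomenta} this is bounded by $\operatorname{Const}/\norm{p_{C_1}(x^{(k)})}$. Together these give the stated lemma. Multiplied by $\norm{q^{I}_{C_2,C_3}}\le 1$ and by $\norm{p^{I}_{C_2,C_3}}$, respectively, the directional change contributes errors below the claimed bounds. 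Hence it suffices to bound components with respect to the fixed direction $\hat p_{C_1^{(k)}}(s^{(k)})$.

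\textbf{Step 3: position bound (main step).} Write the orthogonal equations of motion analogously to \eqref{eq:Ablpc1c2IiVierTeilchen}. The attractive $C_2$--$C_3$ force has orthogonal part $\propto (q^{I}_{C_2,C_3})^{\perp}/\norm{q^{I}_{C_2,C_3}}^{2+\alpha}$. The error terms are the nontrivial cluster size divided by $\norm{q^I}^{2+\alpha}$, plus a constant. Fix one orthonormal component $i$ and argue by contradiction exactly as in Lemma~\ref{lem:OptBoundqC1C2OrthArbCharg}. Suppose the component exceeds $2\Con{1}/\norm{p^{I}_{C_2,C_3}(x^{(k)})}$. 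Then take an interior maximum $t_{\max}$, where the orthogonal momentum component vanishes, and the last half-value time $t_{\operatorname{half}}<t_{\max}$. Write the half-value as a double integral of $\dot p$ and apply the propagation estimate twice. This bounds the attractive contribution by
\begin{align*}
\left|\left(q^I(t_{\max})\right)^{\perp}_i\right| \cdot \norm{p^{I}}^{-2}\norm{q^{I}(t_{\operatorname{half}})}^{-\alpha},
\end{align*}
which is much smaller than the left-hand side since $\norm{q^{I}}\ge \norm{p^{I}}^{-1}$ there. The constant force contributes $\operatorname{Const}(t_{\max}-t_{\operatorname{half}})^2$, which is also negligible. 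The cluster-size error is absorbed because $\norm{q_D^{I}}\lesssim\norm{p^{I}}^{-2/\alpha}$. The obstacle I expect is precisely this absorption, together with checking that $\norm{q^{I}_{C_2,C_3}}\geq \norm{p^{I}}^{-\gamma}$ stays large compared with $\norm{q_D^{I}}$ all the way up to $\tnzd$. At $\tnzd$ itself, the orthogonal component is trivially bounded by $\norm{p^I}^{-\gamma}$, so the maximum is interior.

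\textbf{Step 4: momentum bound.} Integrate $(\dot p^{I}_{C_2,C_3})^{\perp}$ from $s^{(k)}$, inserting the position bound from Step 3. The force term is at most $\operatorname{Const}(\L)\norm{p^I}^{-1}\norm{q^I}^{-2-\alpha}$. Integrating with the propagation estimate $\diff t\sim \diff\norm{q^I}/\norm{p^I}$ gives the first alternative $\operatorname{Const}(\L)\norm{p^I}^{-2}\norm{q^I}^{-1-\alpha}$. Alternatively, bound the orthogonal position only by $\norm{q^I}$; this gives $\norm{q^I}^{-1-\alpha}$ and, after integration, $\operatorname{Const}(\L)\norm{p^I}^{-1}\norm{q^I}^{-\alpha}$. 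Taking the minimum yields $\operatorname{BI}_2$. Adding the initial value and the constant-force contribution, both of which are $\operatorname{Const}(\L)$, gives the stated maximum.
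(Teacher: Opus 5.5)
Your outline is exactly the paper's sketched argument: freeze $\hat p_{C_1^{(k)}}(s^{(k)})$, rule out a large interior maximum of each orthogonal component of $q^{I}_{C_2,C_3}$, then integrate the orthogonal momentum. However, Step~3 fails as you set it up.

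You take $t_{\operatorname{half}}<t_{\max}$ and copy from Lemma~\ref{lem:OptBoundqC1C2OrthArbCharg} the bound $\norm{p^{I}}^{-2}\norm{q^{I}(t_{\operatorname{half}})}^{-\alpha}$ for the double integral. That bound relies on $\norm{q^{I}_{C_1,C_2}}$ \emph{increasing} on $[\tnez,x^{(k)}]$, which gives $\int_t^{t_{\max}}\norm{q^{I}(u)}^{-2-\alpha}\diff u\lesssim\norm{p^{I}}^{-1}\norm{q^{I}(t)}^{-1-\alpha}$.

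On $[s^{(k)},\tnzd]$, by contrast, $\norm{q^{I}_{C_2,C_3}}$ \emph{decreases} from $1$ to $\norm{p^{I}_{C_2,C_3}(x^{(k)})}^{-\gamma}$. So for $t_{\operatorname{half}}<t_{\max}$ the integrand is largest at $t_{\max}$, exactly where the weight $u-t_{\operatorname{half}}$ is largest. The double integral $\int_{t_{\operatorname{half}}}^{t_{\max}}(u-t_{\operatorname{half}})\norm{q^{I}(u)}^{-2-\alpha}\diff u$ is then of order $(t_{\max}-t_{\operatorname{half}})\norm{p^{I}}^{-1}\norm{q^{I}(t_{\max})}^{-1-\alpha}$. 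This can exceed your claimed bound by a factor of order $\bigl(\norm{q^{I}(t_{\operatorname{half}})}/\norm{q^{I}(t_{\max})}\bigr)^{1+\alpha}$.

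All you have available is $t_{\max}-t_{\operatorname{half}}\lesssim\norm{p^{I}}^{-1}$ and $\norm{q^{I}(t_{\max})}\geq 2\Con{1}/\norm{p^{I}}$. These give only $\operatorname{Const}\,\norm{p^{I}}^{\alpha-1}\Con{1}^{-1-\alpha}$, which is not small for $\alpha>1$, so no contradiction follows. Your version only closes for $\alpha\leq 1$.

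The fix is to mirror the lemma in time.
\begin{itemize}
\item Let $t_{\operatorname{half}}>t_{\max}$ be the first time after $t_{\max}$ at which the component has dropped to half its maximum. It exists because of the bound $\norm{p^{I}_{C_2,C_3}(x^{(k)})}^{-\gamma}$ at $\tnzd$, which you already noted. The bound at $s^{(k)}$ from Proposition~\ref{prop:BoundsOrthogonalComponents} is then only needed to exclude a maximum at the left endpoint.
\item Write the half value as $\int_{t_{\max}}^{t_{\operatorname{half}}}(t_{\operatorname{half}}-u)\,\dot p^{\perp}_i(u)/m^{I}_{C_2,C_3}\,\diff u$. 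The weight now vanishes at the small-distance end.
\item The double integral is then $\lesssim\norm{p^{I}}^{-2}\norm{q^{I}(t_{\operatorname{half}})}^{-\alpha}\lesssim\norm{p^{I}}^{\alpha-2}\Con{1}^{-\alpha}\ll 1$. The rest of Step~3 and all of Step~4 go through as you wrote them.
\end{itemize}

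There is also a minor slip in Step~2. \eqref{eq:CompNormMomenta} says $\norm{p^{I}_{C_2,C_3}}\lesssim\norm{p_{C_1}}$, so $\operatorname{Const}/\norm{p^{I}_{C_2,C_3}(x^{(k)})}$ is not bounded by $\operatorname{Const}/\norm{p_{C_1}(x^{(k)})}$; the messenger can be much faster than the relative $C_2$--$C_3$ motion. Bounded forces therefore give only a rotation of $\hat p_{C_1}$ of order $\bigl(\norm{p_{C_1}}\norm{p^{I}_{C_2,C_3}}\bigr)^{-1}$, not the lemma's $\norm{p_{C_1}}^{-2}$. This weaker bound is still enough to pass to the moving direction $\hat p_{C_1^{(k)}}(t)$, so that step survives.
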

	
	\subsubsection{Investigation of the Motion before Time $\tnez$ and/or after Time $\tnzd$}
	\label{subsec:InvMottneztnzd}
	
	The motion during a \enquote{near-collision}, i.e. a close encounter of the particles from two clusters, is most difficult to control. However, as we are only looking for a set of finite volume in which the solution has to stay, we can simplify the analysis. The key idea is that although the momenta can be arbitrary large, the distance of the particles has to be small accordingly. More precisely, let us focus on the interval $\left[\tnzd, \tezd\right]$. Then at the times $\tnzd$ and $\tezd$ we have using an energy argument for the nontrivial cluster $D$ respectively the definition of $\tnzd$ and $\tezd$:
	\begin{align*}
		\norm{q_{D^{(k)}}^{I}\left(\tnzd\right)} &\leq \frac{\operatorname{Const}}{\norm{p_{D^{(k)}}^{I}\left(\tnzd\right)}}, \\
		\norm{q_{C_2^{(k)}, C_3^{(k)}}^{I}\left(\tnzd\right)} &\leq \frac{\operatorname{Const}}{\norm{p_{C_2^{(k)}, C_3^{(k)}}^{I}\left(\tnzd\right)}}, \\
		\norm{q_{D^{(k+1)}}^{I}\left(\tezd\right)} &\leq \frac{\operatorname{Const}}{\norm{p_{D^{(k+1)}}^{I}\left(\tezd\right)}}, \\
		\norm{q_{C_2^{(k+1)}, C_3^{(k+1)}}^{I}\left(\tezd\right)} &\leq \frac{\operatorname{Const}}{\norm{p_{C_2^{(k+1)}, C_3^{(k+1)}}^{I}\left(\tezd\right)}}.
	\end{align*}
	
	The problem is, however, that it is not clear whether these bounds will hold during the whole interval, and even worse, a reformation of clusters is possible (so it is not clear for which sets of clusters these estimates should hold). So first of all we have to introduce some new time-dependent cluster decomposition $\C'(t)$ for the particles in $C_2^{(k)} \cup C_3^{(k)}$ in this time interval. For this we use the potential energy and define
	\begin{align*}
		\mathcal{C}'(t) &:= \left\{ D'(t), \left(C_2^{(k)} \cup C_3^{(k)}\right) \setminus D'(t) \right\}
	\end{align*}
	where we define the non-trivial cluster $D'(t)$ as the term with the smallest\footnote{Note that the total potential energy has to be very negative in order to compensate the large positive kinetic energy. Thus in $D'$ we will consider particles $i,j$ which are close together.} contribution to the total energy
	\begin{align*}
		D'(t) &= \{i,j\} \subseteq C_2^{(k)} \cup C_3^{(k)} \text{ if } \frac{Z_{i,j}}{\norm{q_i-q_j}^{\alpha}} \leq \min_{a,b \in C_2^{(k)} \cup C_3^{(k)}, a \neq b} \frac{Z_{a,b}}{\norm{q_a-q_b}^{\alpha}}.
	\end{align*}
	Note that $D'(t)$ is not uniquely defined at times where there are two pairs realizing the minimum. At those times, a change of the cluster decomposition is possible. As a shorthand notation we denote the clusters in $\mathcal{C}'(t)$ as $C_2'(t), C_3'(t)$ choosing the naming compatible with the next passage, i.e.
	\begin{align*}
		C_3'(t) &:= \begin{cases}
			D'(t) \quad &\text{ if } D'(t) \cap C_3^{(k+1)} \neq \emptyset, \\
			\left(C_2^{(k)} \cup C_3^{(k)}\right) \setminus D'(t) \quad &\text{ else.}
		\end{cases}
	\end{align*}
	The idea to control the motion is now that we should have a condition of the form
	\begin{align*}
		\norm{q_{C_2'(t), C_3'(t)}^{I}(t)} \leq \frac{\operatorname{Const}}{\norm{p_{C_2'(t), C_3'(t)}^{I}(t)}}
	\end{align*}
	during the close encounter and if this condition fails, then we are already past the start of the next passage. This will be made precise in the following proposition: 
	
	\begin{proposition}[Bounds in the phase of a quasi-collision] \label{prop:BoundsPhaseQuasiCollision}\quad \\
		Consider the $k$-th passage with $k$ odd and sufficiently close to the escape time. \\
		Let $\Con{1}$ be a large constant that depends on the usual constants and consider the motion up to the time $\texd \geq \tnzd$ at which we first have
		\begin{align}
			\norm{q_{C_2'\left(\texd\right), C_3'\left(\texd\right)}^{I}\left(\texd\right)} \geq \frac{\Con{1}}{\norm{p_{C_2'\left(\texd\right), C_3'\left(\texd\right)}^{I}\left(\texd\right)}}. \label{eq:DefinitionTex2}
		\end{align}
		Let $\tezd$ be the time at which the next passage \enquote{starts}\footnote{Recall that this means 
			$ \norm{q_{C_2^{(k+1)}, C_3^{(k+1)}}^{I}\left(\tezd\right)} = \norm{p_{C_2^{(k+1)}, C_3^{(k+1)}}^{I}\left(x^{(k+1)}\right)}^{-\gamma}$.}. Assume the following bound on the integrated force terms in the interval $\left[\tnzd, \tezd\right]$ for arbitrary $i \in \{1,2,3,4\}$:
		\begin{align*}
			\int_{\tnzd}^{\tezd} \norm{F_i(t)} \diff t &\leq \frac{\operatorname{Const}}{\norm{p_{C_1^{(k)}}\left(x^{(k)}\right)}},
		\end{align*}
		then the following statements hold:
		\begin{enumerate}[(i)]
			\item \label{bul:BoundsPhaseQuasiColl_EnergyBound} We have for all $t \in \left[\tnzd, \texd\right]$
			\begin{align*}
				\left|H_{\{1,2,3,4\}}^{I}(t)\right| \leq \left|H_{\{1,2,3,4\}}^{I}(\tnzd)\right| + \operatorname{Const}(\delta).
			\end{align*}
			\item \label{bul:BoundsPhaseQuasiColl_NextPassageStarted} $\texd \geq \tezd$ has to hold, i.e. the next passage has already started and we have $C_2'\left(\texd\right) = C_2^{(2)}$, $C_3'\left(\texd\right) = C_3^{(2)}$.
			\item \label{bul:BoundsPhaseQuasiColl_PositionBound} For all times $t \in \left[\tnzd, \texd\right]$
			\begin{align*}
				\norm{q_{C_2'\left(t\right), C_3'\left(t\right)}^{I}\left(t\right)} \leq \frac{\Con{1}}{\norm{p_{C_2'\left(t\right), C_3'\left(t\right)}^{I}\left(t\right)}}.
			\end{align*}
		\end{enumerate}
	\end{proposition}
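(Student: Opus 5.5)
I would prove the three statements together by a continuity argument on $[\tnzd, \texd]$. Statement (i) comes first, since it only uses the integrated force bound. The internal energy $H_{\{1,2,3,4\}}^{I}$ of the subsystem changes only through the external forces:
\begin{align*}
	\frac{\diff}{\diff t} H_{\{1,2,3,4\}}^{I} = \sum_{i=1}^{4} \scalprod{\frac{p_{\{i\}}}{m_i}}{F_i}.
\end{align*}
The velocities relative to the subsystem's center of mass are dominated by the outer cluster $C_1^{(k)}$ and by the internal velocity of $C_2^{(k)}\cup C_3^{(k)}$. The first contributes at most $\norm{p_{C_1^{(k)}}(x^{(k)})}\int\norm{F_i}\diff t \leq \operatorname{Const}$, by the assumed integrated force bound together with Corollary~\ref{cor:SomeComparisons}. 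For the second I would not bound the velocity naively. Because the forces have the form \eqref{eq:ForcesAreCenterlike}, with bounded derivative $DW_l$ away from the centers (minimal distance $\delta$), the net work of the $F_i$ on the internal motion of the tight triple equals the work of a tidal force. That tidal force is bounded by $\operatorname{Const}(\delta)$ times the diameter of the triple. Since the triple stays at distance $\lesssim \norm{p}^{-1}$ on $[\tnzd,\texd]$ (the defining property of $\texd$ together with the $D'$ energy argument), integrating over the quasi-collision time gives $\operatorname{Const}(\delta)$.

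For (iii) I would argue by contradiction with the first exit time. At $\tnzd$ the bound holds with a constant smaller than $\Con{1}$, by the estimates displayed before the proposition and the definition of $\tnzd$. The reassignment of $D'(t)$ is continuous in the relevant quantities: at a switching time two pair potentials coincide, so $\norm{q^{I}_{C_2',C_3'}}$ changes only by a bounded factor. Hence on $[\tnzd,\texd)$ the inequality holds by definition of $\texd$ as a first time, and (iii) is immediate once $\texd$ is shown to be well defined, i.e. once the first time exists.

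The real content is (ii), which I expect to be the main obstacle. Suppose $\texd < \tezd$. At $\texd$ the relative distance between $D'$ and the single particle equals $\Con{1}/\norm{p^{I}}$ exactly. Using (i) and the energy bound for $D'$, I would show the pair $D'$ is tight and the third particle is escaping with momentum of the size dictated by energy conservation. A propagation estimate as in Section~\ref{sec:RigInvSmallTimesVierTeilchen} then shows $\norm{q^{I}_{C_2',C_3'}}$ grows monotonically from there. It therefore reaches $\norm{p^{I}}^{-\gamma}$ (recall $\gamma<2/\alpha$), which is the defining condition of $\tezd$, so that $C_2'=C_2^{(k+1)}$, $C_3'=C_3^{(k+1)}$. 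The delicate part is excluding a return: after $\texd$ the escaping particle must not fall back into the triple before separating to the scale $\norm{p}^{-\gamma}$. For this I would use the convexity of $\norm{q^I_{C_2',C_3'}}^2$ once the kinetic term dominates the potential term. That dominance follows from the choice of $\Con{1}$ large, since the potential then scales like $\Con{1}^{-\alpha}\norm{p}^{\alpha}$ while the kinetic term scales like $\norm{p}^2$. The constants would be fixed so that $\Con{1}$ depends only on $\delta$, the masses and the energy bound from (i).
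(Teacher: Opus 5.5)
Your treatment of (i) is essentially the paper's. You should, however, make explicit that the tightness of $D'$ comes from a bootstrap on the fraction of the kinetic energy $K$ it carries. You should also pair $p^{I}_{D'}$ only with $F^{I}_{D'}$, which is $\lesssim\norm{q^{I}_{D'}}$.

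Part (ii) has a genuine gap. You assert that at $\texd$ ``the third particle is escaping'' and then propagate forward, but nothing forces $\dot J^{I}_{C_2',C_3'}(\texd)>0$. The threshold $\norm{q^{I}_{C_2',C_3'}}\,\norm{p^{I}_{C_2',C_3'}}=\Con{1}$ can be reached while $\norm{q^{I}_{C_2',C_3'}}$ decreases and $\norm{p^{I}_{C_2',C_3'}}$ grows faster, or through a large tangential momentum. In either case convexity gives you no monotone growth after $\texd$. Excluding this is the heart of the paper's proof, and it needs three ingredients your proposal lacks.
(a) The integrated force hypothesis is used not only for the energy but to show that $L$, $L_{\{1,2,3,4\}}$ and $L_{C_1^{(k)}}$ change by $o(1)$. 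Via the Jacobi decomposition of $L$ and the smallness of $L_{D'}$, this gives $\norm{L_{C_2',C_3'}}=o(1)$ on the whole interval, also across switches of $D'$. So crossing the threshold forces $\betr{\scalprod{q^{I}_{C_2',C_3'}}{p^{I}_{C_2',C_3'}}}\geq\Con{1}-1$ at $\texd$.
(b) One shows $\norm{q^{I}_{C_2',C_3'}}\leq 1$ on $[\tnzd,\texd]$: a local maximum above $1$ would leave the triple unable to regain the acceleration needed to reach $C_1^{(k)}$ again. Hence $\norm{p^{I}_{C_2',C_3'}(\texd)}\geq\Con{1}-1$. Energy conservation does not dictate this relative momentum. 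Your comparison of $\Con{1}^{-\alpha}\norm{p}^{\alpha}$ with $\norm{p}^{2}$ in $\ddot J$ presupposes exactly such a lower bound.
(c) The incoming case $\dot J^{I}_{C_2',C_3'}(\texd)<0$ is excluded by propagating backwards. Take the maximal interval $[u_1,u_2]\ni\texd$ on which the momentum stays within a factor $2$ of its value at $\texd$ and $\betr{\dot J^{I}_{C_2',C_3'}}\geq\Con{1}/2$. In this case $\betr{\dot J}$ grows towards the past, so $u_1=\tnzd$. But $\dot J^{I}_{C_2,C_3}(\tnzd)=o(1)$, a contradiction.

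The endgame of your contradiction is also misdirected. Since $\gamma>1$ and the momenta at $\texd$ and $x^{(k+1)}$ are comparable, $\norm{p^{I}(x^{(k+1)})}^{-\gamma}\ll\Con{1}/\norm{p^{I}(\texd)}=\norm{q^{I}_{C_2',C_3'}(\texd)}$. So the level defining $\tezd$ lies below the distance already attained at $\texd$. Showing that the distance reaches this level after $\texd$ would be consistent with your assumption $\texd<\tezd$, not contradict it. The paper instead shows that $J^{I}_{C_2',C_3'}$ is increasing on $[\min\{u_1,\tezd\},x^{(k+1)}]$. Hence the last crossing of that level before $x^{(k+1)}$ precedes $\texd$, i.e.\ $\tezd<\texd$.

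Finally, for (iii) at the endpoint $\texd$, your claim that ``$\norm{q^{I}_{C_2',C_3'}}$ changes by a bounded factor'' at a switch of $D'$ does not rule out crossing the threshold by a jump. What is needed, and what the energy bootstrap of (i) gives, is that a switch occurs only when the whole triple is at scale $K^{-1/\alpha}$. There $\norm{q^{I}_{C_2',C_3'}}\norm{p^{I}_{C_2',C_3'}}\lesssim K^{\frac12-\frac1\alpha}\ll\Con{1}$, so continuity yields equality at $\texd$.
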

	
	\begin{remark}
		\quad \\ 
		Note that \eqref{bul:BoundsPhaseQuasiColl_EnergyBound} and \eqref{bul:BoundsPhaseQuasiColl_PositionBound} can be used to define a set of finite volume within which the trajectory will stay during the quasi-collision, whereas \eqref{bul:BoundsPhaseQuasiColl_NextPassageStarted} tells us that with this set we have controlled the motion long enough and now can use the other bounds we derived earlier for the next passage (if the assumptions are still satisfied). \\
		Additionally, one should remark that the integrated force bound in this lemma for the larger time interval is no problem in many of the intended applications. In our case, the perturbing forces are  centered \enquote{close to the origin} so large forces can act on the messenger cluster during its passage but onto the outer particles/clusters the forces are smaller. \\
		In order to prove \eqref{bul:BoundsPhaseQuasiColl_EnergyBound} and \eqref{bul:BoundsPhaseQuasiColl_PositionBound} one only needs some straightforward bounds on the change of the energy and that one is initially below the bound \eqref{eq:DefinitionTex2}. \\
		The difficult part of the statement is thus \eqref{bul:BoundsPhaseQuasiColl_NextPassageStarted}. Here a large problem is that in order to break the bound it is not necessary that $\norm{q_{C_2', C_3'}^{I}}$ is increasing but it is possible to break the bound by a stronger increase of $\norm{p_{C_2', C_3'}^{I}}$. If we have no increase of $\norm{q_{C_2', C_3'}^{I}}$, we cannot apply any kind of propagation estimate in positive time direction. Thus it is not clear, whether or not we reach the level of constant distance of the cluster barycenters at this excursion. And if we do not, we are possibly still before the time $\tezd$. A large part of the proof will be only necessary to rule out this behavior. For this, we need some bounds on the change of the angular momenta (for which we need the additional assumption) that will reduce this question to the influence of $\dot{J}_{C_2', C_3'}$. Then we can see that a decrease cannot lead to a violation of the bound \eqref{eq:DefinitionTex2} and afterwards the propagation estimate argument will succeed. \hfill $\diamond$
	\end{remark}
	
	\begin{proof}
		\quad \\
		First of all we need to notice that at time $\tnzd$ we have
		\begin{align*}
			\norm{q_{C_2'\left(\tnzd\right), C_3'\left(\tnzd\right)}^{I}\left(\tnzd\right)} &= \norm{q_{C_2^{(1)},C_3^{(1)}}^{I}(\tnzd)} = \norm{p_{C_2^{(1)},C_3^{(1)}}^{I}\left(x^{(1)}\right)}^{-\gamma} < \\ & < \frac{\Con{1}}{\norm{p_{C_2^{(1)},C_3^{(1)}}^{I}\left(\tnzd\right)}},
		\end{align*}
		where we can use a propagation estimate for the comparison of the momenta at the times $x^{(k)}$ and $\tnzd$. \\
		Thus, by definition of the time $\texd$, condition \eqref{bul:BoundsPhaseQuasiColl_PositionBound} is satisfied. We will even show that at time $\texd$ we have equality in \eqref{eq:DefinitionTex2} as we cannot get such an increase by a jump of the cluster decomposition. For this, we need the energy considerations of \eqref{bul:BoundsPhaseQuasiColl_EnergyBound}. Note that
		\begin{align*}
			\dot{H}_{\{1,2,3,4\}}^{I} &= \sum_{i=1}^{4} \scalprod{\frac{p_i}{m_i}}{F_i} =  \scalprod{\frac{p_{D'}^{I}}{m_{D'}^{I}}}{F_{D'}^{I}} + \scalprod{\frac{p_{C_2', C_3'}^{I}}{m_{C_2', C_3'}^{I}}}{F_{C_2', C_3'}^{I}} + \\ & \quad + \frac{m_{C_1} + m_{C_2} + m_{C_3}}{m_{C_2} + m_{C_3}} \cdot \scalprod{\frac{p_{C_1}}{m_{C_1}}}{F_{C_1, C_2 \cup C_3}^{I}} +  \scalprod{\frac{p_{\{1,2,3,4\}}}{m_{\{1,2,3,4\}}}}{F_{\{1,2,3,4\}}}
		\end{align*}
		which becomes apparent if one remembers that only the external forces can change the total energy and then we use Jacobi coordinates for the kinetic energy (compare Lemma~\ref{lem:PropJacobiCoordinates}). \\
		But now we can argue with the specific form of the forces in \eqref{eq:ForcesAreCenterlike}: We obtain some cancellation effect for the internal forces, as these force differences mainly depend on the distances of the corresponding barycenters (and the internal distances of the clusters):
		\begin{align*}
			\norm{F_{D'}^{I}} &\leq \operatorname{Const}(\delta) \cdot \norm{q_{D'}^{I}}, \\
			\norm{F_{C_2', C_3'}^{I}} &\leq \operatorname{Const}(\delta) \cdot \left(\norm{q_{D'}^{I}} + \norm{q_{C_2', C_3'}^{I}}\right).
		\end{align*}
		Thus as long as the nontrivial cluster $D'$ provides at least a fraction\footnote{One should note that the numerical value of $\frac{1}{16}$ is somewhat arbitrary. As there are six terms in the potential for four particles, the minimal term must have a contribution of at least $\frac{1}{6}$; we chose a smaller bound as there might be some fluctuations that can thus be compensated too.} of $\frac{1}{16}$ of the total kinetic energy, the nontrivial cluster is sufficiently close together on a scale $K^{-\frac{1}{\alpha}}$. Then the internal contribution to the estimates of $\norm{F_{D'}^{I}}$ and $\norm{F_{C_2', C_3'}^{I}} $ weighted with the corresponding momenta is small:
		\begin{align*}
			\operatorname{Const}(\delta) \cdot \left(\norm{p_{D'}^{I}} + \norm{p_{C_2', C_3'}^{I}}\right) \norm{q_{D'}^{I}} &\leq \operatorname{Const}(\delta) \cdot K^{-\frac{2-\alpha}{2\alpha}}.
		\end{align*}
		For the contribution of $\norm{q_{C_2', C_3'}^{I}}$ to the estimate for $\norm{F_{C_2', C_3'}^{I}}$ we can use that we are before time $\texd$ and thus have \eqref{bul:BoundsPhaseQuasiColl_PositionBound}:
		\begin{align*}
			\operatorname{Const}(\delta) \norm{p_{C_2', C_3'}^{I}} \norm{q_{C_2', C_3'}^{I}} \leq \operatorname{Const}(\delta) \cdot \Con{1}.
		\end{align*}
		Thus the only large term in the derivative of $H_{\{1,2,3,4\}}^{I}$ could be $\scalprod{\frac{p_{C_1^{(k)}}}{m_{C_1^{(k)}}}}{F_{C_1, C_2 \cup C_3}^{I}}$. But as $p_{C_1^{(k)}}$ is almost constant, by an integration of the force term we obtain that this is bounded by $\operatorname{Const}(\delta)$. \\
		Then the proof of the energy bound follows the usual pattern: We take the maximal interval on which the nontrivial cluster provides at least a given fraction of the total kinetic energy $K$ that is smaller than the initial fraction, then we integrate the change of the energy and as this cannot change much, the nontrivial cluster must satisfy this estimate for all times up to $\texd$. Thus \eqref{bul:BoundsPhaseQuasiColl_EnergyBound} is proven. Additionally we see that the nontrivial cluster $D'$ can only change if particles $i,j$ are close together on a scale of the total kinetic energy:
		\begin{align*}
			\norm{q_i-q_j} \leq \operatorname{Const} \cdot K^{-\frac{1}{\alpha}} \ll \frac{\Con{1}}{\norm{p_{\{i,j\}}^{I}}}.
		\end{align*}
		Hence, we cannot violate the bound in \eqref{eq:DefinitionTex2} by a change of the cluster decomposition and thus by continuity we get equality at time $\texd$.\\
		In order to prove \eqref{bul:BoundsPhaseQuasiColl_NextPassageStarted}, we argue by contradiction assuming that $\tezd > \texd$ would hold. Then we can use our bounds for the integrated forces in the time interval $\left[\tnzd, \tezd\right]$. We need this force bound to control the changes of the angular momenta:
		\begin{align*}
			\int_{\tnzd}^{\texd} \norm{\dot{L}(u)} \diff u & = o(1), \\
			\int_{\tnzd}^{\texd} \norm{\dot{L}_{\{1,2,3,4\}}(u)} \diff u & = o(1), \\
			\int_{\tnzd}^{\texd} \norm{\dot{L}_{C_1}(u)} \diff u & = o(1).
		\end{align*}
		We control these changes and can decompose the total angular momentum $L$ using Jacobi coordinates:
		\begin{align*}
			L &= L_{\{1,2,3,4\}} + L_{D'} + L_{C_2', C_3'} + \frac{m_{C_1^{(k)}} + m_{C_2^{(k)}} + m_{C_3^{(k)}}}{m_{C_2^{(k)}} + m_{C_3^{(k)}}} L_{C_1^{(k)}}.
		\end{align*}
		As by our energy considerations we know that $L_{D'}$ is always small, on any interval on which the cluster decomposition does not change, the change of $L_{C_2', C_3'}$ is $o(1)$. But initially (at the first time this cluster decomposition was realized) we already know\footnote{Either this happened at time $\tnzd$ or we can deduce this from the energy considerations at the time of the new cluster formation, as then the distances between two pairs of the three particles from $C_2 \cup C_3$ are small.} that $\norm{L_{C_2', C_3'}} = o(1)$, thus we have on the whole time interval $\left[\tnzd, \texd\right]$:
		\begin{align*}
			\norm{L_{C_2', C_3'}} = o(1).
		\end{align*}
		From this observation we deduce that in order to break the bound \eqref{eq:DefinitionTex2}, we must have
		\begin{align*}
			\left|\scalprod{q_{C_2'\left(\texd\right), C_3'\left(\texd\right)}^{I}\left(\texd\right)}{p_{C_2'\left(\texd\right), C_3'\left(\texd\right)}^{I}\left(\texd\right)}\right| \geq \Con{1} - 1,
		\end{align*}
		as the orthogonal part cannot have a large contribution. \\
		Our goal is now to prove that we can apply a propagation estimate in a suitable interval containing $\texd$ and where there are no changes of the cluster decomposition. Then the change of the momentum is small and if $\norm{q_{C_2'\left(\texd\right), C_3'\left(\texd\right)}^{I}}$ is strictly increasing up to the time $x^{(k+1)}$, we can argue that $\texd > \tezd$ has to hold. \\
		An important observation for the propagation estimate is now that for any $t \in \left[\tnzd, \texd\right]$ we have that
		\begin{align*}
			\norm{q_{C_2'\left(t\right), C_3'\left(t\right)}^{I}\left(t\right)} \leq 1
		\end{align*}
		has to hold. Otherwise there would be a time of a local maximum for $\norm{q_{C_2'\left(t\right), C_3'\left(t\right)}^{I}}$ that is strictly larger than 1. But then due to the at most constant forces, the small remaining time and the impossibility of further cluster changes as long as the clusters are far apart, these clusters could never come close enough to get the acceleration needed to hit cluster $C_1$ again.\\
		From this bound on the position we get by specializing to the time $\texd$:
		\begin{align*}
			\norm{p_{C_2'\left(\texd\right), C_3'\left(\texd\right)}^{I}\left(\texd\right)} &\geq \Con{1} - 1,
		\end{align*}
		hence we can choose this large compared to other terms that only depend on e.g. the mass constants. This minimal bound on the momentum is important for the propagation estimate. \\
		Now we argue in the usual propagation estimate style: Consider the maximal interval $\left[u_1, u_2\right] \subseteq \left[ \tnzd, x^{(k+1)} \right]$ containing $\texd$ and on which we have the following inequalities for any $t \in \left[u_1, u_2\right]$:
		\begin{align}
			\frac{\norm{p_{C_2'\left(\texd\right), C_3'\left(\texd\right)}^{I}(t)}}{\norm{p_{C_2'\left(\texd\right), C_3'\left(\texd\right)}^{I}\left(\texd\right)}}  &\in \left[\frac{1}{2}, 2\right], \label{eq:MomentaBoundPC2'C3'}\\
			\left|\scalprod{q_{C_2'\left(\texd\right), C_3'\left(\texd\right)}^{I}\left(t\right)}{p_{C_2'\left(\texd\right), C_3'\left(\texd\right)}^{I}\left(t\right)}\right| &\geq \frac{\Con{1}}{2}. \label{eq:AbsValueBoundDotJC2'C3'}
		\end{align}
		Note that condition \eqref{eq:AbsValueBoundDotJC2'C3'} implies that
		\begin{align}
			\norm{q_{C_2'\left(\texd\right), C_3'\left(\texd\right)}^{I}\left(t\right)} & \geq  \frac{\Con{1}}{2 \norm{p_{C_2'\left(\texd\right), C_3'\left(\texd\right)}^{I}\left(t\right)}}, \label{eq:LowerBoundqC2'C3'}
		\end{align}
		hence there is no change of the cluster decomposition possible (and therefore we take the cluster decomposition at time $\texd$). As we are only considering times in this interval, in the following we will not write the time argument for the cluster decomposition $C_2', C_3'$ in order to shorten the notation. \\
		As usual for a propagation estimate, the next step consists in considering $\ddot{J}_{C_2', C_3'}^{I}$. By the usual methods we get the following expression for this quantity involving some error terms:
		\begin{align*}
			\ddot{J}_{C_2', C_3'}^{I} &=  \frac{\norm{p_{C_2', C_3'}^{I}}^2}{m_{C_2', C_3'}^{I}} + \frac{\sum\limits_{\substack{i \in C_2', \\ j \in C_3'}} \alpha Z_{i,j}}{\norm{q_{C_2', C_3'}^{I}}^{\alpha}} +  \mathcal{O}\left( \operatorname{Const} \frac{\norm{q_{C_2', C_3'}^{I}}}{\norm{q_{C_1^{(k)}} - q_{C_3'}}^{1+\alpha}}\right)+ \\ & \quad + \mathcal{O}\left(\operatorname{Const} \frac{\norm{q_{D'}^{I}}}{\norm{q_{C_2', C_3'}^{I}}^{1+\alpha}}\right) +  \mathcal{O}\left(\norm{q_{C_2', C_3'}^{I}} \norm{F_{C_2', C_3'}^{I}} \right).
		\end{align*}
		The error terms are always negligible as the momentum is large in this interval and due to the lower bound \eqref{eq:LowerBoundqC2'C3'} for $\norm{q_{C_2', C_3'}^{I}}$ we see that
		\begin{align*}
			\ddot{J}_{C_2', C_3'}^{I} &\geq \frac{\norm{p_{C_2', C_3'}^{I}\left(\texd\right)}^2}{6 m_{C_2', C_3'}^{I}}
		\end{align*}
		holds on the whole interval. Now we will show that the interval $[u_1, u_2]$ is maximal in at least one direction. For this we need to make a case distinction on the sign of $\dot{J}_{C_2', C_3'}^{I}\left(\texd\right)$. One might expect that the only way for  the condition \eqref{eq:DefinitionTex2} to occur for $\texd$ would be a growth of the distance $\norm{q_{C_2', C_3'}^{I}}$ (see the second bullet for this case), but this condition could be broken as well if $\norm{q_{C_2', C_3'}^{I}}$ is decreasing but $\norm{p_{C_2', C_3'}^{I}}$ is increasing more rapidly. That this cannot be the case here will be proven in the first bullet of the case distinction:
		\begin{itemize}
			\item \underline{Case $\dot{J}_{C_2', C_3'}^{I}\left(\texd\right) < 0$:} By the bound on $\ddot{J}_{C_2', C_3'}^{I}$ we see that $\dot{J}_{C_2', C_3'}^{I}$ is monotonically increasing, hence due to the negative sign we see that $\betr{\dot{J}_{C_2', C_3'}^{I}}$ is decreasing in the interval $\left[u_1, \texd \right]$. Therefore, \eqref{eq:AbsValueBoundDotJC2'C3'} is satisfied with strict inequality at time $u_1$ and we use the bound
			\begin{align*}
				J_{C_2', C_3'}^{I}(t) \geq J_{C_2', C_3'}^{I}\left(\texd\right) + \operatorname{Const} \cdot \left(\texd-t\right)^2 \norm{p_{C_2', C_3'}^{I}\left(\texd\right)}^2
			\end{align*}
			for times $t \in \left[u_1, \texd\right]$ to see that the change of $\norm{p_{C_2', C_3'}^{I}}$ is of the smaller order
			\begin{align*}
				\frac{\operatorname{Const}}{\norm{p_{C_2', C_3'}^{I}\left(\texd\right)} \norm{q_{C_2', C_3'}^{I}\left(\texd\right)}^{\alpha}}& \leq \frac{\operatorname{Const} \cdot \norm{p_{C_2', C_3'}^{I}\left(\texd\right)}^{\alpha-1}}{\left(\Con{1}-1\right)^\alpha} \\ & \ll \norm{p_{C_2', C_3'}^{I}\left(\texd\right)}.
			\end{align*}
			Thus by maximality we had $u_1 = \tnzd$, but we know that
			\begin{align*}
				\dot{J}_{C_2, C_3}\left(\tnzd\right) = o(1) \ll \frac{\Con{1}}{2}
			\end{align*}
			which is the desired contradiction.
			\item \underline{Case $\dot{J}_{C_2', C_3'}^{I}\left(\texd\right) > 0$:} Then we get the increase of $\betr{\dot{J}_{C_2', C_3'}^{I}}$ in the interval $\left[\texd, u_2\right]$, thus \eqref{eq:AbsValueBoundDotJC2'C3'} is satisfied with strict inequality at time $u_2$. Then a similar argument as above shows that we have a change of $\norm{p_{C_2', C_3'}^{I}}$ only of smaller order and thus \eqref{eq:MomentaBoundPC2'C3'} is satisfied at this time as well. Maximality then implies $u_2 = x^{(2)}$. But then we have a propagation estimate (compare \cite[Proposition 4.6, p.40ff]{quaschner2025improbability} for more details in a similar context) back to the time $\tezd$, the change of $\norm{p_{C_2', C_3'}^{I}}$ in the interval $\left[\tezd, x^{(2)}\right]$ is only small and $J_{C_2', C_3'}^{I}$ is monotonically increasing in the interval $\left[\min\left\{u_1, \tezd\right\}, x^{(2)}\right]$. Thus, as
			\begin{align*}
				\norm{q_{C_2', C_3'}^{I}\left(\tezd\right)} &= \norm{p_{C_2', C_3'}^{I}\left(x^{(2)}\right)}^{-\gamma} \\ & \ll \frac{\Con{1}}{\norm{p_{C_2', C_3'}^{I}\left(\texd\right)}} = \norm{q_{C_2', C_3'}^{I}\left(\texd\right)},
			\end{align*}
			we deduce that $\tezd < \texd$ has to hold. This contradicts our initial assumption $\tezd > \texd$.
		\end{itemize}
		As we got a contradiction in both cases, the initial assumption $\tezd > \texd$ was false and thus \eqref{bul:BoundsPhaseQuasiColl_NextPassageStarted} is proven.
	\end{proof}
	
	\section{The Poincaré surfaces}
	\label{sec:PoincareSurfaces}
	
	We give a sequence of Poincaré surfaces only for a fixed final cluster decomposition $\C = \{ U_1, \ldots, U_k, B_1, \ldots, B_l\}$ with $k,l \in \N_0$ and $\left|U_i\right|= 4$ for $i=1, \ldots, k$ into unbounded and bounded subsystems, as in Theorem~\ref{thm:ImpResSubsystems}. Without loss of generality we assume that $k>0$, as the case for bounded systems (e.g. having a total collision) was already handled in \cite{duignan2026blowup}. The approach is to define the hypersurfaces as Cartesian products of sets for each subsystem. These sets may depend on parameters (like bounds for the angular momenta), but we can use an exhaustion to get the statement in general.\\
	For $U_1$ we take a sequence of hypersurfaces almost all of which are hit by each perturbed non-collision trajectory. These were constructed in the proof of \cite[Theorem 6.1]{quaschner2025improbability}.\\
	For all other subsystems we take sets of bounded volume in which the corresponding subsystem will stay close to the escape time. For bounded, i.e,. colliding subsystems, these sets have been given in Proposition~\ref{prop:SubsetCollisions}. \\
	What remains to be done is to find such sets for unbounded subsystems with four particles. We can define these sets using the estimates that we derived in the previous sections, then we only need to estimate the volume of these sets. We have to use different sets for the different stages of the passage and the sets will depend on the explicit cluster decomposition during/before a passage; as these are only finitely many, we can take the union over all these decompositions and still obtain a set of finite volume. \\
	We will define these sets now and then calculate the volume. From this, the proof of Theorem~\ref{thm:ImpResSubsystems} follows. This is a short version of \cite[Section 5.3]{quaschner2023non}, where similar calculations can be found with more details. We will only focus on one cluster $U_i$ and for simplicity of the indices we assume that $U_i= \{1,2,3,4\}$. For the definition of the times, compare Definition~\ref{def:FurtherReferenceTimes}.
	
	\begin{definition}[Sets of finite volume for the motion of particles from $\{1,2,3,4\}$] \quad \\
		Fix a cluster decomposition $\{C_1, C_2, C_3\}$ of $\{1,2,3,4\}$ with $D \in \{C_1, C_2, C_3\}$ being the nontrivial cluster. We define the following parametric sets:
		\allowdisplaybreaks
		\begin{align}
			\begin{split}
				&A_{[t_{12}^{(k)}, x_1^{(k)}], \delta, \mathcal{L}, \mu_q, \mu_p} := 
				\\ & \bigg\{ (q_D^{I}, q_{C_1,C_2}^{I}, q_{C_3}, q_{\{1,2,3,4\}}, p_D^{I}, p_{C_1,C_2}^{I}, p_{C_3}, p_{\{1,2,3,4\}}) \in \R^{8d} \mid   \\ & \quad \norm{q_{D}^{I}} \leq \norm{q_{C_1,C_2}^{I}}, \norm{p}^2 \leq \frac{\operatorname{Const}}{\norm{q_D^{I}}^{\alpha}}, \\ & \quad H_{\{1,2,3,4\}}(p,q) \in [-\operatorname{Const}\cdot \log(\norm{p_{C_3}}),\operatorname{Const}\cdot \log(\norm{p_{C_3}})], \\ & \quad \norm{p_{C_3}} \geq 1, \quad -V(q) \leq \operatorname{Const} \norm{q_D^{I}}^{-\alpha}, \norm{q_{\{1,2,3,4\}}} \leq \mu_q, \norm{p_{\{1,2,3,4\}}} \leq \mu_p, \\ & \quad \norm{\left({p_{C_1,C_2}^{I}}\right)^{\perp \hat{q}_{C_3}}} \leq \operatorname{Const}\cdot \mathcal{L} \cdot \max \left\{\frac{1}{\norm{{{p_{C_1,C_2}^{I}}}} \norm{{q_{C_1,C_2}^{I}}}^{\alpha}}, \frac{1}{\norm{q_{C_3}}} \right\}, \\ & \quad \norm{q_{C_1, C_2}^{I}} \leq 1, \norm{\left(q_{C_1,C_2}^{I}\right)^{\perp \hat{q}_{C_3}}} \leq \frac{\operatorname{Const} \cdot \mathcal{L}}{\norm{p_{C_1, C_2}^{I}}} ,  \\ &\quad \norm{p_{C_3}^{\perp \hat{q}_{C_3}}} \leq \frac{\operatorname{Const} \cdot \mathcal{L}}{\norm{q_{C_3}}}, \norm{p_{C_1, C_2}^{I}} \geq 1, 1 \leq \norm{q_{C_3}} \leq \operatorname{Const} \norm{p_{C_1, C_2}^{I}} \bigg\}.
			\end{split}
			\label{def:Aunezx}
		\end{align}
		\begin{align}
			\begin{split}
				& A_{[x^{(k)}, u_{ch}^{(k)}], \delta, \mathcal{L}, \mu_q, \mu_p} := \\
				&\bigg\{ (q_D^{I}, q_{C_1,C_2}^{I}, q_{C_3}, q_{\{1,2,3,4\}}, p_D^{I}, p_{C_1,C_2}^{I}, p_{C_3}, p_{\{1,2,3,4\}}) \in \R^{8d} \mid   \\ & \quad 1 \leq \norm{q_{C_1,C_2}^{I}} \leq \operatorname{Const} \norm{q_{C_3}}, \norm{p}^2 \leq \frac{\operatorname{Const}}{\norm{q_D^{I}}^{\alpha}}, \\ & \quad H_{\{1,2,3,4\}}(p,q) \in [-\operatorname{Const}\cdot \log(\norm{p_{C_3}}),\operatorname{Const}\cdot \log(\norm{p_{C_3}})], \\& \quad \norm{p_{C_3}} \geq 1,-V(q) \leq \operatorname{Const} \norm{q_D^{I}}^{-\alpha}, \norm{q_{\{1,2,3,4\}}} \leq \mu_q, \norm{p_{\{1,2,3,4\}}} \leq \mu_p, \\ & \quad \norm{\left(q_{C_1, C_2}^{I}\right)^{\perp \hat{q}_{C_3}}} \leq \frac{\operatorname{Const} \cdot \mathcal{L}}{\norm{p_{C_1, C_2}^{I}}}, \norm{\left({p_{C_1,C_2}^{I}}\right)^{\perp \hat{q}_{C_3}}} \leq \frac{\operatorname{Const}\cdot \mathcal{L}}{\norm{q_{C_3}}}, \\ & \quad \norm{p_{C_3}^{\perp \hat{q}_{C_3}}} \leq \frac{\operatorname{Const} \cdot \mathcal{L}}{\norm{q_{C_3}}}, \norm{p_{C_1, C_2}^{I}} \geq 1, 1\leq \norm{q_{C_3}} \leq \operatorname{Const} \norm{p_{C_1, C_2}^{I}} \bigg\}.
			\end{split} 
			\label{def:Axsmin}
		\end{align}
		\begin{align}
			\begin{split}
				& A_{[u_{ch}^{(k)}, s^{(k)}], \delta, \mathcal{L}, \mu_q, \mu_p} := \\ & \bigg\{ (q_D^{I}, q_{C_2,C_3}^{I}, q_{C_1}, q_{\{1,2,3,4\}}, p_D^{I}, p_{C_2,C_3}^{I}, p_{C_1}, p_{\{1,2,3,4\}}) \in \R^{8d} \mid   \\ & \quad 1 \leq \norm{q_{C_2,C_3}^{I}} \leq \operatorname{Const} \norm{q_{C_1}}, \norm{p}^2 \leq \frac{\operatorname{Const}}{\norm{q_D^{I}}^{\alpha}}, \\ & \quad H_{\{1,2,3,4\}}(p,q) \in [-\operatorname{Const}\cdot \log(\norm{p_{C_1}}),\operatorname{Const}\cdot \log(\norm{p_{C_1}})], \\& \quad \norm{p_{C_1}} \geq 1,E-V(q) \leq \operatorname{Const} \norm{q_D^{I}}^{-\alpha}, \norm{q_{\{1,2,3,4\}}} \leq \mu_q, \norm{p_{\{1,2,3,4\}}} \leq \mu_p, \\ & \quad \norm{\left(q_{C_2, C_3}^{I}\right)^{\perp \hat{p}_{C_1}}} \leq \frac{\operatorname{Const} \cdot \mathcal{L}}{\norm{p_{C_2, C_3}^{I}}}, \norm{\left({p_{C_2,C_3}^{I}}\right)^{\perp \hat{p}_{C_1}}} \leq \frac{\operatorname{Const}\cdot \mathcal{L}}{\norm{q_{C_2, C_3}^{I}}}, \\ & \quad \norm{q_{C_1}^{\perp \hat{p}_{C_1}}} \leq \frac{\operatorname{Const} \cdot \mathcal{L}}{\norm{p_{C_1}}}, \norm{p_{C_2, C_3}^{I}} \geq 1, 1\leq \norm{q_{C_1}} \leq \operatorname{Const} \norm{p_{C_1}} \bigg\}.
			\end{split} 
			\label{def:Asmins1}
		\end{align}
		\begin{align}
			\begin{split}
				& A_{[s^{(k)}, t_{23}^{(k)}], \delta, \mathcal{L}, \mu_q, \mu_p} := \\ & \bigg\{ (q_D^{I}, q_{C_2,C_3}^{I}, q_{C_1}, q_{\{1,2,3,4\}}, p_D^{I}, p_{C_2,C_3}^{I}, p_{C_1}, p_{\{1,2,3,4\}}) \in \R^{8d} \mid   \\ & \quad \norm{p}^2 \leq \frac{\operatorname{Const}}{\norm{q_D^{I}}^{\alpha}}, H_{\{1,2,3,4\}}(p,q) \in [-\operatorname{Const}\cdot \log(\norm{p_{C_1}}),\operatorname{Const}\cdot \log(\norm{p_{C_1}})], \\& \quad -V(q) \leq \operatorname{Const} \norm{q_D^{I}}^{-\alpha}, \norm{q_{\{1,2,3,4\}}} \leq \mu_q, \norm{p_{\{1,2,3,4\}}} \leq \mu_p, \norm{q_{C_2, C_3}^{I}} \leq 1, \\ & \quad  \norm{\left({p_{C_2,C_3}^{I}}\right)^{\perp \hat{p}_{C_1}}} \leq \operatorname{Const}\cdot \mathcal{L} \cdot \max \left\{\frac{1}{\norm{{{p_{C_2,C_3}^{I}}}} \norm{{q_{C_2,C_3}^{I}}}^{\alpha}}, 1 \right\}, \\ & \quad \norm{\left(q_{C_2,C_3}^{I}\right)^{\perp \hat{p}_{C_1}}} \leq \frac{\operatorname{Const} \cdot \mathcal{L}}{\norm{p_{C_2, C_3}^{I}}}, \norm{q_{C_1}^{\perp \hat{p}_{C_1}}} \leq \frac{\operatorname{Const} \cdot \mathcal{L}}{\norm{p_{C_1}}}, \\ & \quad \norm{p_{C_1}} \geq 1, \norm{p_{C_2, C_3}^{I}} \geq 1, 1 \leq \norm{q_{C_1}} \leq \operatorname{Const} \norm{p_{C_1}},  \norm{q_{D}^{I}} \leq \norm{q_{C_2,C_3}^{I}}\bigg\}.
			\end{split}
			\label{def:As1unzd}
		\end{align}
		\begin{align}
			\begin{split}
				& A_{[t_{23}^{(k)}, t_{23}^{(k+1)}], \delta, \mathcal{L}, \mu_q, \mu_p} := \\ & \bigg\{ (q_D^{I}, q_{C_2,C_3}^{I}, q_{C_1}, q_{\{1,2,3,4\}}, p_D^{I}, p_{C_2,C_3}^{I}, p_{C_1}, p_{\{1,2,3,4\}}) \in \R^{8d} \mid   \\ & \quad \norm{p}^2 \leq \frac{\operatorname{Const}}{\norm{q_D^{I}}^{\alpha}}, H_{\{1,2,3,4\}}(p,q) \in [-\operatorname{Const}\cdot \log(\norm{p_{C_1}}),\operatorname{Const}\cdot \log(\norm{p_{C_1}})], \\ & \quad -V(q) \leq \operatorname{Const} \norm{q_D^{I}}^{-\alpha}, \norm{q_{C_2,C_3}^{I}} \leq 1, \norm{p_{C_1}} \geq 1, \norm{q_{\{1,2,3,4\}}} \leq \mu_q, \norm{p_{\{1,2,3,4\}}} \leq \mu_p, \\ & \quad \norm{p_{C_2,C_3}^{I}} \leq \frac{\operatorname{Const}}{\norm{q_{C_2,C_3}^{I}}}, \norm{q_{C_1}^{\perp \hat{p}_{C_1}}} \leq \frac{\operatorname{Const} \cdot \mathcal{L}}{\norm{p_{C_1}}}, 1\leq \norm{q_{C_1}} \leq \operatorname{Const} \norm{p_{C_1}} \bigg\}.
			\end{split} 
			\label{def:AtnzdtezdV2}
		\end{align}
		The set $A_{\delta, \mathcal{L}, \mu_q, \mu_p}$ is defined as the finite union of all these sets.
	\end{definition}
	
	\begin{remark} \quad
		\begin{enumerate}[(i)]
			\item The sets from the above definition are labeled with intervals from one total passage starting at time $t^{(k)}_{23}$ and ending at the start of the next passage at time $t^{(k+1)}_{23}$.
			\item Most of these conditions can be inferred from the above analysis and some standard energy considerations.
			\item All conditions on the orthogonal components of some vectors have been inferred above, compare Propositions~\ref{prop:BoundsOrthogonalComponents},\ref{prop:DeviationStraightLineTnezx},\ref{prop:DeviationStraightLineSminTnzd}.
			\item Note that the energy of the subsystem need not be fixed, but it can grow at most by an additive constant. This will be compensated by the logarithm of the momentum, as the external momenta are at least exponentially lower bounded in the number of passages.
			\item Note that the intermediate time $u_{eq}$ from Definition~\ref{def:FurtherReferenceTimes} has been included. At this time, we switch from the consideration of the internal coordinates $q_{C_1, C_2}^{I}$ to $q_{C_2, C_3}^{I}$.
		\end{enumerate}
	\end{remark}
	
	\begin{proposition}[Finite volume of the set $A_{\delta, \mathcal{L}, \mu_q, \mu_p}$] \quad \\
		The volume of $A_{\delta, \mathcal{L}, \mu_q, \mu_p}$ is finite.
	\end{proposition}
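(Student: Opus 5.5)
Since $A_{\delta, \mathcal{L}, \mu_q, \mu_p}$ is a finite union (over the five time-labelled pieces and over the finitely many cluster decompositions $\{C_1,C_2,C_3\}$ of $\{1,2,3,4\}$), it suffices to show that each of the sets \eqref{def:Aunezx}--\eqref{def:AtnzdtezdV2} has finite Lebesgue measure in $\R^{8d}$. In each case I will work in the Jacobi coordinates listed in the definition. By Lemma~\ref{lem:PropJacobiCoordinates} these are symplectic, so the Lebesgue measure is just the product measure in these coordinates. The factor coming from $(q_{\{1,2,3,4\}}, p_{\{1,2,3,4\}})$ is bounded by $\operatorname{Const}\,\mu_q^d\mu_p^d$ and can be split off immediately. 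For the remaining variables I will use spherical-type coordinates adapted to the reference direction: polar coordinates for the "outer" variable ($q_{C_3}$ or $p_{C_1}$), and for the other vectors a decomposition into the component parallel to $\hat q_{C_3}$ (resp. $\hat p_{C_1}$) and the orthogonal $(d-1)$-dimensional component. The volume of each orthogonal component is then bounded by (its radius)$^{d-1}$, with the radii given by the constraints.

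The integration order I plan is as follows. First, the nontrivial cluster $(q_D^I,p_D^I)$: the energy window $H_{\{1,2,3,4\}}\in[-\operatorname{Const}\log\norm{p_{C_3}},\operatorname{Const}\log\norm{p_{C_3}}]$ together with $\norm{p}^2\le \operatorname{Const}\norm{q_D^I}^{-\alpha}$ and the potential bound confines $p_D^I$, for fixed remaining coordinates, to a shell around the Kepler energy surface. The resulting measure should be at most $\operatorname{Const}\cdot\log\norm{p_{C_3}}\cdot$ (a Kepler-type volume) that is integrable in $q_D^I$ near $0$, because $\alpha<2$. More precisely, the measure of $\{\norm{p_D^I}^2\lesssim \norm{q_D^I}^{-\alpha},\ \norm{q_D^I}\le r\}$ scales like $r^{d(1-\alpha/2)}$, with $r=\norm{q_{C_1,C_2}^I}$ or $r=\norm{q_{C_2,C_3}^I}$. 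Second, the middle pair $(q_{C_1,C_2}^I,p_{C_1,C_2}^I)$, say in \eqref{def:Aunezx}: the parallel components range over sets of size $\lesssim 1$ and $\norm{p_{C_1,C_2}^I}$ respectively, while the orthogonal components contribute
\[
\Big(\tfrac{\mathcal{L}}{\norm{p_{C_1,C_2}^I}}\Big)^{d-1}\cdot \mathcal{L}^{d-1}\max\Big\{\tfrac{1}{\norm{p_{C_1,C_2}^I}\norm{q_{C_1,C_2}^I}^{\alpha}},\tfrac{1}{\norm{q_{C_3}}}\Big\}^{d-1}.
\]
Third, the outer pair $(q_{C_3},p_{C_3})$: the angular condition $\norm{p_{C_3}^{\perp}}\le \mathcal{L}/\norm{q_{C_3}}$ cancels the factor $\norm{q_{C_3}}^{d-1}$ from the polar volume element, and $\norm{q_{C_3}}\le\operatorname{Const}\norm{p_{C_1,C_2}^I}$ couples the outer radial integral to the middle momentum. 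The remaining radial integrals in $\norm{p_{C_3}}$, $\norm{p_{C_1,C_2}^I}$ and $\norm{q_{C_3}}$ should then be of the form $\int^\infty \rho^{-1-\epsilon}\log\rho\,\diff\rho$, with some $\epsilon>0$ coming from the extra negative powers in the orthogonal bounds (for $d\ge2$). The sets \eqref{def:Axsmin}, \eqref{def:Asmins1} and \eqref{def:As1unzd} are handled in the same way, with the roles of $C_1$ and $C_3$ and of $\hat q_{C_3}$ and $\hat p_{C_1}$ exchanged. For \eqref{def:AtnzdtezdV2} there are no orthogonal constraints on the middle pair, but $\norm{q_{C_2,C_3}^I}\le1$ and $\norm{p_{C_2,C_3}^I}\le\operatorname{Const}/\norm{q_{C_2,C_3}^I}$ give $\int_{\norm{q}\le1}\norm{q}^{-d}\diff q$-type behaviour, which is log divergent. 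I expect that finiteness here will instead come from the energy/Kepler constraint tying $p_{C_2,C_3}^I$ to $\norm{q_D^I}^{-\alpha/2}$ together with $\norm{q_D^I}$ small, and I will fall back on the analogous computation in \cite[Section 5.3]{quaschner2023non}.

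The main obstacle is checking that the powers of the large variables really beat the polynomial volume growth. This is hardest in \eqref{def:Aunezx} and \eqref{def:As1unzd}, where the orthogonal momentum bound contains the term $\max\{\cdot,1/\norm{q_{C_3}}\}$ (resp. $\max\{\cdot,1\}$), so the integrand is only weakly decaying, and in the quasi-collision set \eqref{def:AtnzdtezdV2}. My first attempt would be to split each $\max$ into its two regimes. In the regime where $1/\norm{q_{C_3}}$ dominates, I would use $\norm{q_{C_3}}\le\operatorname{Const}\norm{p_{C_1,C_2}^I}$ and the factor $\norm{p_{C_1,C_2}^I}^{-(d-1)}$ from the orthogonal position bound to gain decay. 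In the regime of the first term, the factor $\norm{q_{C_1,C_2}^I}^{-\alpha(d-1)}$ is integrated against the volume $r^{d(1-\alpha/2)}\,r^{d-1}\diff r$ from the $D$-cluster, which converges near $0$ because $\alpha<2$. The logarithmic energy width only contributes $\log$ factors and is absorbed by any power decay.
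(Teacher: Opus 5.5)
Your reduction to the individual pieces, the symplectic Jacobi coordinates, splitting off $(q_{\{1,2,3,4\}},p_{\{1,2,3,4\}})$ and the cylinder bounds for orthogonal components all match the paper. The gap is in where convergence in the large variables is supposed to come from. Take \eqref{def:Axsmin}, the case the paper computes. The bound $\norm{p_{C_3}^{\perp\hat q_{C_3}}}\le\operatorname{Const}\mathcal{L}/\norm{q_{C_3}}$ gives a factor $\norm{q_{C_3}}^{-(d-1)}$, and this is exactly cancelled by the polar Jacobian $\norm{q_{C_3}}^{d-1}$ of $q_{C_3}$. Once the energy window has been spent on $p_D^I$, the parallel component of $p_{C_3}$ is constrained only by $\norm{p}^2\le\operatorname{Const}\norm{q_D^I}^{-\alpha}$. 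So the integrand in $\norm{p_{C_3}}$ is just $\log\norm{p_{C_3}}$, with no decay at all. In $d=2$ the same happens for $\norm{p_{C_1,C_2}^I}$: after integrating $\norm{q_{C_1,C_2}^I}\le\operatorname{Const}\norm{q_{C_3}}$ and $\norm{q_{C_3}}\le\operatorname{Const}\norm{p_{C_1,C_2}^I}$, the integrand is constant. In \eqref{def:AtnzdtezdV2}, where $p_{C_1}$ has no orthogonal constraint, the $(q_{C_1},p_{C_1})$-factor even grows like $\norm{p_{C_1}}$. Hence your claim that these radial integrals have the form $\int^\infty\rho^{-1-\epsilon}\log\rho\,\diff\rho$ is false. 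If $\norm{q_D^I}$ is cut off only by $\norm{q_{C_1,C_2}^I}$ or by a constant, your bound is infinite.

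The missing idea is that $\norm{p}^2\le\operatorname{Const}\norm{q_D^I}^{-\alpha}$ caps every momentum, in particular $\norm{p_{C_3}},\norm{p_{C_1,C_2}^I}\le\operatorname{Const}\norm{q_D^I}^{-\alpha/2}$. The paper integrates $r_D=\norm{q_D^I}$ last. Each non-decaying outer integral then contributes at most a factor $\norm{q_D^I}^{-\alpha/2}$ up to logarithms, and convergence is decided solely by the final $r_D$-integral. For this you need the energy-shell volume $\operatorname{Const}\log\norm{p_{C_3}}\,\norm{q_D^I}^{-(d-2)\alpha/2}$ for $p_D^I$, which gives the exponent $(d-1)(1-\frac{\alpha}{2})-\frac{\alpha}{2}=d-1-\frac{d\alpha}{2}>-1$. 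The ball volume $\norm{q_D^I}^{-d\alpha/2}$ you write under ``more precisely'' loses an extra factor $\norm{q_D^I}^{-\alpha}$. In $d=2$ the exponent then becomes $1-2\alpha$, which diverges for $\alpha\ge 1$, the Newtonian case included. If you prefer to integrate $(q_D^I,p_D^I)$ first, you must cut off at $\norm{q_D^I}\le\operatorname{Const}\max\{\norm{p_{C_3}},\norm{p_{C_1,C_2}^I}\}^{-2/\alpha}$, not at $\norm{q_{C_1,C_2}^I}$. The decay in the outer momenta then comes from the $D$-cluster, not from the orthogonal bounds. You glimpse this coupling for \eqref{def:AtnzdtezdV2}, but it is the mechanism for every piece.
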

	
	\begin{proof}
		\quad \\
		We prove that the volume of each set whose union $A_{\delta, \mathcal{L}, \mu_q, \mu_p}$ is, is finite. The strategy of the proof is always similar, so we carry out the calculation only for the \enquote{simple} case of $A_{[x^{(k)}, u_{ch}], \delta, \mathcal{L}, \mu_q, \mu_p}$. We will use Fubini's theorem and integrate over the variables in a suitable order.
		\begin{itemize}
			\item The first variable over which we integrate is $p_{D}^{I}$. Note that only the norm of this variable matters, so we can go to polar coordinates. As all other variables are fixed, we can use the energy condition and have to integrate this variable over a hollow $d$-dimensional sphere. The diameter of the shell is roughly $\operatorname{Const} \log(\norm{p_{C_3}})$ and the radius of the sphere is approximately $\norm{q_{C_1, C_2}^{I}}^{-\frac{\alpha}{2}}$, which yields a volume of roughly $\operatorname{Const} \log(\norm{p_{C_3}}) \norm{q_D^{I}}^{-\frac{\alpha}{2} (d-2)}$.
			\item Next, we can integrate the center of mass coordinates, which only yields a multiplicative constant.
			\item Now we go to polar coordinates for all quantities, as the conditions only depend on the norms of the coordinates (we drop out the energy condition now, which will just increase the integral). If we impose a restriction on the orthogonal component of one variable, the corresponding integration region is not a sphere but rather a cylinder with the height as the norm of the variable and for the other directions we get the bound on the orthogonal component (or the norm of the quantity, whatever is smaller). We do all of these estimates, write $r_x$ for the norm of $q_x$ and $R_x$ for the norm of $p_x$ and obtain the following estimate:
		\end{itemize}
		\begin{align*}
			&\lambda^{8d}(A_{[x^{(1)}, s_{\min}], \delta, \kappa_L, \mu_q, \mu_p}) \leq \\
			& \leq \operatorname{Const} \int_{0}^{\operatorname{Const}} \diff r_{D}^{I} \int_{1}^{\infty} \diff r_{C_1, C_2} \int_{1}^{\infty} \diff r_{C_3} \int_{1}^{\infty} \diff R_{C_1, C_2} \int_{1}^{\infty} \diff R_{C_3} \\ & \quad \ind_{\left\{r_{C_1,C_2} \leq \operatorname{Const} r_{C_3}, R_{C_1, C_2} \leq \operatorname{Const} (r_{D}^{I})^{-\frac{\alpha}{2}}, R_{C_3} \leq \operatorname{Const} (r_{D}^{I})^{-\frac{\alpha}{2}}, 1 \leq r_{C_3} \leq \operatorname{Const} R_{C_1, C_2} \right\}} \\ & (r_D^{I})^{-(d-2) \frac{\alpha}{2}} \min \left\{ R_{C_1, C_2}, \frac{1}{r_{C_3}} \right\}^{d-1} \min\left\{r_{C_1,C_2}, \frac{1}{R_{C_1, C_2}} \right\}^{d-1} \min\left\{ R_{C_3}, \frac{1}{r_{C_3}} \right\}^{d-1} \\ & \quad \log(R_{C_3}) (r_D^{I})^{d-1} r_{C_3}^{d-1}.
		\end{align*}
		Now we resolve the minima, using that the fractions are always smaller than $1$, whereas the other variables are larger than $1$. We substitute $r_D := r_D^{I}$ to avoid double superscripts.
			\begin{align*}
			&\leq \operatorname{Const} \int_{0}^{\operatorname{Const}} \diff r_{D} \int_{1}^{\infty} \diff r_{C_1, C_2} \int_{1}^{\infty} \diff r_{C_3} \int_{1}^{\infty} \diff R_{C_1, C_2} \int_{1}^{\infty} \diff R_{C_3} \\ & \quad \ind_{\left\{ r_{C_1,C_2} \leq \operatorname{Const} r_{C_3}, R_{C_1, C_2} \leq \operatorname{Const} r_{D}^{-\frac{\alpha}{2}}, R_{C_3} \leq \operatorname{Const} r_{D}^{-\frac{\alpha}{2}}, r_{C_3} \leq \operatorname{Const} R_{C_1, C_2} \right\}} \\ & \quad r_D^{-(d-2) \frac{\alpha}{2} + (d-1)} \frac{1}{r_{C_3}^{d-1}} \frac{1}{R_{C_1, C_2}^{d-1}} \log(R_{C_3}).
		\end{align*}
		Now we integrate all variables except $r_D$, using the bounds for the domain of integration from the indicator function:
		\begin{align*}
			\lambda^{8d}(A_{[x^{(1)}, s_{\min}], \delta, \kappa_L, \mu_q, \mu_p, E}) \leq &\operatorname{Const} \int_{0}^{\operatorname{Const}} \diff r_{D} \; E   \kappa_L^{3(d-1)}  r_D^{(d-1) \left(1-\frac{\alpha}{2} \right) - \frac{\alpha}{2}} \log(r_D)^2 < \infty,
		\end{align*}
		as the exponent is strictly larger than $-1$ and the logarithm can be compensated.
	\end{proof}
	
	\newpage
	\bibliographystyle{plain}
	\bibliography{Bibliography}

\end{document}